\def\draft{0}
\newcommand{\vnote}[1]{\ifnum\draft=1 {\color{green!70!black} [\textbf{SV:} #1]}\fi}
\newcommand{\mnote}[1]{\ifnum\draft=1 {\color{red} [\textbf{MS:} #1]}\fi}
\newcommand{\snote}[1]{\ifnum\draft=1 {\color{teal} [\textbf{NS:} #1]}\fi}
\newcommand{\hnote}[1]{\ifnum\draft=1 {\color{magenta} [\textbf{MH:} #1]}\fi}
\newcommand{\bnote}[1]{\ifnum\draft=1 {\color{orange} [\textbf{JB:} #1]}\fi}
\newcommand{\pnote}[1]{\ifnum\draft=1 {\color{brown} [\textbf{TMP:} #1]}\fi}
\newcommand{\gbmaxf}{{(\gamma,\beta)\textrm{-}\maxf}}
\newcommand{\Exp}{\mathop{\mathbb{E}}}
\newcommand{\cA}{\mathcal{A}}
\newcommand{\cD}{\mathcal{D}}
\newcommand{\N}{\mathbb{N}}
\newcommand{\R}{\mathbb{R}}
\newcommand{\bern}{\mathsf{Bern}}
\newcommand{\m}[1][]{\textsf{Max-#1}}
\newcommand{\mcsp}{\textsf{Max-CSP}}
\newcommand{\sym}{\mathsf{S}}
\newcommand{\bias}{\textsf{bias}}
\newcommand{\val}{\textsf{val}}
\newcommand{\supp}{\textsf{supp}}
\newcommand{\maxf}{\textsf{Max-CSP}(f)}
\newcommand{\dmaxf}{\textsf{-Max-CSP}(f)}
\newcommand{\twoand}{\textsf{$2$AND}}
\newcommand{\maxtwoand}{\textsf{Max-$2$AND}}
\newcommand{\threeand}{\textsf{$3$AND}}
\newcommand{\maxthreeand}{\textsf{Max-$3$AND}}
\newcommand{\kand}{k\textsf{AND}}
\newcommand{\maxkand}{\textsf{Max-}\kand}
\newcommand{\NP}{\mathbf{NP}}
\newcommand{\veca}{\mathbf{a}}
\newcommand{\vecb}{\mathbf{b}}
\newcommand{\vecj}{\mathbf{j}}
\newcommand{\vecx}{\mathbf{x}}
\newcommand{\vecy}{\mathbf{y}}
\newcommand{\vecsigma}{\boldsymbol{\sigma}}
\newcommand{\vecmu}{\boldsymbol{\mu}}
\newcommand{\vecpi}{\boldsymbol{\pi}}
\renewcommand{\tilde}{\widetilde}
\renewcommand{\hat}{\widehat}
\newcommand{\wt}{\operatorname{wt}}
\newcommand{\rroot}{\operatorname{root}_{\R}}
\newcommand\eqdef{\stackrel{\mathrm{\small def}}{=}}
\newcommand{\kz}{\{0\}\cup[k]}
\newcommand{\diff}{\mathsf{diff}}
\newcommand{\vecopt}{\mathbf{opt}}
\newcommand{\vecmaj}{\mathbf{maj}}
\newcommand{\Th}{\mathsf{Th}}
\newcommand{\Ex}{\mathsf{Ex}}
\newcommand{\PSPACE}{\mathsf{PSPACE}}
\newcommand{\dsym}{\mathcal{D}^{\mathrm{sym}}}
\newcommand{\Unif}{\mathsf{Unif}}
\numberwithin{equation}{section}
\declaretheoremstyle[bodyfont=\it,qed=\qedsymbol]{noproofstyle}
\declaretheorem[numberlike=equation]{observation}
\declaretheorem[name=Observation,numbered=no]{observation*}
\declaretheorem[numberlike=equation]{theorem}
\declaretheorem[name=Theorem,numbered=no]{theorem*}
\declaretheorem[numberlike=equation]{lemma}
\declaretheorem[name=Lemma,numbered=no]{lemma*}
\declaretheorem[numberlike=equation]{corollary}
\declaretheorem[name=Corollary,numbered=no]{corollary*}
\declaretheorem[numberlike=equation]{proposition}
\declaretheorem[name=Proposition,numbered=no]{proposition*}
\declaretheorem[name=Claim,numbered=no]{claim*}
\declaretheorem[name=Conjecture,numbered=no]{conjecture*}
\declaretheorem[name=Question,numbered=no]{question*}
\declaretheoremstyle[bodyfont=\it]{defstyle} 
\declaretheorem[numberlike=equation,style=defstyle]{definition}
\declaretheorem[unnumbered,name=Definition,style=defstyle]{definition*}
\declaretheorem[unnumbered,name=Example,style=defstyle]{example*}
\declaretheorem[unnumbered,name=Notation=defstyle]{notation*}
\declaretheorem[unnumbered,name=Construction,style=defstyle]{construction*}
\declaretheoremstyle[]{rmkstyle} 
\newtheorem*{remark}{Remark}
\title{On sketching approximations for symmetric Boolean CSPs}
\author{Joanna Boyland\thanks{Harvard College, Harvard University, Cambridge, MA, USA. Emails: \texttt{\{jboyland,michaelhwang1, tarunmuraliprasad,noahsinger\}@college.harvard.edu}.}
\and Michael Hwang\footnotemark[1]
\and Tarun Prasad\footnotemark[1]
\and Noah Singer\footnotemark[1] \thanks{Supported by the Harvard College Research Program.}
\and Santhoshini Velusamy\thanks{School of Engineering and Applied Sciences, Harvard University, Cambridge, MA, USA. Supported in part by a Google Ph.D. Fellowship, a Simons Investigator Award to Madhu Sudan, and NSF Awards CCF 1715187 and CCF 2152413. Email: \texttt{svelusamy@g.harvard.edu}.}
}
\begin{document}

\maketitle

\begin{abstract}
    A Boolean maximum constraint satisfaction problem, $\maxf$, is specified by a predicate $f:\{-1,1\}^k\to\{0,1\}$. An $n$-variable instance of $\maxf$ consists of a list of constraints, each of which applies $f$ to $k$ distinct literals drawn from the $n$ variables. For $k=2$, Chou, Golovnev, and Velusamy \cite{CGV20} obtained explicit ratios characterizing the $\sqrt n$-space \emph{streaming} approximability of every predicate. For $k \geq 3$, Chou, Golovnev, Sudan, and Velusamy \cite{CGSV21-boolean} proved a general dichotomy theorem for $\sqrt n$-space \emph{sketching} algorithms: For every $f$, there exists $\alpha(f)\in (0,1]$ such that for every $\epsilon>0$, $\maxf$ is $(\alpha(f)-\epsilon)$-approximable by an $O(\log n)$-space \emph{linear} sketching algorithm, but $(\alpha(f)+\epsilon)$-approximation sketching algorithms require $\Omega(\sqrt{n})$ space.
     
    In this work, we give closed-form expressions for the sketching approximation ratios of multiple families of \emph{symmetric} Boolean functions. Letting $\alpha'_k = 2^{-(k-1)} (1-k^{-2})^{(k-1)/2}$, we show that for odd $k \geq 3$, $\alpha(\kand) = \alpha'_k$, and for even $k \geq 2$, $\alpha(\kand) = 2\alpha'_{k+1}$. Thus, for every $k$, $\kand$ can be $(2-o(1))2^{-k}$-approximated by $O(\log n)$-space \emph{sketching} algorithms; we contrast this with a lower bound of Chou, Golovnev, Sudan, Velingker, and Velusamy~\cite{CGS+22} implying that \emph{streaming} $(2+\epsilon)\cdot2^{-k}$-approximations require $\Omega(n)$ space! We also resolve the ratio for the ``at-least-$(k-1)$-$1$'s'' function for all even $k$; the ``exactly-$\frac{k+1}2$-$1$'s'' function for odd $k \in \{3,\ldots,51\}$; and fifteen other functions. We stress here that for general $f$, the dichotomy theorem in \cite{CGSV21-boolean} only implies that $\alpha(f)$ can be computed to arbitrary precision in $\PSPACE$, and thus closed-form expressions need not have existed \emph{a priori}. Our analyses involve identifying and exploiting structural ``saddle-point'' properties of this dichotomy.
    
    Separately, for all \emph{threshold} functions, we give optimal ``bias-based'' approximation algorithms generalizing \cite{CGV20} while simplifying \cite{CGSV21-boolean}. Finally, we investigate the $\sqrt n$-space \emph{streaming} lower bounds in \cite{CGSV21-boolean}, and show that they are \emph{incomplete} for $\threeand$, i.e., they fail to rule out $(\alpha(\threeand)-\epsilon)$-approximations in $o(\sqrt n)$ space.
\end{abstract}

\newpage

\section{Introduction}

In this work, we consider the \emph{streaming approximability} of various \emph{Boolean constraint satisfaction problems}, and we begin by defining these terms. See \cite[\S1.1-2]{CGSV21-boolean} for more details on the definitions.

\subsection{Setup: The streaming approximability of Boolean CSPs}\label{sec:setup}

\subsubsection{Boolean CSPs}\label{sec:boolean-csps}
Let $f : \{-1,1\}^k \to \{0,1\}$ be a Boolean function. In an $n$-variable instance of the problem $\mcsp(f)$, a \emph{constraint} is a pair $C = (\vecb,\vecj)$, where $\vecj = (j_1,\ldots,j_k) \in [n]^k$ is a $k$-tuple of distinct \emph{indices}, and $\vecb = (b_1,\ldots,b_k) \in \{-1,1\}^k$ is a \emph{negation pattern}.

For Boolean vectors $\veca = (a_1,\ldots,a_n),\vecb=(b_1,\ldots,b_n)\in\{-1,1\}^n$, let $\veca \odot \vecb$ denote their \emph{coordinate-wise product} $(a_1b_1,\ldots,a_nb_n)$. An \emph{assignment} $\vecsigma = (\sigma_1,\ldots,\sigma_n) \in \{-1,1\}^n$ satisfies $C$ iff $f(\vecb \odot \vecsigma|_{\vecj}) = 1$, where $\vecsigma|_{\vecj}$ is the $k$-tuple $(\sigma_{j_1},\ldots,\sigma_{j_k})$ (i.e., $\vecsigma$ satisfies $C$ iff $f(b_1\sigma_{j_1},\ldots,b_k\sigma_{j_k})=1$). An \emph{instance} $\Psi$ of $\mcsp(f)$ consists of constraints $C_1,\dots,C_m$ with non-negative weights $w_1,\ldots,w_m$ where $C_i=(\vecj(i),\vecb(i))$ and $w_i \in \R$ for each $i\in[m]$; the \emph{value} $\val_\Psi(\vecsigma)$ of an assignment $\vecsigma$ to $\Psi$ is the (weighted) fraction of constraints in $\Psi$ satisfied by $\vecsigma$, i.e., $\val_\Psi(\vecsigma)\eqdef\tfrac{1}{W}\sum_{i\in[m]}w_i \cdot f(\vecb(i)\odot\vecsigma|_{\vecj(i)})$, where $W = \sum_{i=1}^m w_i$. The \emph{value} $\val_\Psi$ of an instance $\Psi$ is the maximum value of any assignment $\vecsigma \in \{-1,1\}^n$, i.e., $\val_\Psi\eqdef\max_{\vecsigma\in\{-1,1\}^n}\val_\Psi(\vecsigma)$.

\subsubsection{Approximations to CSPs}
For $\alpha \in [0,1]$, we consider the problem of \emph{$\alpha$-approximating $\mcsp(f)$}. In this problem, the goal of an algorithm $\cA$ is to, on input an instance $\Psi$, output an estimate $\cA(\Psi)$ such that with probability at least $\frac23$, $\alpha \cdot \val_\Psi \leq \cA(\Psi) \leq \val_\Psi$. For $
\beta < \gamma \in [0,1]$, we also consider the closely related $(\beta,\gamma)\dmaxf$. In this problem, the input instance $\Psi$ is promised to either satisfy $\val_\Psi \leq \beta$ or $\val_\Psi \geq \gamma$, and the goal is to decide which is the case with probability at least $\frac23$.

\subsubsection{Streaming and sketching algorithms for CSPs}
For various Boolean functions $f$, we consider algorithms which attempt to approximate $\mcsp(f)$ instances in the \emph{(single-pass, insertion-only) space-$s$ streaming setting}. Such algorithms can only use space $s$ (which is ideally small, such as $O(\log n)$, where $n$ is the number of variables in an input instance), and, when given as input a CSP instance $\Psi$, can only read the list of constraints in a single, left-to-right pass.

We also consider a (seemingly) weak class of streaming algorithms called \emph{sketching algorithms}, where the algorithm's output is determined by an length-$s$ string called a ``sketch'' produced from the input stream, and the sketch itself has the property that the sketch of the concatenation of two streams can be computed from the sketches of the two component streams. (See \cite[\S3.3]{CGSV21-boolean} for a formal definition.) A special case of sketching algorithms are \emph{linear sketches}, where each sketch (i.e., element of $\{0,1\}^s$) encodes an element of a vector space and we perform vector addition to combine two sketches.

\subsection{Prior work and motivations}

\subsubsection{Prior results on streaming and sketching $\mcsp(f)$}

We first give a brief review of what is already known about the streaming and sketching approximability of $\maxf$. For $f : \{-1,1\}^k \to \{0,1\}$, let $\rho(f) \eqdef \Pr_{\vecb\sim\Unif(\{-1,1\}^k)}[f(\vecb)=1]$, where $\Unif(\{-1,1\}^k)$ denotes the uniform distribution on $\{-1,1\}^k$. For every $f$, the $\maxf$ problem has a trivial $\rho(f)$-approximation algorithm given by simply outputting $\rho(f)$ since $\mathbb{E}_{\veca\sim \Unif(\{-1,1\}^n)}[\val_\Psi(\veca)] = \Pr_{\vecb\sim\Unif(\{-1,1\}^k)}[f(\vecb)=1] = \rho(f)$. We refer to a function $f$ as \emph{approximation-resistant} for some class of algorithms (e.g., streaming or sketching algorithms with some space bound) if it cannot be $(\rho(f)+\epsilon)$-approximated for any constant $\epsilon > 0$. Otherwise, we refer to $f$ as \emph{approximable} for the class of algorithms.

The first two CSPs whose $o(\sqrt n)$-space streaming approximabilities were resolved were \m[$2$XOR] and \m[$2$AND]. Kapralov, Khanna, and Sudan~\cite{KKS15} showed that \m[$2$XOR] is approximation-resistant to $o(\sqrt n)$-space streaming algorithms. Later, Chou, Golovnev, and Velusamy~\cite{CGV20}, building on earlier work of Guruswami, Velusamy, and Velingker~\cite{GVV17}, gave an $O(\log n)$-space linear sketching algorithm which $(\frac49-\epsilon)$-approximates \m[$2$AND] for every $\epsilon > 0$ and showed that $(\frac49+\epsilon)$-approximations require $\Omega(\sqrt n)$ space, even for streaming algorithms.

In two recent works \cite{CGSV21-boolean,CGSV21-finite}, Chou, Golovnev, Sudan, and Velusamy proved so-called \emph{dichotomy theorems} for sketching CSPs. In \cite{CGSV21-boolean}, they prove the dichotomy for CSPs over the Boolean alphabet with negations of variables (i.e., the setup we described in \cref{sec:boolean-csps}). In \cite{CGSV21-finite}, they extend it to the more general case of CSPs over finite alphabets.\footnote{More precisely, \cite{CGSV21-boolean} and \cite{CGSV21-finite} both consider the more general case of CSPs defined by \emph{families} of functions of a specific arity. We do not need this generality for the purposes of our paper, and therefore omit it.}

\cite{CGSV21-boolean} is most relevant for our purposes, as it concerns Boolean CSPs. For a fixed constraint function $f : \{-1,1\}^k \to \{0,1\}$, the main result in \cite{CGSV21-boolean} is the following \emph{dichotomy theorem}: For any $0 \leq \gamma < \beta \leq 1$, either

\begin{enumerate}
    \item $(\beta,\gamma)\dmaxf$ has an $O(\log n)$-space linear sketching algorithm, \emph{or}
    \item For all $\epsilon > 0$, sketching algorithms for $(\beta+\epsilon,\gamma-\epsilon)\dmaxf$ require $\Omega(\sqrt n)$ space.
\end{enumerate}

Distinguishing whether (1) or (2) applies is equivalent to deciding whether two convex polytopes (which depend on $f,\gamma,\beta$) intersect. We omit a technical statement of this criterion, and instead focus on the following corollary: there exists an $\alpha(f) \in [0,1]$ such that $\maxf$ can be $(\alpha(f)-\epsilon)$-approximated by $O(\log n)$-space linear sketches, but not $(\alpha(f)+\epsilon)$-approximated by $o(\sqrt n)$-space sketches, for all $\epsilon > 0$; furthermore, $\alpha(f)$ equals the solution to an explicit minimization problem, which we describe in \cref{sec:cgsv-framework} (in the special case where $f$ is symmetric).

\emph{A priori}, it may be possible to achieve an $(\alpha(f)+\epsilon)$-approximation with a $o(\sqrt n)$-space \emph{streaming} algorithm. But \cite{CGSV21-boolean} also extends the lower bound (case 2 of the dichotomy) to cover streaming algorithms when special objects called \emph{padded one-wise pairs} exist. See \cref{sec:cgsv-streaming-framework} below for a definition (again, specialized for symmetric functions). The padded one-wise pair criterion is sufficient to recover all previous streaming approximability results for Boolean functions (i.e., \cite{KKS15,CGV20}), and prove several new ones. In particular, \cite{CGSV21-boolean} proves that if $f : \{-1,1\}^k \to \{0,1\}$ has the property that there exists $\cD \in \Delta(f^{-1}(1))$ such that $\Exp_{\vecb\sim\cD}[b_i] = 0$ for all $i \in [k]$ (where $[k] \eqdef \{1,\ldots,k\}$), then $\mcsp(f)$ is streaming approximation-resistant. For symmetric Boolean CSPs, they also prove the converse, and thus give a complete characterization for approximation resistance \cite[Lemma 2.14]{CGSV21-boolean}. However, besides \m[$2$AND], \cite{CGSV21-boolean} does not explicitly analyze the approximation ratio of any CSP that is ``approximable'', i.e., not approximation resistant.

\subsubsection{Questions from previous work}\label{sec:prev-work-qs}

In this work, we address several major questions about streaming approximations for Boolean CSPs which Chou, Golovnev, Sudan, and Velusamy~\cite{CGSV21-boolean} leave unanswered:

\begin{enumerate}
    \item Can the framework in \cite{CGSV21-boolean} be used to find closed-form sketching approximability ratios $\alpha(f)$ for approximable problems $\maxf$ beyond $\m[\twoand]$?
    \item As observed in \cite[\S1.3]{CGS+22}, \cite{CGSV21-boolean} implies the following ``trivial upper bound'' on streaming approximability: for all $f$, $\alpha(f) \leq 2\rho(f)$. How tight is this upper bound?
    \item Does the streaming lower bound (the ``padded one-wise pair'' criterion) in \cite{CGSV21-boolean} suffice to resolve the streaming approximability of every function?
    \item The optimal $(\alpha(f)-\epsilon)$-approximation algorithm for $\maxf$ in \cite{CGSV21-boolean} requires running a ``grid'' of $O(1/\epsilon^2)$ distinguishers for $(\beta,\gamma)\dmaxf$ distinguishing problems in parallel. Can we obtain simpler optimal sketching approximations?
\end{enumerate}

\subsection{Our results}

We study the questions in \cref{sec:prev-work-qs} for \emph{symmetric} Boolean CSPs. Symmetric Boolean functions are those functions that depend only on the Hamming weight of the input, i.e., number of $1$'s in the input.\footnote{Note that the inputs are in $\{-1,1\}^k$; we define the Hamming weight as the number of $1$'s, and not $-1$'s (which is arguably more ``natural'' under the mapping $b \in \{0,1\} \mapsto (-1)^b \in \{-1,1\}$), for consistency with \cite{CGSV21-boolean}.} For a set $S \subseteq [k]$, we define $f_{S,k} : \{-1,1\}^k \to \{0,1\}$ as the indicator function for the set $\{\vecb \in \{-1,1\}^k : \wt(\vecb) \in S\}$ (where $\wt(\vecb)$ denotes the Hamming weight of $\vecb$). That is, $f_{S,k}(\vecx)=1$ if and only if $\wt(\vecx)\in S$. Some well-studied examples of functions in this class include $\kand = f_{\{k\},k}$, the \emph{threshold functions} $\Th^i_k = f_{\{i,i+1,\ldots,k\},k}$, and ``exact weight'' functions $\Ex^i_k = f_{\{i\},k}$.\footnote{By \cite[Lemma 2.14]{CGSV21-boolean}, if $S$ contains elements $s \leq \frac{k}2$ and $t \geq \frac{k}2$, not necessarily distinct, then $f_{S,k}$ supports one-wise independence and is therefore approximation-resistant (even to streaming algorithms). Thus, we focus on the case where all elements of $S$ are either larger than or smaller than $\frac{k}2$. Moreover, note that if $S' = \{k-s : s \in S\}$, every instance of $\mcsp(f_{S,k})$ can be viewed as an instance of $\mcsp(f_{S',k})$ with the same value, since for any constraint $C=(\vecb,\vecj)$ and assignment $\vecsigma \in \{-1,1\}^n$, we have $f_{S,k}(\vecb\odot \vecsigma\vert_\vecj)$ = $f_{S',k}(\vecb\odot (-\vecsigma)\vert_\vecj)$. Thus, we further narrow our focus to the case where every element of $S$ is larger than $\frac{k}2$.}

\subsubsection{The sketching approximability of ${\m}\kand$}\label{sec:kand-overview}
Chou, Golovnev, and Velusamy~\cite{CGV20} showed that $\alpha(\twoand) = \frac49$ (and $(\frac49+\epsilon)$-approximation can be ruled out even for $o(\sqrt n)$-space streaming algorithms).
For $k \geq 3$, while Chou, Golovnev, Velusamy, and Sudan~\cite{CGSV21-boolean} give optimal sketching approximation algorithms for $\maxkand$, they do not explicitly analyze the approximation ratio $\alpha (\kand)$, and show only that it lies between $2^{-k}$ and $2^{-(k-1)}$.

In this paper, we analyze the dichotomy theorem in \cite{CGSV21-boolean}, and obtain a closed-form expression for the sketching approximability of $\maxkand$ for every $k$. For odd $k \geq 3$, define the constant
\begin{equation}\label{eqn:alpha'_k}
    \alpha'_k \eqdef \left(\frac{(k-1)(k+1)}{4k^2}\right)^{(k-1)/2} = 2^{-(k-1)} \cdot \left(1-\frac1{k^2}\right)^{(k-1)/2}.
\end{equation}
In \cref{sec:kand-analysis}, we prove the following:

\begin{theorem}\label{thm:kand-approximability}
For odd $k \geq 3$, $\alpha(\kand) = \alpha'_k$, and for even $k \geq 2$, $\alpha(\kand) = 2\alpha'_{k+1}$.
\end{theorem}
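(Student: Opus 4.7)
The plan is to apply the CGSV symmetric-case minimization formula for $\alpha(f)$, exploit the product structure of $\kand$ to identify an explicit saddle point of the induced min--max over pairs of symmetric distributions on $\{-1,1\}^k$, and then optimize a single real parameter in closed form. The even/odd dichotomy in the statement arises naturally from whether a canonical Hamming weight lies at an integer or at a half-integer.

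I begin by recalling from \cref{sec:cgsv-framework} that $\alpha(\kand)$ equals the infimum, over pairs $(\cD_Y,\cD_N)$ of distributions on $\{-1,1\}^k$ with matching one-wise marginals, of a ratio of $\kand$-moments. By $\sym_k$-invariance of $\kand$, averaging any feasible pair over the symmetric group preserves the constraint and does not worsen the ratio, so I may restrict to symmetric $\cD_Y,\cD_N$, which are specified by weight distributions $q^Y,q^N$ on $\{0,1,\dots,k\}$ with a common expected weight $\bar w = k(1+\mu)/2$ for a real parameter $\mu\in[-1,1]$. For the YES side, $\Exp_{\cD_Y}[\kand] = q^Y_k$ is linear in $q^Y$, so LP extremality gives an optimizer supported on $\{0,k\}$ with $q^Y_k = (1+\mu)/2$. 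For the NO side, the value against a planted assignment of bias $t\in[-1,1]$ is the linear functional $\sum_w q^N_w\,g_t(w)$, where $g_t(w) = ((1+t)/2)^w((1-t)/2)^{k-w}$ is log-linear, hence strictly convex, in $w$. This convexity forces the minimizer $\cD_N^\star$ (against the worst-case $t$) to concentrate on at most two consecutive Hamming weights straddling $\bar w$, and a first-order condition in $t$ pins down the worst-case bias $t^\star$, so that $(q^{Y,\star},q^{N,\star},t^\star)$ forms a joint saddle point of the min--max.

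With the support structure fixed, the residual problem collapses to a scalar optimization in $\mu$. For odd $k$, $\bar w = (k+1)/2$ is an integer, so $\cD_N^\star$ may be taken as the point mass at that weight; the first-order condition yields $\mu^\star = t^\star = 1/k$, and a direct substitution gives
\[
\alpha(\kand) \;=\; \frac{g_{1/k}\bigl(\tfrac{k+1}{2}\bigr)}{(1+1/k)/2} \;=\; \left(\tfrac{(k+1)(k-1)}{4k^2}\right)^{(k-1)/2} \;=\; \alpha'_k.
\]
For even $k$, the value $(k+1)/2$ is a half-integer, so no point mass is feasible at the naive optimum; instead, $\cD_N^\star$ must be split between the adjacent weights $k/2$ and $(k+2)/2$. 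After accounting for the single ``inert'' coordinate that this split effectively introduces, the optimization is equivalent to the odd-$(k{+}1)$ calculation and yields $\alpha(\kand) = 2\alpha'_{k+1}$. In particular, $\alpha(\twoand) = 2\alpha'_3 = 4/9$, recovering the value of \cite{CGV20}.

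The main obstacle is verifying the saddle-point claim in the second paragraph: certifying that the candidate low-support pair $(q^{Y,\star},q^{N,\star})$ is in fact a joint min--max optimum and not merely a local one. This requires a complementary-slackness argument against the marginal-matching constraint to rule out higher-support perturbations of $\cD_Y$ or $\cD_N$, together with a check that the interior/boundary behavior of the inner $\max_t$ is consistent with the claimed $t^\star$. In the even-$k$ case, one must additionally verify that the reduction to the odd-$(k{+}1)$ optimization is tight, i.e., that no alternative two-point support for $\cD_N^\star$ beats $\{k/2,(k+2)/2\}$.
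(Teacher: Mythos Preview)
Your convexity observation is correct and is a genuinely different structural insight from the paper's proof: since $w\mapsto p^w(1-p)^{k-w}$ is log-linear (hence convex) in $w$ for every fixed $p$, the distribution on $\{0,\ldots,k\}$ with given mean $\bar w$ that minimizes $\lambda_{\{k\}}(\cdot,p)$ \emph{for every $p$ simultaneously} is the one concentrated on $\{\lfloor\bar w\rfloor,\lceil\bar w\rceil\}$. This immediately reduces the lower bound to a one-parameter optimization in $\mu$, bypassing the paper's ``max-min'' device entirely. The paper instead guesses $(\cD_N^*,p^*)$, uses the max-min inequality $\inf_{\cD_N}\sup_p \geq \sup_p\inf_{\cD_N}$, and then verifies the resulting linear-over-linear ratio inequality coordinatewise via \cref{prop:lin-opt} and an induction on $i$ (\cref{lemma:kand-lb}). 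Your route is arguably cleaner in principle; the paper's route avoids ever having to solve the scalar optimization.

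However, there is a genuine gap in your even-$k$ step. You assert that the optimal mean is still $\bar w=(k+1)/2$ (equivalently $\mu^\star=1/k$), merely ``split'' over the two adjacent integers, and that an ``inert coordinate'' reduces this to the odd $k{+}1$ problem. This is false already at $k=2$: with $\mu=1/2$ the concentrated distribution on $\{1,2\}$ is $(\tfrac12,\tfrac12)$, giving $\beta_{\{2\}}=\tfrac12$, $\gamma_{\{2\},2}=\tfrac34$, and ratio $\tfrac23\neq\tfrac49$. The actual optimum is at $\mu^\star=\tfrac15$ (so $\bar w=\tfrac65$, weights $(\tfrac45,\tfrac15)$ on $\{1,2\}$), which is exactly the $\cD_N^*$ the paper constructs in \cref{lemma:kand-ub}; in general the paper's even-$k$ optimizer has $\cD_N^*\langle k/2\rangle=\frac{(k/2+1)^2}{(k/2)^2+(k/2+1)^2}$, visibly not the $50$--$50$ split your heuristic predicts. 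So the ``first-order condition yields $\mu^\star=1/k$'' step is where the argument breaks, and the inert-coordinate reduction as stated does not repair it.

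For odd $k$ your candidate $\mu^\star=1/k$ is correct, but you still owe the verification that the ratio $\beta_{S,k}(\mu)/\gamma_{S,k}(\mu)$, now an explicit piecewise function of $\mu$ thanks to your convexity reduction, is globally minimized there; this is the computation the paper replaces by \cref{lemma:kand-lb}. You acknowledge this as ``the main obstacle,'' and indeed it is: completing it (and correcting the even-$k$ optimizer) is the remaining work.
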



Since $\rho(\kand) = 2^{-k}$, \cref{thm:kand-approximability} also has the following important corollary:

\begin{corollary}\label{cor:kand-asympt}
$\lim_{k \to \infty} \frac{\alpha(\kand)}{2\rho(\kand)} = 1$.
\end{corollary}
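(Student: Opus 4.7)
The plan is to substitute the closed-form expressions from \cref{thm:kand-approximability} together with the identity $\rho(\kand) = 2^{-k}$, and show that the resulting ratio tends to $1$ via elementary estimates. Since the formula for $\alpha(\kand)$ splits based on the parity of $k$, I handle the two parities separately and verify each limit individually.

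For odd $k \geq 3$, \cref{thm:kand-approximability} gives $\alpha(\kand) = \alpha'_k = 2^{-(k-1)}(1-1/k^2)^{(k-1)/2}$, so
\[
\frac{\alpha(\kand)}{2\rho(\kand)} = \frac{2^{-(k-1)}(1-1/k^2)^{(k-1)/2}}{2 \cdot 2^{-k}} = \left(1 - \frac{1}{k^2}\right)^{(k-1)/2}.
\]
For even $k \geq 2$, $\alpha(\kand) = 2\alpha'_{k+1} = 2 \cdot 2^{-k}(1-1/(k+1)^2)^{k/2}$, giving
\[
\frac{\alpha(\kand)}{2\rho(\kand)} = \left(1 - \frac{1}{(k+1)^2}\right)^{k/2}.
\]

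In both cases, the limit is of the form $(1 - 1/m^2)^{\Theta(m)}$ as $m \to \infty$. Using $\ln(1-x) = -x + O(x^2)$, the logarithm behaves as $\Theta(m) \cdot (-1/m^2 + O(1/m^4)) = O(1/m) \to 0$, so the expression tends to $1$. Thus $\lim_{k\to\infty} \alpha(\kand)/(2\rho(\kand)) = 1$ along both parities, and hence for all $k$. I do not anticipate any obstacle here; the corollary is a direct consequence of the closed-form ratios proven in \cref{thm:kand-approximability}, and the only ingredient beyond arithmetic substitution is the standard limit for $(1 - c/k^2)^{k}$.
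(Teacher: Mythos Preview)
Your proposal is correct and matches the paper's approach: the paper presents this corollary as an immediate consequence of \cref{thm:kand-approximability} and the identity $\rho(\kand)=2^{-k}$, without giving any further proof, and your argument simply fills in the routine arithmetic and limit computation that the paper leaves implicit.
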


Recall that \cite{CGSV21-boolean} implies that $\alpha(f) \leq 2\rho(f)$ for all functions $f$. Indeed, Chou, Golovnev, Sudan, Velusamy, and Velingker~\cite{CGS+22} show that any function $f$ cannot be $(2\rho(f)+\epsilon)$-approximated even by $o(n)$-space streaming algorithms. On the other hand, in \cref{sec:thresh-alg-overview} below, we describe simple $O(\log n)$-space sketching algorithms for $\maxkand$ achieving the optimal ratio from \cite{CGSV21-boolean}. Thus, as $k \to \infty$, these algorithms achieve an asymptotically optimal approximation ratio even among $o(n)$-space streaming algorithms!

\subsubsection{The sketching approximability of other symmetric functions}

We also analyze the sketching approximability of a number of other symmetric Boolean functions. Specifically, for the threshold functions $\Th^{k-1}_k$ for even $k$, we show that:

\begin{theorem}\label{thm:k-1-k-approximability}
For even $k \geq 2$, $\alpha(\Th^{k-1}_k) = \frac{k}2\alpha'_{k-1}$.
\end{theorem}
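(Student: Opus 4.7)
The proof will parallel the saddle-point analysis used to establish $\cref{thm:kand-approximability}$ for $\kand$. By the dichotomy theorem of \cite{CGSV21-boolean}, specialized to symmetric functions (as summarized in $\cref{sec:cgsv-framework}$), $\alpha(\Th^{k-1}_k)$ is the value of a concrete min--max program over pairs $(\cD_Y,\cD_N)$ of symmetric distributions on $\{-1,1\}^k$ with matching first moments. The first step is to write out this program explicitly using the fact that $f = \Th^{k-1}_k$ is supported only on the $k$ inputs of Hamming weight $k-1$ and the single input of weight $k$; thus $\cD_Y$ is parameterized by a single mixing ratio between these two ``levels''.

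Next I would guess the saddle point by lifting the known optimizer for $(k-1)\textsf{AND}$. Since $k$ is even, $k-1$ is odd, so $\cref{thm:kand-approximability}$ gives $\alpha((k{-}1)\textsf{AND}) = \alpha'_{k-1}$ with an explicit optimal $\cD_N$ (a biased product distribution on $\{-1,1\}^{k-1}$). The natural candidate for the $\Th^{k-1}_k$ program is the symmetrization obtained by embedding this distribution into $\{-1,1\}^k$ by averaging over the $k$ choices of which coordinate to leave with a uniform $\pm1$ marginal. The expected prefactor $\tfrac{k}{2}$ should emerge combinatorially from this symmetrization, together with the fact that each satisfying weight-$(k-1)$ input exposes exactly one ``free'' coordinate when compared against a $(k-1)\textsf{AND}$ test.

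The core of the argument is then to verify the saddle-point conditions: first, that the guessed $\cD_Y$ (with an explicit ratio between its weight-$(k-1)$ and weight-$k$ components) is extremal against the guessed $\cD_N$, and conversely that $\cD_N$ is extremal against $\cD_Y$. Substituting the resulting critical values into the program and simplifying with the aid of $\cref{eqn:alpha'_k}$ should yield precisely $\tfrac{k}{2}\alpha'_{k-1}$. A matching $(\alpha(f)-\epsilon)$-approximation sketching algorithm is then supplied by the CGSV framework for free, and in fact also by the bias-based threshold algorithm developed later in the paper.

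The main obstacle, exactly as in the $\kand$ case, is identifying the saddle in closed form. Here $f^{-1}(1)$ has two Hamming-weight levels rather than one, so the primal has an extra dimension and the joint stationarity equations couple the mixing ratio inside $\cD_Y$ to the bias parameter inside $\cD_N$. Solving this coupled system cleanly, and verifying that the candidate extremum is a genuine minimum rather than a spurious saddle point, is where the bulk of the technical work should lie; the explicit simplification to $\tfrac{k}{2}\alpha'_{k-1}$ is then a one-line consequence of the formula in $\cref{eqn:alpha'_k}$.
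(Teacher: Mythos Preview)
Your high-level plan (identify a saddle point of $\frac{\lambda_S(\cD_N,p)}{\gamma_{S,k}(\mu(\cD_N))}$ and verify it) matches the paper's max-min method, but the concrete guess you describe is wrong, and the roles of $\cD_Y$ and $\cD_N$ are partly swapped in your outline.

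First, the optimization is really over $\cD_N$, not $\cD_Y$: in the CGSV framework $\cD_Y$ is handled by the closed-form $\gamma_{S,k}(\mu)$ (\cref{lemma:gamma-formula}), and the hard part is to find the $\cD_N^*$ minimizing $\beta_S(\cD_N)/\gamma_{S,k}(\mu(\cD_N))$. Your proposal to ``lift'' the optimal $\cD_N$ from $(k{-}1)\textsf{AND}$ by adjoining a uniform extra coordinate does not give the right distribution. For odd $k-1$, the $(k{-}1)\textsf{AND}$ optimizer is concentrated on a single Hamming weight $k/2$ (it is not a biased product distribution), and symmetrizing with a free coordinate would give the mixture $\frac12$ on weight $k/2$ and $\frac12$ on weight $k/2+1$. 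The paper's actual optimizer is supported on those same two levels but with the unequal weights $\cD_N^*\langle k/2\rangle=\frac{(k/2)^2}{(k/2)^2+(k/2-1)^2}$ and $\cD_N^*\langle k/2+1\rangle=\frac{(k/2-1)^2}{(k/2)^2+(k/2-1)^2}$, together with $p^*=\frac12+\frac1{2(k-1)}$. With your $(\frac12,\frac12)$ mixture the derivative of $\lambda_{\{k-1,k\}}(\cD_N,\cdot)$ does not vanish at $p^*$, so this cannot be the saddle, and the ``combinatorial'' story for the $k/2$ prefactor does not go through.

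Second, even once the correct $(\cD_N^*,p^*)$ is in hand, the verification is not a one-liner. For the upper bound one must show that $p^*$ globally maximizes $\lambda_{\{k-1,k\}}(\cD_N^*,\cdot)$; the paper does this by factoring the derivative and proving that a residual cubic has no roots in $(0,1)$ via a sign analysis after the substitution $x=\frac1p-1$. For the lower bound, $\gamma_{\{k-1,k\},k}(\mu(\cD))$ is piecewise linear (a $\min$ of two linear forms), so one cannot directly appeal to linearity; the paper upper-bounds the denominator by the linear form $\sum_i \frac{i}{k-1}\cD\langle i\rangle$, applies \cref{prop:lin-opt} to reduce to a single inequality for each $i\in\{0,\dots,k\}$, and then proves those inequalities by an induction on $i$ analogous to the $\kand$ case. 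None of these steps is present in your outline, and the ``lifting from $(k{-}1)\textsf{AND}$'' heuristic does not supply them.
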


We prove \cref{thm:k-1-k-approximability} in \cref{sec:k-1-k-analysis} using techniques similar to our proof of \cref{thm:kand-approximability}. We also provide partial results for $\Ex^{(k+1)/2}_k$, including closed forms for small $k$ and an asymptotic analysis of $\alpha(\Ex^{(k+1)/2}_k)$:

\begin{theorem}[Informal version of \cref{thm:k+1/2-approximability}]\label{thm:k+1/2-approximability-informal}
For odd $k \in \{3,\ldots,51\}$, there is an explicit expression for $\alpha(\Ex^{(k+1)/2}_k)$ as a function of $k$.
\end{theorem}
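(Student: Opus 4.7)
The plan is to adapt the saddle-point strategy used in the proofs of \cref{thm:kand-approximability} and \cref{thm:k-1-k-approximability} to the case $f = \Ex^{(k+1)/2}_k$. By the CGSV dichotomy, $\alpha(f)$ is the infimum of an explicit ratio $\gamma/\beta$ over pairs $(\cD_Y,\cD_N)$ of distributions on $\{-1,1\}^k$ with $\supp(\cD_Y)\subseteq f^{-1}(1)$ and matching one-wise marginals. Since $f$ is symmetric, I would first use an averaging argument to restrict attention to $S_k$-invariant pairs: $\cD_Y$ then collapses to the uniform distribution on the Hamming-weight-$(k+1)/2$ layer, and $\cD_N$ is parameterized by a probability vector $(p_0,\dots,p_k)$ on Hamming weights subject to a single linear marginal-matching equation. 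This collapses the infimum to a finite-dimensional optimization on a slice of the simplex.

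Second, I would set up the reduced optimization in the scalar bias variable $b = \Exp_{\vecb\sim\cD_N}[b_1]$ together with the weights on each Hamming class. The objective becomes a ratio of symmetric polynomials in $b$ whose degree grows with $k$, and first-order stationarity at the saddle point produces a univariate polynomial equation $P_k(b) = 0$. For each odd $k \in \{3,\ldots,51\}$, I would solve $P_k$ explicitly: stripping off boundary roots corresponding to degenerate distributions, exploiting the self-reciprocal structure induced by the $S_k$-symmetry (which pairs $b \mapsto -b$), and reducing to an irreducible factor of low enough degree to admit a radical solution. Substituting the resulting $b^\ast$ back into $\gamma/\beta$ gives a candidate closed-form expression, and checking the second-order conditions (together with ruling out boundary minimizers on the simplex) certifies that this candidate is indeed the global minimum, not merely a stationary point.

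The principal obstacle is controlling the irreducible factor of $P_k$: unlike $\kand$ and $\Th^{k-1}_k$, where the critical equation has a uniform solution across $k$, the exactly-middle-weight constraint yields a polynomial whose irreducible part grows in degree with $k$, so Galois-theoretic obstructions preclude a single uniform radical expression across all odd $k$. For $k \leq 51$ this factor still admits an explicit solution, yielding the case-by-case expression claimed in \cref{thm:k+1/2-approximability-informal}; beyond this range, only the $\PSPACE$ numerical computation guaranteed by \cite{CGSV21-boolean} remains available. A secondary subtlety is verifying that the $S_k$-symmetrization does not lose anything, which requires showing joint convexity of the CGSV objective in $(\cD_Y,\cD_N)$ along the orbit-averaging direction — this is standard but must be checked to legitimize the symmetry reduction.
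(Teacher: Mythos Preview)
Your proposal has a genuine gap in its diagnosis of the problem, and this leads you away from the actual structure of the argument.

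First, the optimal $\cD_N^*$ in the paper is supported on the \emph{extreme} Hamming weights $\{0,k\}$, namely $\cD_{0,k}\langle 0\rangle=\tfrac{k-1}{2k}$, $\cD_{0,k}\langle k\rangle=\tfrac{k+1}{2k}$, with marginal exactly $\mu=\tfrac1k=\epsilon_{(k+1)/2,k}$. At this marginal, $\gamma_{S,k}(\mu)=1$ and the optimal $\cD_Y$ is indeed uniform on the middle layer; but your plan treats this collapse of $\cD_Y$ as a consequence of symmetrization alone, whereas it only happens once you have already identified the correct marginal. Moreover, the resulting explicit formula is \emph{uniform} in $k$: one solves a quadratic in $p$ to find $p'_k=\tfrac{3k-k^2+\sqrt{4k+k^2-2k^3+k^4}}{4k}$ and plugs into $\lambda_{\{(k+1)/2\}}(\cD_{0,k},p'_k)$. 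There is no growth in degree of the ``critical equation'' and no Galois-theoretic obstruction; your explanation for the cutoff at $k=51$ is incorrect.

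Second, the actual reason for the restriction $k\le 51$ is the \emph{lower bound}, which is proved by the max-min method but with a twist: one shows that for \emph{every} $p\in[0,1]$ the ratio $\lambda_S(\cdot,p)/\gamma_{S,k}(\mu(\cdot))$ is minimized at $\cD_{0,k}$. Because $\gamma_{S,k}(\mu(\cdot))$ is only piecewise linear, the simplex is split at $\mu=\tfrac1k$ into two pieces $\Delta_k^{\pm}$, and on each piece one must check the inequality at the extreme points, which include not only the single-weight distributions $\cD_i$ but also the two-weight distributions $\cD_{i,j}$ on the boundary $\mu=\tfrac1k$. This yields $O(k^2)$ degree-$k$ polynomial inequalities in $p$, verified by computer for each odd $k\le 51$. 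Your proposal's ``second-order conditions'' and ``ruling out boundary minimizers'' do not capture this structure, and the $b\mapsto -b$ symmetry you invoke does not hold for $\Ex^{(k+1)/2}_k$ since $(k+1)/2\ne k-(k+1)/2$.
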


\begin{theorem}\label{thm:k+1/2-asymptotic-lb}
$\lim_{\text{odd } k \to \infty} \frac{\alpha\left(\Ex^{(k+1)/2}_k\right)}{\rho\left(\Ex^{(k+1)/2}_k\right)}=1$.
\end{theorem}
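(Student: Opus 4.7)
The plan is to combine the trivial lower bound $\alpha(f) \geq \rho(f)$, which holds because the constant algorithm outputting $\rho(f)$ is a $\rho(f)$-approximation, with an upper bound $\alpha(f) \leq \rho(f)(1 + o(1))$ derived from the CGSV dichotomy. Since $\rho(\Ex^{(k+1)/2}_k) = \binom{k}{(k+1)/2} 2^{-k}$, which by Stirling's formula equals $\sqrt{2/(\pi k)}(1 + O(1/k))$ as odd $k \to \infty$, establishing the matching upper bound suffices to conclude that the limit is $1$.

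For the upper bound, I would first recall the CGSV characterization (specialized to the symmetric setting as described in \cref{sec:cgsv-framework}): $\alpha(f)$ equals the optimum value of an explicit minimization problem over pairs of distributions $(\cD_Y, \cD_N)$ on $\{-1,1\}^k$, where $\cD_Y$ is supported on $f^{-1}(1)$ and the two distributions are one-wise indistinguishable (i.e., their single-coordinate marginals agree). For a symmetric predicate one may restrict without loss of generality to symmetric $\cD_Y, \cD_N$, in which case the indistinguishability condition collapses to equality of biases, and the objective becomes a ratio of explicit polynomial expressions in a single bias parameter $\mu$.

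The key insight driving the upper bound is that $\Ex^{(k+1)/2}_k$ is \emph{barely} non-approximation-resistant: its support $S = \{(k+1)/2\}$ is a single element adjacent to the midpoint $k/2$, so the uniform distribution on $f^{-1}(1)$ has per-coordinate bias only $1/k$, which vanishes as $k \to \infty$. I would exhibit a near-optimal feasible pair by taking $\cD_Y$ to be (essentially) uniform on $f^{-1}(1)$, with bias $1/k$, and $\cD_N$ to be a small perturbation of $\Unif(\{-1,1\}^k)$ engineered to have matching bias $1/k$. Evaluating the CGSV objective at this pair, using $\Exp_{\cD_N}[f] = \binom{k}{(k+1)/2}((1+1/k)/2)^{(k+1)/2}((1-1/k)/2)^{(k-1)/2}$ and expanding asymptotically via $\log((1 \pm 1/k)^k) = \pm 1 + O(1/k)$, should yield an objective value of $\rho(f)(1 + o(1))$.

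The main obstacle will be the asymptotic bookkeeping: verifying feasibility of the chosen $\cD_N$ in the CGSV minimization and carefully bounding the lower-order error introduced by the bias correction. Unlike the $\kand$ and $\Th^{k-1}_k$ analyses in \cref{thm:kand-approximability,thm:k-1-k-approximability}, where the saddle-point structure of the optimization yields a clean closed form via algebraic critical points with bias bounded away from zero, here the optimum is only approached in a limiting regime where the bias parameter tends to zero with $k$. The argument must therefore rely on Taylor expansions of $((1 \pm \mu)/2)^k$ at $\mu = 1/k$ together with tight binomial coefficient estimates, rather than on identification of a critical point admitting a nice algebraic form.
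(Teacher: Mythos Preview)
Your overall plan is sound and would go through, but it differs from the paper's argument in the choice of the ``No'' distribution $\cD_N$. The paper upper-bounds $\alpha$ by evaluating the CGSV objective at the two-point distribution $\cD_{0,k}$ supported on weights $0$ and $k$ (with masses $\tfrac{k-1}{2k}$ and $\tfrac{k+1}{2k}$), then computes $\beta_S(\cD_{0,k})=\sup_p \lambda_S(\cD_{0,k},p)$ by locating the explicit critical point $p'_k = \tfrac{3k-k^2+\sqrt{k^4-2k^3+k^2+4k}}{4k}$; since $p'_k = \tfrac12 + O(1/k^2)$ and $\lambda_S(\cD_{0,k},\tfrac12)=\rho$, the upper bound is $\rho(1+o(1))$, and together with the trivial $\alpha\geq\rho$ this gives the limit. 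Your alternative $\cD_N$ (the product $\bern((1+1/k)/2)^k$) also has marginal $1/k$, and the computation $(1+1/k)^{(k+1)/2}(1-1/k)^{(k-1)/2}=1+O(1/k)$ yields the same $\rho(1+o(1))$ bound. The paper's choice has the additional payoff of being the \emph{exact} minimizer for small $k$ (\cref{thm:k+1/2-approximability}), giving a closed form for $\alpha$, whereas your choice is asymptotically sufficient but not tight for any fixed $k$; on the other hand your $\cD_N$ avoids having to solve for the maximizing $p$ in a degree-$k$ polynomial.

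One point you gloss over deserves care: the numerator of the CGSV objective is $\beta_S(\cD_N)=\sup_{p\in[0,1]}\lambda_S(\cD_N,p)$, not $\Exp_{\cD_N}[f]=\lambda_S(\cD_N,1)$. For an arbitrary $\cD_N$ these differ. For your product-Bernoulli choice they happen to coincide: if $\veca\sim\bern((1+1/k)/2)^k$ and $\vecb\sim\bern(p)^k$ then $\veca\odot\vecb\sim\bern((1+(2p-1)/k)/2)^k$, and the weight-$(k+1)/2$ probability of this product measure is maximized exactly when the effective bias equals $1/k$, i.e.\ at $p=1$. You should state and verify this, since otherwise your displayed quantity is only a lower bound on $\beta_S(\cD_N)$ and hence does not \emph{a priori} upper-bound $\alpha$. (Also, a small wording issue: in the CGSV formulation $\cD_Y$ is not required to be supported on $f^{-1}(1)$; it simply happens that the maximizer of $\gamma_S$ at marginal $1/k$ is concentrated on weight $(k+1)/2$, which is what you need.)
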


We prove \cref{thm:k+1/2-approximability-informal,thm:k+1/2-asymptotic-lb} in \cref{sec:k+1/2-analysis}. Finally, in \cref{sec:other-analysis}, we explicitly resolve fifteen other cases (e.g., $f_{\{2,3\},3}$ and $f_{\{4\},5}$) not covered by \cref{thm:kand-approximability,thm:k-1-k-approximability,thm:k+1/2-approximability-informal}.

\subsubsection{Simple approximation algorithms for threshold functions}\label{sec:thresh-alg-overview}
Chou, Golovnev, and Velusamy's optimal $(\frac49-\epsilon)$-approximation for $\twoand$~\cite{CGV20}, like Guruswami, Velingker, and Velusamy's earlier $(\frac25-\epsilon)$-approximation~\cite{GVV17}, is based on measuring a quantity called the \emph{bias} of an instance $\Psi$, denoted $\bias(\Psi)$, which is defined as follows: For each $i \in [n]$, $\diff_i(\Psi)$ is the difference in total weight between constraints where $x_i$ occurs positively and negatively, and $\bias(\Psi) \eqdef \frac1{km} \sum_{i=1}^n |\diff_i(\Psi)| \in [0,1]$.\footnote{\cite{GVV17,CGV20} did not normalize by $\frac1{kW}$.} In the sketching setting, $\bias(\Psi)$ can be estimated using standard $\ell_1$-norm sketching algorithms \cite{Ind06,KNW10}. 

In \cref{sec:thresh-alg}, we give simple optimal bias-based approximation algorithms for threshold functions:

\begin{theorem}\label{thm:thresh-bias-alg}
Let $f_{S,k} = \Th^i_k$ be a threshold function. Then for every $\epsilon > 0$, there exists a piecewise linear function $\gamma : [-1,1]\to[0,1]$ and a constant $\epsilon'>0$ such that the following is a sketching $(\alpha(f_{S,k})-\epsilon)$-approximation for $\mcsp(f_{S,k})$: On input $\Psi$, compute an estimate $\hat{b}$ for $\bias(\Psi)$ up to a multiplicative $(1\pm \epsilon')$ error and output $\gamma(\hat{b})$.
\end{theorem}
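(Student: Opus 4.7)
The plan is to generalize the bias-based algorithm of Chou--Golovnev--Velusamy \cite{CGV20} for $\maxtwoand$ to all threshold functions $\Th^i_k$, by identifying a saddle-point structure in the CGSV minimax that reduces it to a single-parameter family of hard instances characterized by their one-wise marginal.

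\medskip\noindent\textbf{Step 1 (algorithmic lower bound).} For a parameter $p \in [-1,1]$ and instance $\Psi$ with $b = \bias(\Psi)$, let $\vecsigma(p) \in \{-1,1\}^n$ be the random assignment obtained by independently setting $\sigma_i = \sign(\diff_i(\Psi))$ with probability $(1+p)/2$. Modulo a lower-order term arising from variable repetitions within a constraint (negligible for fixed $k$ as $n \to \infty$),
\[
\Exp[\val_\Psi(\vecsigma(p))] \;=\; L(b,p)\;\eqdef\;\Pr_{X\,\sim\,\mathrm{Bin}(k,(1+bp)/2)}[X \geq i].
\]
Since $\Th^i_k$ is monotone, $L(b,\cdot)$ is maximized at $p=\sign(b)$, giving $\val_\Psi \geq L(|b|,1) - o(1)$. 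In particular, any function $\gamma:[-1,1]\to[0,1]$ satisfying $\gamma(b) \leq L(|b|,1)$ produces a valid lower-bound estimator when composed with any sufficiently accurate estimate of $\bias(\Psi)$.

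\medskip\noindent\textbf{Step 2 (matching the CGSV ratio).} The crux is to show that $\inf_{b \in [0,1]} L(b,1)/V^+(b) \geq \alpha(\Th^i_k)$, where $V^+(b) = \sup\{\val_\Psi : \bias(\Psi)=b\}$. By the CGSV dichotomy, $\alpha(\Th^i_k)$ is characterized by a minimization over pairs of distributions $(\cD_Y,\cD_N)$ on $\{-1,1\}^k$ with matching one-wise marginals. A symmetrization over coordinate permutations lets us assume both $\cD_Y,\cD_N$ are symmetric, so each is determined by a probability vector on $\{0,\ldots,k\}$ and a common marginal $\mu\in[-1,1]$. The structural claim, specific to threshold predicates, is that the extremal $\cD_Y$ concentrates on a single Hamming weight rounded from $k(1+\mu)/2$ upward into $S$ (maximizing $\Exp_{\cD_Y}[\Th^i_k]$ subject to the marginal), while the extremal $\cD_N$ is driven by a majorization-style rearrangement that precisely recovers the value $L(|\mu|,1)$. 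Substituting these extremal distributions into the CGSV ratio yields the needed equality $\alpha(\Th^i_k) = \inf_\mu L(|\mu|,1)/V^+(|\mu|)$.

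\medskip\noindent\textbf{Step 3 (piecewise linear approximation and streaming).} Since $b \mapsto L(|b|,1)$ is a polynomial rather than linear, replace it by a piecewise linear function $\gamma$ obtained by linear interpolation on a grid fine enough that $|\gamma(b) - L(|b|,1)| \leq \epsilon/2$ uniformly on $[-1,1]$; such $\gamma$ inherits a Lipschitz constant depending only on $k$ and $\epsilon$. A standard $\ell_1$-norm sketch \cite{Ind06,KNW10} computes $\hat b$ with $|\hat b - \bias(\Psi)| \leq \epsilon' \cdot \bias(\Psi)$ in $O(\log n)$ space, and Lipschitz continuity gives $|\gamma(\hat b) - \gamma(\bias(\Psi))| \leq \epsilon/2$ once $\epsilon'$ is chosen small enough. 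Outputting $\gamma(\hat b)$ thus yields an $(\alpha(\Th^i_k)-\epsilon)$-approximation.

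\medskip The main obstacle is Step 2: verifying the saddle-point structure of the CGSV optimization for threshold predicates. While symmetrization easily reduces to symmetric distributions on Hamming weights, showing that the extremal pair $(\cD_Y,\cD_N)$ is single-parameter (indexed by $\mu$) and that the resulting CGSV ratio exactly matches $L(|\mu|,1)/V^+(|\mu|)$ requires a careful majorization argument exploiting the monotonicity and single-threshold structure of $\Th^i_k$. This structural feature is expected to fail for general symmetric $f$ (e.g., $\Ex^i_k$), which is precisely why the theorem is stated only for thresholds.
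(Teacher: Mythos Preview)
There is a genuine error in Step~1. The claimed identity $\Exp[\val_\Psi(\vecsigma(p))] = \Pr_{X\sim\mathrm{Bin}(k,(1+bp)/2)}[X\ge i]$ is false: the expected value of a $p$-noised majority assignment is \emph{not} a function of $\bias(\Psi)$ alone, even exactly (the ``lower-order term from variable repetitions'' is not the issue). Concretely, take $k=i=2$ and let $\Psi$ consist of $x_1\wedge x_2$ with weight $2$ and $\neg x_1\wedge x_3$ with weight $1$. Then $\bias(\Psi)=\tfrac23$, and your formula at $p=1$ gives $L(\tfrac23,1)=(\tfrac56)^2=\tfrac{25}{36}$, whereas the value of the majority assignment---which here is also $\val_\Psi$---is only $\tfrac23<\tfrac{25}{36}$. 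So $L(|b|,1)$ is not even a valid lower bound on $\val_\Psi$, and Step~2, which is built on comparing $L(|b|,1)$ to $V^+(b)$, inherits the problem. The underlying reason is that after flipping by $\vecmaj$, different constraints carry different Hamming-weight negation patterns; the expected value depends on the full distribution of these weights, of which $\bias(\Psi)$ is only the first moment.

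The paper's route avoids this and is in fact simpler than your plan: it neither computes $\alpha(\Th^i_k)$ explicitly nor appeals to any saddle point. It associates to $\Psi$ a symmetrized canonical distribution $\dsym_\Psi\in\Delta_k$, records that $\Exp_{\veca\sim\bern(p)^n}[\val_\Psi(\veca)]=\lambda_S(\dsym_\Psi,p)$ and $\mu(\dsym_{\Psi^{\vecmaj}})=\bias(\Psi)$, and then deduces the sandwich
\[
\beta_{S,k}(\bias(\Psi)) \;\le\; \val_\Psi \;\le\; \gamma_{S,k}(\bias(\Psi))
\]
directly from the definitions in \cref{eqn:beta_Sk-gamma_Sk-def}. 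Since $\alpha(\Th^i_k)=\inf_\mu \beta_{S,k}(\mu)/\gamma_{S,k}(\mu)$ by \cref{eqn:alpha-optimize-over-mu}, outputting $\alpha(\Th^i_k)\cdot\gamma_{S,k}\bigl(\hat b/(1+\epsilon')\bigr)$ is already the required piecewise linear estimator; no majorization argument on $\cD_N$ is needed. Your intuition that the threshold structure is essential is correct, but it enters in a different place: because the largest element of $S$ is $k$, \cref{lemma:gamma-formula} gives $\gamma_{S,k}(\mu)=\min\{(1+\mu)/(1+\epsilon_{i,k}),1\}$, which is nondecreasing in $\mu$; this monotonicity is exactly what turns $\mu(\dsym_{\Psi^{\vecopt}})\le\bias(\Psi)$ into the upper bound $\val_\Psi\le\gamma_{S,k}(\bias(\Psi))$.
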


Our construction generalizes the algorithm in \cite{CGV20} for $\twoand$ to all threshold functions, and is also a simplification, since the \cite{CGV20} algorithm computes a more complicated function of $\hat{b}$.

For all CSPs whose approximability we resolve in this paper, we apply an analytical technique which we term the ``max-min method;'' see the discussion in \cref{sec:max-min} below. For such CSPs, our algorithm can be extended to solve the problem of outputting an approximately optimal \emph{assignment} (instead of just the value of such an assignment). Indeed, for this problem, we give a simple randomized streaming algorithm using $O(n)$ space and time:

\begin{theorem}[Informal version of \cref{thm:thresh-bias-output-alg}]\label{thm:thresh-bias-alg-classical}
Let $f_{S,k}$ be a function for which the max-min method applies, such as $\kand$, or $\Th^{k-1}_k$ (for even $k$). Then there exists a constant $p^* \in [0,1]$ such that following algorithm, on input $\Psi$, outputs an assignment with expected value at least $\alpha(f_{S,k}) \val_\Psi$: Assign variable $i$ to $1$ if $\diff_i(\Psi) \geq 0$ and $-1$ otherwise, and then flip each variable's assignment independently with probability $p^*$.
\end{theorem}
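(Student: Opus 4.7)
The plan is to analyze the expected value of the algorithm's output assignment in terms of the quantity $\bias(\Psi)$ and match it to the guarantee of the sketching algorithm of Theorem \ref{thm:thresh-bias-alg}. Set $\vecsigma^* \in \{-1,1\}^n$ by $\sigma^*_i \eqdef \sign(\diff_i(\Psi))$, and let $\vecsigma$ denote the algorithm's output. The coordinates $\sigma_i$ are mutually independent, with $\sigma_i = \sigma^*_i$ with probability $1-p^*$ (so $\Exp[\sigma_i] = (1-2p^*)\sigma^*_i$). For a constraint $C=(\vecb,\vecj)$, define $t_C \eqdef |\{l \in [k] : b_l\sigma^*_{j_l} = +1\}|$, the number of literals of $C$ satisfied by $\vecsigma^*$. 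Then $\wt(\vecb\odot\vecsigma|_\vecj)$ is distributed as $X+Y$ where $X \sim \mathrm{Bin}(t_C,\,1-p^*)$ and $Y \sim \mathrm{Bin}(k-t_C,\,p^*)$ are independent, so $\Pr[f_{S,k}(\vecb\odot\vecsigma|_\vecj) = 1] = g(t_C)$ for an explicit function $g : \{0,\ldots,k\} \to [0,1]$ depending only on $S$, $k$, and $p^*$.

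The flip probability $p^*$ is extracted from the max-min saddle point. For the functions where the method applies, $\alpha(f_{S,k})$ is the value of a saddle point in the CGSV dichotomy whose optimizing ``Yes'' strategy is a product distribution on $\{-1,1\}^k$ with a single scalar bias parameter; taking $1-2p^*$ equal to the magnitude of this bias aligns the algorithm's single-coordinate marginals with the saddle point. The core aggregation identity is
\[
\frac{1}{W}\sum_{i=1}^m w_i\, t_{C_i} \;=\; \frac{k}{2}\bigl(1+\bias(\Psi)\bigr),
\]
which holds because, for each variable $i$, re-orienting its occurrences by $\sigma^*_i$ turns the signed excess of positive over negative literal occurrences into exactly $|\diff_i(\Psi)|$. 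Combining this identity with convexity/concavity properties of $g$ and the saddle-point condition, one lower-bounds the aggregate expectation $\Exp[\val_\Psi(\vecsigma)] = \frac{1}{W}\sum_i w_i\, g(t_{C_i})$ by the piecewise-linear function $\gamma(\bias(\Psi))$ of Theorem \ref{thm:thresh-bias-alg}, which in turn satisfies $\gamma(\bias(\Psi)) \geq \alpha(f_{S,k})\,\val_\Psi$.

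For implementation, the streaming phase maintains the $n$ counters $\diff_1(\Psi),\ldots,\diff_n(\Psi)$ exactly in $O(n\log W)$ bits, with $O(\log n)$ update time per constraint; the post-processing computes $\vecsigma^*$ and samples $\vecsigma$ in $O(n)$ time and randomness. The main obstacle is the aggregation step: passing from the constraint-wise expression $\sum_i w_i\, g(t_{C_i})$ to a bound in terms of $\bias(\Psi)$ alone requires using the saddle-point characterization to argue that the adversary's worst choice of $(t_{C_i})_i$ consistent with a given bias is exactly the one saturating the inequality. For $\maxkand$ and $\Th^{k-1}_k$, where the max-min method yields a single scalar saddle point, this reduction follows from structural properties of $g$ already established in the proof of Theorem \ref{thm:thresh-bias-alg}, which Theorem \ref{thm:thresh-bias-output-alg} then reuses to pass from the sketching estimate to an explicit assignment with the claimed expected value.
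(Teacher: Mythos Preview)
Your outline is essentially the paper's argument, just unpacked without the paper's abstraction. In the paper's language, your function $g(t)$ is the coefficient of $\cD\langle t\rangle$ in $\lambda_S(\cD,\cdot)$, and your identity $\frac1W\sum_i w_i t_{C_i}=\frac{k}2(1+\bias(\Psi))$ is exactly the statement $\mu(\dsym_{\Psi^{\vecmaj}})=\bias(\Psi)$, where $\dsym_{\Psi^{\vecmaj}}$ is the ``symmetrized canonical distribution'' of the instance after reorienting by $\vecmaj=\vecsigma^*$. The paper's chain is then: expected value $=\lambda_S(\dsym_{\Psi^{\vecmaj}},p^*)\ge \alpha\cdot\gamma_{S,k}(\mu(\dsym_{\Psi^{\vecmaj}}))=\alpha\cdot\gamma_{S,k}(\bias(\Psi))\ge \alpha\cdot\val_\Psi$, the last step being \cref{lemma:thresh-value-ub}.

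Two points where your write-up goes astray. First, your description of $p^*$ as the ``bias of an optimizing Yes strategy'' is not what is going on. The hypothesis that the max-min method applies is literally the statement that $\inf_{\cD}\lambda_S(\cD,p^*)/\gamma_{S,k}(\mu(\cD))=\alpha(f_{S,k})$; that is, $p^*$ is the second coordinate of a saddle point of $(\cD,p)\mapsto \lambda_S(\cD,p)/\gamma_{S,k}(\mu(\cD))$, and nothing more.

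Second, the invocation of ``convexity/concavity of $g$'' and an ``adversary's worst choice of $(t_{C_i})_i$'' is a red herring. Once you form the empirical distribution $\cD\in\Delta_k$ with $\cD\langle t\rangle$ equal to the weight-fraction of constraints having $t_C=t$, you have $\frac1W\sum_i w_i g(t_{C_i})=\lambda_S(\cD,p^*)$ and (by your identity) $\mu(\cD)=\bias(\Psi)$. The max-min hypothesis then gives $\lambda_S(\cD,p^*)\ge \alpha\cdot\gamma_{S,k}(\bias(\Psi))$ \emph{directly}, for this specific $\cD$ --- no Jensen-type argument or extremal analysis over tuples $(t_{C_i})$ is needed. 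The final inequality $\gamma_{S,k}(\bias(\Psi))\ge \val_\Psi$ is \cref{lemma:thresh-value-ub}, not a structural property of $g$.
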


Our algorithm can potentially be derandomized using universal hash families, as in Biswas and Raman's recent derandomization \cite{BR21} of the $\maxtwoand$ algorithm in \cite{CGV20}.

\subsubsection{Sketching vs. streaming approximability}

\cref{thm:kand-approximability} implies that $\alpha(\threeand) = \frac29$. We prove that the padded one-wise pair criterion of Chou, Golovnev, Sudan, and Velusamy~\cite{CGSV21-boolean} is not sufficient to completely resolve the \emph{streaming} approximability of \m[$3$AND]:

\begin{theorem}[Informal version of \cref{thm:cgsv-streaming-failure-3and} + \cref{obs:cgsv-streaming-3and-lb}]
The padded one-wise pair criterion in \cite{CGSV21-boolean} does not rule out a $o(\sqrt n)$-space streaming $(\frac29+\epsilon)$-approximation for $\threeand$ for every $\epsilon > 0$; however, it does rule out such an algorithm for $\epsilon \gtrapprox 0.0141$.
\end{theorem}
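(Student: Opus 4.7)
The core task is to compute the quantity
\[
\alpha_{\text{pow}}(\threeand) \eqdef \inf\{\beta/\gamma : \text{the pair } (\beta, \gamma) \text{ admits a padded one-wise pair for } \threeand\},
\]
establish that $\alpha_{\text{pow}}(\threeand) > \alpha(\threeand) = \tfrac{2}{9}$, and pin down its numerical value so that $\alpha_{\text{pow}}(\threeand) - \tfrac{2}{9} \approx 0.0141$. Both halves of the theorem then follow: for $\epsilon$ below this gap, no padded one-wise pair has ratio $\beta/\gamma \leq \tfrac{2}{9}+\epsilon$, so the criterion of \cite{CGSV21-boolean} cannot produce a streaming lower bound excluding a $(\tfrac{2}{9}+\epsilon)$-approximation; whereas for $\epsilon$ slightly above the gap, an explicit padded one-wise pair with ratio $\beta/\gamma$ just under $\tfrac{2}{9}+\epsilon$ exists, and feeding it into the streaming lower bound yields the desired $\Omega(\sqrt{n})$ hardness.

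To compute $\alpha_{\text{pow}}(\threeand)$ I would first specialize the padded one-wise pair definition (\cref{sec:cgsv-streaming-framework}) to $\threeand$ and exploit symmetry. Since $\threeand$ is invariant under coordinate permutations, averaging $\cD_Y$ and $\cD_N$ over $\sym_3$ preserves both one-wise independence and the expected $\threeand$-value, so each distribution may be assumed symmetric and thus described by four nonnegative probabilities, one per Hamming weight in $\{0,1,2,3\}$. Because $\threeand^{-1}(1)=\{(1,1,1)\}$, we have $\gamma = \Pr_{\cD_Y}[\wt = 3]$ and $\beta = \Pr_{\cD_N}[\wt = 3]$. The padded one-wise condition then requires that the marginals of $\cD_Y$ and $\cD_N$ each equal a common value $\mu \in [-1,1]$ after mixing with some uniform padding of weight $\tau \in [0,1)$, reducing the feasible region to a low-dimensional semi-algebraic set parameterized by $(\mu, \tau)$ together with a handful of nonnegative weight-class probabilities on each side.

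Next, I would solve the resulting fractional optimization. For fixed $\mu$ and $\tau$, the constraints on the weight-class probabilities for $\cD_Y$ and $\cD_N$ are linear, and minimizing $\beta/\gamma$ decouples into one linear-fractional program per side; a standard vertex/KKT analysis yields closed-form expressions for the optimal $\beta(\mu,\tau)$ and $\gamma(\mu,\tau)$. Optimizing over $(\mu, \tau)$ is then a two-dimensional problem, which I would tackle by a combination of first-order analytic conditions (to locate candidate critical points) and a numerical sweep (to confirm the global minimum), extracting both an explicit optimizing pair and the value $\alpha_{\text{pow}}(\threeand) \approx \tfrac{2}{9} + 0.0141$.

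The main obstacle I anticipate is the optimization itself: it is non-convex and features several inequality constraints that may switch between tight and slack at the optimum, so identifying the correct combinatorial ``vertex'' and ruling out boundary degeneracies (e.g., the trivial $\cD_N = \cD_Y = \Unif$ solution, or pairs where the padding weight $\tau$ is extremal) will require careful case analysis. Pinning down the crossover to the stated precision $0.0141$ will almost certainly rely on numerical computation, which I would pair with an analytic characterization of the optimizing pair — including a proof of optimality via the stationarity conditions — to keep the argument rigorous.
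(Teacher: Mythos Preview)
Your proposal contains a substantive misidentification of the objective. For $\threeand$ we do have $\gamma_{\{3\}}(\cD_Y)=\lambda_{\{3\}}(\cD_Y,1)=\cD_Y\langle 3\rangle$, but $\beta_{\{3\}}(\cD_N)$ is \emph{not} $\Pr_{\cD_N}[\wt=3]$; by \cref{eqn:beta_S-gamma_S-def} it is $\sup_{p\in[0,1]}\lambda_{\{3\}}(\cD_N,p)=\sup_p\sum_{i=0}^3 (1-p)^{3-i}p^{\,i}\,\cD_N\langle i\rangle$, a nontrivial cubic-in-$p$ maximization. Consequently the ``$\beta$-side'' of your optimization is not a linear-fractional program in the weight-class probabilities, and the decoupling you rely on does not hold. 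You also slightly mis-state the padded one-wise pair condition (\cref{def:padded-onewise}): the shared component $\cD_0$ is an \emph{arbitrary} symmetric distribution, not a uniform ``padding''; what is forced is only that the residual parts $\cD'_Y,\cD'_N$ have marginal zero.

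Separately, the paper does not compute $\alpha_{\text{pow}}(\threeand)$ at all. The two halves are handled asymmetrically. For the negative half, the paper shows (\cref{lemma:3and-unique-minimum}) that the \emph{unique} pair $(\cD_N,\cD_Y)$ achieving ratio $\tfrac29$ is $\cD_N^*=(0,0,1,0)$, $\cD_Y^*=(\tfrac13,0,0,\tfrac23)$, which is visibly not a padded one-wise pair; a short compactness/continuity argument (\cref{lemma:3and-top-lemma}) then rules out any sequence of padded one-wise pairs with ratios tending to $\tfrac29$. No quantitative gap is extracted. For the positive half, the paper simply exhibits (\cref{obs:cgsv-streaming-3and-lb}) one numerically-found padded one-wise pair $\cD_N=(0,0.45,0.45,0.1)$, $\cD_Y=(0.45,0,0,0.55)$ with $\beta_{\{3\}}(\cD_N)/\gamma_{\{3\}}(\cD_Y)\approx 0.2362$, without claiming it is optimal among padded one-wise pairs. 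So the ``$\epsilon\gtrapprox 0.0141$'' in the informal statement refers to this single witness, not to the true infimum you set out to compute.
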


We state these results formally in \cref{sec:cgsv-streaming-framework} and prove them in \cref{sec:cgsv-streaming-failure-3and}. Separately, \cref{thm:k-1-k-approximability} implies that $\alpha(\Th^3_4) = \frac49$, and the padded one-wise pair criterion \emph{can} be used to show that $(\frac49+\epsilon)$-approximating $\mcsp(\Th^3_4)$ requires $\Omega(\sqrt n)$ space in the streaming setting (see \cref{obs:th34-streaming-lb} below).

\subsection{Related work}

The classical approximability of ${\m}\kand$ has been the subject of intense study, both in terms of algorithms \cite{GW95,FG95,Zwi98,Tre98-alg,TSSW00,Has04,Has05,CMM09} and hardness-of-approximation \cite{Has01,Tre98-hardness,ST98,ST00,EH08,ST09}, given its intimate connections to $k$-bit PCPs. Charikar, Makarychev, and Makarychev~\cite{CMM09} constructed an $\Omega(k 2^{-k})$-approximation to ${\m}\kand$, while  Samorodnitsky and Trevisan~\cite{ST09} showed that $k2^{-(k-1)}$-approximations and $(k+1)2^{-k}$-approximations are $\NP$- and UG-hard, respectively.

Interestingly, recalling that $\alpha(\kand) \to 2\rho(\kand) = 2^{-(k-1)}$ as $k \to \infty$, in the large-$k$ limit our simple randomized algorithm (given in \cref{thm:thresh-bias-alg-classical}) matches the performance of Trevisan's~\cite{Tre98-alg} parallelizable LP-based algorithm for $\kand$, which (to the best of our knowledge) was the first work on the general $\kand$ problem! The subsequent works \cite{Has04,Has05,CMM09} superseding \cite{Tre98-alg} use more complex techniques involving semidefinite programming, but are structurally similar to our algorithm in \cref{thm:thresh-bias-alg-classical}: They all involve ``guessing'' an assignment $\vecx \in \mathbb{Z}_2^n$ and then perturbing each bit with constant probability.

\section{Our techniques}\label{sec:techniques}

In this section, we give a more detailed background on the technical aspects of the dichotomy theorem in \cite{CGSV21-boolean}, and explain the novel aspects of our analysis.

\subsection{The Chou, Golovnev, Sudan, and Velusamy~\cite{CGSV21-boolean} framework for symmetric functions}\label{sec:cgsv-framework}

In this section, we describe the Chou, Golovnev, Sudan, and Velusamy~\cite{CGSV21-boolean} framework for finding the optimal sketching approximation ratio of a symmetric Boolean function $f_{S,k}$.

Let $\Delta(\{-1,1\}^k)$ denote the space of all distributions on $\{-1,1\}^k$. For a distribution $\cD\in \Delta(\{-1,1\}^k)$ and $\vecx\in \{-1,1\}^k$, we use $\cD(\vecx)$ to denote the probability of sampling $\vecx$ in $\cD$. To a distribution $\cD \in \Delta(\{-1,1\}^k)$ we associate a \emph{canonical instance} $\Psi_{\cD}$ of $\mcsp(f_{S,k})$ on $k$ variables as follows. Let $\vecj = (1,\ldots,k)$. For every negation pattern $\vecb \in \{-1,1\}^k$, $\Psi_\cD$ contains the constraint $(\vecb,\vecj)$ with weight $\cD(\vecb)$.

We say a distribution $\cD \in \Delta(\{-1,1\}^k)$ is \emph{symmetric} if all vectors of equal Hamming weight are equiprobable, i.e., for every $\vecx, \vecy \in \{-1,1\}^k$ such that $\wt(\vecx)=\wt(\vecy)$, $\cD(\vecx)=\cD(\vecy)$. Let $\Delta_k \subseteq \Delta(\{-1,1\}^k)$ denote the set of all symmetric distributions on $\{-1,1\}^k$. Given $\cD \in \Delta_k$, let $\cD\langle i \rangle \eqdef \sum_{\vecx \in \{-1,1\}^k : \wt(\vecx) = i} \cD(\vecx)$ denote the total probability mass on vectors of Hamming weight $i$. Note that any vector $(\cD\langle 0 \rangle,\ldots,\cD\langle k \rangle)$ of nonnegative values summing to $1$ uniquely determines a distribution $\cD \in \Delta_k$; we write $\cD = (\cD\langle 0 \rangle,\ldots,\cD\langle k \rangle)$ for notational convenience.

Let $\bern(p)$ represent a random variable which is $1$ with probability $p$ and $-1$ with probability $1-p$. For $\cD \in \Delta(\{-1,1\})^k$ and $p \in [0,1]$, let
\begin{equation}\label{eqn:lambda-def}
    \lambda_S(\cD,p) \eqdef \Exp_{\veca \sim \cD, \vecb \sim  \bern(p)^k}[f_{S,k}(\veca \odot \vecb)] = \Exp_{\vecb \sim \bern(p)^k}[\val_{\Psi_\cD}(\vecb)]
\end{equation}
denote the expected value of a ``$p$-biased symmetric assignment'' on $\cD$'s canonical instance.
Also, for a symmetric distribution $\cD \in \Delta_k$, we define its (scalar) \emph{marginal}
\begin{equation}\label{eqn:mu-def}
\mu(\cD) \eqdef \Exp_{\vecb \sim \cD}[b_1] = \cdots = \Exp_{\vecb \sim \cD}[b_k].
\end{equation}

In general, $\lambda_S$ is linear in $\cD$ and degree-$k$ in $p$, and $\mu$ is linear in $\cD$. For $\cD \in \Delta_k$, we provide explicit formulas for $\lambda_S$ and $\mu$ in \cref{sec:formulas}.

Roughly, \cite{CGSV21-boolean} states that $\mcsp(f_{S,k})$ is hard to approximate in the sketching setting if there exist distributions $\cD_N,\cD_Y \in \Delta_k$ such that (1) $\mu(\cD_N) = \mu(\cD_Y)$ and (2) $\cD_Y$'s canonical instance is highly satisfied by the trivial (all-ones) assignment but (3) $\cD_N$'s canonical instance is not well-satisfied by any ``biased symmetric assignment''. To be precise, for $\cD \in \Delta(\{-1,1\}^k)$, let
\begin{equation}\label{eqn:beta_S-gamma_S-def}
    \beta_S(\cD) \eqdef \sup_{p \in [0,1]} \lambda_S (\cD,p) \text{ and } \gamma_S(\cD) \eqdef \lambda_S(\cD,1),
\end{equation} and define 
\begin{equation}\label{eqn:alpha-cdn-cdy}
    \alpha(f_{S,k}) \eqdef \inf_{\cD_N, \cD_Y \in \Delta_k:~ \mu(\cD_N) = \mu(\cD_Y)} \left(\frac{\beta_S(\cD_N)}{\gamma_S(\cD_Y)}\right).
\end{equation}

For every symmetric function $f_{S,k}$, \cite{CGSV21-boolean} proves that $ \alpha(f_{S,k})$ is the optimal sketching approximation ratio for $\mcsp(f_{S,k})$:

\begin{theorem}[{Combines \cite[Theorem 2.10 and Lemma 2.14]{CGSV21-boolean}}]\label{thm:alpha-optimize-over-dy-dn}
Let $f_{S,k} : \{-1,1\}^k\to\{0,1\}$ be a symmetric function. Then for every $\epsilon > 0$, there is an linear sketching $(\alpha(f_{S,k})-\epsilon)$-approximation to $\mcsp(f_{S,k})$ in $O(\log n)$ space, but any sketching $(\alpha(f_{S,k})+\epsilon)$-approximation to $\mcsp(f_{S,k})$ requires $\Omega(\sqrt n)$ space.
\end{theorem}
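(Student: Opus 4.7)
The plan is to derive \cref{thm:alpha-optimize-over-dy-dn} from the general (non-symmetric) sketching dichotomy of \cite[Theorem 2.10]{CGSV21-boolean} via a symmetrization argument that reduces the distributional optimization to its symmetric counterpart. The general dichotomy asserts that for any $f : \{-1,1\}^k \to \{0,1\}$, the optimal sketching approximation ratio is $\alpha_{\mathrm{gen}}(f) \eqdef \inf \beta(\cD_N)/\gamma(\cD_Y)$, where the infimum runs over all pairs $\cD_N, \cD_Y \in \Delta(\{-1,1\}^k)$ with matching vector marginals $(\Exp_{\vecb \sim \cD_N}[b_i])_{i=1}^k = (\Exp_{\vecb \sim \cD_Y}[b_i])_{i=1}^k$, with $\gamma(\cD)$ the value of the all-ones assignment on $\Psi_\cD$ and $\beta(\cD) \eqdef \sup_{p \in [0,1]} \Exp_{\veca \sim \cD, \vecb \sim \bern(p)^k}[f(\veca \odot \vecb)]$. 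Once I show that $\alpha_{\mathrm{gen}}(f_{S,k}) = \alpha(f_{S,k})$ for symmetric $f_{S,k}$, both the $O(\log n)$-space $(\alpha(f_{S,k})-\epsilon)$-approximation and the $\Omega(\sqrt n)$-space lower bound for $(\alpha(f_{S,k})+\epsilon)$-approximation follow immediately from the two halves of the general theorem.

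The key step is to prove $\alpha_{\mathrm{gen}}(f_{S,k}) = \alpha(f_{S,k})$, i.e., that the general infimum is attained on symmetric pairs. Given any admissible pair $(\cD_N, \cD_Y)$, I would symmetrize coordinate-wise by averaging over the symmetric group $\sym_k$: define $\cD' \eqdef \frac{1}{k!} \sum_{\pi \in \sym_k} \pi \cdot \cD \in \Delta_k$. Permutation-invariance of $f_{S,k}$ together with linearity of $\lambda_S(\cdot, p)$ in $\cD$ gives $\lambda_S(\cD', p) = \lambda_S(\cD, p)$ for every $p$, and hence $\beta(\cD_N') = \beta(\cD_N)$ and $\gamma(\cD_Y') = \gamma(\cD_Y)$. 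Moreover, the marginal vector of $\cD'$ is the constant vector $(\mu(\cD'), \ldots, \mu(\cD'))$, so the vector-marginal-matching condition collapses to the scalar condition $\mu(\cD_N') = \mu(\cD_Y')$ of \eqref{eqn:mu-def}. When restricted to symmetric distributions, $\beta$ and $\gamma$ coincide with $\beta_S$ and $\gamma_S$ from \eqref{eqn:beta_S-gamma_S-def}, yielding $\alpha_{\mathrm{gen}}(f_{S,k}) = \alpha(f_{S,k})$ as defined in \eqref{eqn:alpha-cdn-cdy}.

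The main obstacle I anticipate is reconciling the precise form of $\beta$ in \cite{CGSV21-boolean}: if the general $\beta$ is defined using product biases $\vecp \in [0,1]^k$ rather than a uniform bias $p$, the invariance argument above must be strengthened to show that, on a symmetric $\cD'$ and symmetric $f_{S,k}$, the supremum over product biases collapses to the diagonal. I would handle this via a coordinate-pairwise exchange argument exploiting the joint symmetry of $\cD'$ and $f$: $\lambda_S(\cD', \vecp)$ is a symmetric polynomial in $\vecp$, so by relabeling two coordinates and passing to the closure of $[0,1]^k$, the supremum is witnessed on the diagonal. The remaining work is notational bookkeeping and verifying that the streaming lower-bound half of \cite[Theorem 2.10]{CGSV21-boolean} also descends through the symmetrization with the marginal condition simplified from a vector to a scalar equality.
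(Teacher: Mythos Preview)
The paper does not prove this theorem; it is quoted directly from \cite{CGSV21-boolean}, combining their general dichotomy (Theorem~2.10) with their symmetrization lemma (Lemma~2.14). Your proposal is precisely a reconstruction of that lemma: averaging $\cD_N$ and $\cD_Y$ over $\sym_k$ preserves $\gamma$ (since $f_{S,k}$ is permutation-invariant), preserves $\lambda_S(\cdot,p)$ for each scalar $p$, and collapses the vector-marginal constraint $\vecmu(\cD_N)=\vecmu(\cD_Y)$ to the scalar constraint $\mu(\cD_N')=\mu(\cD_Y')$. That part is correct and is exactly the intended argument.

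There is one real gap in your contingency plan for the vector-bias case. The claim that a symmetric polynomial in $\vecp\in[0,1]^k$ attains its supremum on the diagonal is false in general. For $k=2$ and $f=f_{\{2\},2}$, take the symmetric distribution $\cD'\in\Delta_2$ with $\cD'\langle 1\rangle=1$; then $\lambda(\cD',(p_1,p_2))=\tfrac12\, p_1(1-p_2)+\tfrac12\,(1-p_1)p_2$ is symmetric in $(p_1,p_2)$, equals $\tfrac12$ at $(p_1,p_2)=(1,0)$, but is at most $\tfrac14$ on the diagonal $p_1=p_2$. So a pairwise-exchange or relabeling argument based solely on symmetry cannot collapse the supremum to the diagonal. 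If the general $\beta$ in \cite{CGSV21-boolean} already uses a single scalar bias (as the notation of the present paper suggests), the obstacle never arises and your first two paragraphs complete the proof; if it uses a vector bias, you need a different argument and should consult \cite[Lemma~2.14]{CGSV21-boolean} directly rather than rely on the diagonal claim.
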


\begin{remark}
In the general case where $f:\{-1,1\}^k \to \{0,1\}$ is not symmetric, the approximability of $f$ is no longer characterized by \cref{eqn:alpha-cdn-cdy}. Instead, \cite{CGSV21-boolean} requires taking an infimum over \emph{all} (not necessarily symmetric) distributions $\cD_N,\cD_Y \in \Delta(\{-1,1\})^k$. Moreover, a general distribution $\cD \in \Delta(\{-1,1\})^k$ no longer has a single scalar marginal (as in \cref{eqn:mu-def}). Instead, we must consider a \emph{vector} marginal $\vecmu(\cD) = (\mu_1,\ldots,\mu_k)$ with $i$-th component $\mu_i = \Exp_{\vecb \sim \cD}[b_i]$; correspondingly, $\cD_N$ and $\cD_Y$ are required to satisfy the constraint $\vecmu(\cD_N) = \vecmu(\cD_Y)$. These issues motivate our focus on \emph{symmetric} functions in this paper. Since we need to consider only symmetric distributions in \cref{eqn:alpha-cdn-cdy}, $\cD_Y$ and $\cD_N$ are each parameterized by $k+1$ variables (as opposed to $2^k$ variables), and there is a single linear equality constraint (as opposed to $k$ constraints).
\end{remark}

\subsection{Formulations of the optimization problem}

In order to show that $\alpha(\twoand) = \frac49$, Chou, Golovnev, Sudan, and Velusamy~\cite[Example 1]{CGSV21-boolean} use the following reformulation of the optimization problem on the right hand side of \cref{eqn:alpha-cdn-cdy}. For a symmetric function $f_{S,k}$ and $\mu \in [-1,1]$, let
\begin{equation}\label{eqn:beta_Sk-gamma_Sk-def}
    \beta_{S,k}(\mu) = \inf_{\cD_N \in \Delta_k:~\mu(\cD_N) = \mu} \beta_S(\cD_N) \text{ and } \gamma_{S,k}(\mu) = \sup_{\cD_Y \in \Delta_k:~\mu(\cD_Y) = \mu} \gamma_S(\cD_Y);
\end{equation}
then
\begin{equation}\label{eqn:alpha-optimize-over-mu}
    \alpha(f_{S,k}) = \inf_{\mu \in [-1,1]} \left(\frac{\beta_{S,k}(\mu)}{\gamma_{S,k}(\mu)}\right).
\end{equation}

The optimization problem on the right-hand side of \cref{eqn:alpha-optimize-over-mu} appears simpler than that of \cref{eqn:alpha-cdn-cdy} because it is univariate, but there is a hidden difficulty: Finding an explicit solution requires giving explicit formulas for $\beta_{S,k}(\mu)$ and $\gamma_{S,k}(\mu)$. In the case of $\twoand = f_{\{2\},2}$, Chou, Golovnev, Sudan, and Velusamy~\cite{CGSV21-boolean} show that $\gamma_{\{2\},2}(\mu)$ is an explicit linear function of $\mu$; maximize the quadratic $\lambda_{\{2\}}(\cD_N,p)$ over $p \in [0,1]$ to find $\beta_{\{2\}}(\cD_N)$; and then minimize $\beta_{\{2\}}(\cD_N)$ given $\mu(\cD_N) = \mu$ to find $\beta_{\{2\},2}(\mu)$. However, while for general symmetric functions $f_{S,k}$ we can describe $\gamma_{S,k}(\mu)$ as an explicit piecewise linear function of $\mu$ (see \cref{lemma:gamma-formula} below), we do not know how to find closed forms for $\beta_{S,k}(\mu)$ even for $\threeand$. Thus, in this work we introduce a different formulation of the optimization problem:

\begin{equation}\label{eqn:alpha-optimize-over-dn}
    \alpha(f_{S,k}) =  \inf_{\cD_N \in \Delta_k} \left(\frac{\beta_S(\cD_N)}{\gamma_{S,k}(\mu(\cD_N))}\right).
\end{equation}

This reformulation is valid because
\begin{equation*}
    \alpha(f_{S,k}) = \inf_{\mu \in [-1,1],\cD_N \in \Delta_k:~\mu(\cD_N) = \mu} \left(\frac{ \beta_S(\cD_N)}{\gamma_{S,k}(\mu)}\right) = \inf_{\cD_N \in \Delta_k} \left(\frac{\beta_S(\cD_N)}{\gamma_{S,k}(\mu(\cD_N))}\right).
\end{equation*}

We view optimizing directly over $\cD_N \in \Delta_k$ as an important conceptual switch. In particular, our formulation emphasizes the calculation of $\beta_{S}(\cD_N)$ as the centrally difficult feature, yet we can still take advantage of the relative simplicity of calculating $\gamma_{S,k}(\mu)$.

\subsection{Our contribution: The max-min method}\label{sec:max-min}

\emph{A priori}, solving the optimization problem on the right-hand side of \cref{eqn:alpha-optimize-over-dn} still requires calculating $\beta_S(\cD_N)$, which involves maximizing a degree-$k$ polynomial. To get around this difficulty, we have made a key discovery, which was not noticed by Chou, Golovnev, Sudan, and Velusamy~\cite{CGSV21-boolean} even in the $\twoand$ case. Let $\cD_N^*$ minimize the right-hand side of \cref{eqn:alpha-optimize-over-dn}, and $p^*$ maximize $\lambda_S(\cD_N^*,\cdot)$. After substituting $ \beta_S(\cD) = \sup_{p \in [0,1]} \lambda_S (\cD,p)$ in \cref{eqn:alpha-optimize-over-dn}, and applying the max-min inequality, we get
\begin{equation}
\begin{aligned}
    \alpha(f_{S,k})  = \inf_{\cD_N \in \Delta_k}\sup_{p\in[0,1]} \left(\frac{\lambda_S(\cD_N,p)}{\gamma_{S,k}(\mu(\cD_N))}\right)
    &\geq \sup_{p\in[0,1]}  \inf_{\cD_N \in \Delta_k} \left(\frac{\lambda_S(\cD_N,p)}{\gamma_{S,k}(\mu(\cD_N))}\right)
    \\
    & \ge \inf_{\cD_N \in \Delta_k} \left(\frac{\lambda_S(\cD_N,p^*)}{\gamma_{S,k}(\mu(\cD_N))}\right)\, .\label{eqn:max-min}
\end{aligned}
\end{equation}

Given $p^*$, the right-hand side of \cref{eqn:max-min} is relatively easy to calculate, being a ratio of a linear and piecewise linear function of $\cD_N$. Our discovery is that, in a wide variety of cases, the quantity on the right-hand side of \cref{eqn:max-min} \emph{equals} $\alpha(f_{S,k})$; that is, $(\cD_N^*,p^*)$ is a \emph{saddle point} of $\frac{\lambda_S(\cD_N,p)}{\gamma_{S,k}(\mu(\cD_N))}$.\footnote{This term comes from the optimization literature; such points are also said to satisfy the ``strong max-min property'' (see, e.g., \cite[pp. 115, 238]{BV04}). The saddle-point property is guaranteed by von Neumann's minimax theorem for functions which are concave and convex in the first and second arguments, respectively, but this theorem and the generalizations we are aware of do not apply even to $\threeand$.}

This yields a novel technique, which we call the ``max-min method'', for finding a closed form for $\alpha(f_{S,k})$. First, we guess $\cD_N^*$ and $p^*$, and then, we show analytically that $\frac{\lambda_S(\cD_N,p)}{\gamma_{S,k}(\mu(\cD_N))}$ has a saddle point at $(\cD_N^*,p^*)$ and that $\lambda_S(\cD_N,p)$ is maximized at $p^*$. These imply that $\frac{\lambda_S(\cD_N^*,p^*)}{\gamma_{S,k}(\mu(\cD_N^*))}$ is a lower and upper bound on $\alpha(f_{S,k})$, respectively. For instance, in \cref{sec:kand-analysis}, in order to give a closed form for $\alpha(\kand)$ for odd $k$ (i.e., the odd case of \cref{thm:kand-approximability}), we guess $\cD_N^*\langle \frac{k+1}2 \rangle=1$ and $p^* = \frac{k+1}{2k}$ (by using Mathematica for small cases), and then check the saddle-point and maximization conditions in two separate lemmas (\cref{lemma:kand-lb,lemma:kand-ub}, respectively). Then, we show that $\alpha(\kand) = \alpha'_k$ by analyzing the right hand side of the appropriate instantiation of \cref{eqn:max-min}. We use similar techniques for $\kand$ for even $k$ (also \cref{thm:kand-approximability}) and for various other cases in \cref{sec:other-analysis,sec:k-1-k-analysis,sec:k+1/2-analysis}.

In all of these cases, the $\cD_N^*$ we construct is supported on at most two distinct Hamming weights, which is the property which makes finding $\cD_N^*$ tractable (using computer assistance). However, this technique is not a ``silver bullet'': it is not the case that the sketching approximability of every symmetric Boolean CSP can be exactly calculated by finding the optimal $\cD_N^*$ supported on two elements and using the max-min method. Indeed, (as mentioned in \cref{sec:other-analysis}) we verify using computer assistance that this is not the case for $f_{\{3\},4}$.

Finally, we remark that the saddle-point property is precisely what defines the value $p^*$ required for our simple classical algorithm for outputting approximately optimal assignments for $\mcsp(f_{S,k})$ where $f_{S,k} = \Th^i_k$ is a threshold function (see \cref{thm:thresh-bias-output-alg}).

\subsection{Streaming lower bounds}\label{sec:cgsv-streaming-framework}

Chou, Golovnev, Sudan, and Velusamy~\cite{CGSV21-boolean} also define the following condition on pairs $(\cD_N,\cD_Y)$, stronger than $\mu(\cD_N)=\mu(\cD_Y)$, which implies hardness of $\gbmaxf$ for \emph{streaming} algorithms:

\begin{definition}[Padded one-wise pairs, {\cite[\S2.3]{CGSV21-boolean}} (symmetric case)]\label{def:padded-onewise}
	A pair of distributions $(\cD_Y,\cD_N) \in \Delta_k$ forms a \emph{padded one-wise pair} if there exists $\tau\in[0,1]$ and distributions $\cD_0,\cD_Y',\cD_N' \in \Delta_k$ such that (1) $\mu(\cD_Y')=\mu(\cD_N')=0$ and (2) $\cD_Y = \tau \cD_0 + (1-\tau)\cD'_Y$ and $\cD_N = \tau \cD_0 + (1-\tau)\cD'_N$.
\end{definition}

\begin{theorem}[Streaming lower bound for padded one-wise pairs, {\cite[Theorem 2.11]{CGSV21-boolean}} (symmetric case)]\label{thm:padded-onewise-streaming}
Let $(\cD_Y,\cD_N)$ be a padded one-wise pair. Then for every $\epsilon > 0$, $(\beta_S(\cD_Y)+\epsilon,\gamma_S(\cD_N)-\epsilon)\dmaxf$ requires $\Omega(\sqrt{n})$ space in the streaming setting.
\end{theorem}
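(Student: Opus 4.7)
I would prove this via a reduction from the Boolean Hidden Hypermatching~\cite{KKS15} template that Chou, Golovnev, Sudan, and Velusamy~\cite{CGSV21-boolean} extended to the padded one-wise setting. The source of hardness is the $\IRMD$ (iterated randomized masking detection) problem: a $T$-round one-way sequential communication problem in which a hidden string $\vecz \in \{\pm 1\}^n$ is fixed, and in each round a player receives a $k$-hypermatching on $[n]$ together with labels that are either the $k$-XOR evaluations of $\vecz$ on its hyperedges (``plant'' case) or uniformly random bits (``random'' case). A hybrid argument over rounds reduces $\IRMD$ to the single-round BHH problem and thereby inherits its $\Omega(\sqrt n)$ streaming lower bound.

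Given the padded one-wise pair decompositions $\cD_Y = \tau\cD_0 + (1-\tau)\cD'_Y$ and $\cD_N = \tau\cD_0 + (1-\tau)\cD'_N$ with $\mu(\cD'_Y)=\mu(\cD'_N)=0$, I would convert an $\IRMD$ transcript into a stream of CSP constraints in two interleaved blocks. A $\tau$-fraction of constraints is drawn from $\cD_0$ on fresh random $k$-tuples (identical in both cases). The remaining $(1-\tau)$-fraction is produced from $\IRMD$: each hyperedge becomes a constraint on fresh variables, and the label bits feed into a fixed mask-to-negation encoding that samples from $\cD'_N$ in the plant case and from $\cD'_Y$ in the random case. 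The hypothesis $\mu(\cD'_Y) = \mu(\cD'_N) = 0$ is exactly what makes this simulation possible: since both distributions are one-wise uniform, the same local encoding applied to random bits faithfully produces either one.

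In the plant case (which will serve as the \textsc{Yes} streaming instance), the hidden string $\vecz$ is an explicit assignment on which each constraint behaves, after XOR-encoding, like an all-ones evaluation on the sampled negation pattern; the per-constraint expected value is therefore $\lambda_S(\tau\cD_0+(1-\tau)\cD'_N,1) = \gamma_S(\cD_N)$ by linearity of $\lambda_S$ in the distribution, and a Chernoff bound plus a union bound over rounds yields realized value $\geq \gamma_S(\cD_N)-\epsilon$ with high probability. In the random case (the \textsc{No} instance), each constraint is effectively drawn from $\cD_Y = \tau\cD_0+(1-\tau)\cD'_Y$; for any fixed assignment $\vecsigma \in \{\pm 1\}^n$, the random variable embedding makes $\vecsigma$ act as a $p$-biased symmetric assignment for some $p$ determined by the Hamming weight of $\vecsigma$, so its expected value is $\lambda_S(\cD_Y,p)\leq \beta_S(\cD_Y)$. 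A Chernoff bound followed by a union bound over the $2^n$ candidate assignments then pushes the realized value of the \textsc{No} instance to at most $\beta_S(\cD_Y)+\epsilon$ with high probability. An $o(\sqrt n)$-space streaming distinguisher for this gap would therefore yield an $o(\sqrt n)$-communication $\IRMD$ protocol, contradicting Step 1.

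\textbf{Main obstacle.} The trickiest step is the \textsc{No}-side union bound over $2^n$ assignments, which forces the total constraint count to scale subquadratically in $n$ and is ultimately what produces the $\sqrt{n}$ barrier (rather than an $n$ one). A subtler point is the mask-to-negation encoding, which must faithfully sample from $\cD'_Y$ given random bits and from $\cD'_N$ given planted XOR bits using the \emph{same} rule on the streaming side; this is precisely what one-wise uniformity $\mu(\cD'_Y)=\mu(\cD'_N)=0$ buys, via the standard BHH embedding. Finally, and contrary to the sketching-style pairing $(\beta_S(\cD_N),\gamma_S(\cD_Y))$ arising in \cref{eqn:alpha-cdn-cdy}, one must identify \textsc{Yes} with the plant case run against $\cD_N$ and \textsc{No} with the random case run against $\cD_Y$, so that the thresholds in the statement $\gamma_S(\cD_N)-\epsilon$ and $\beta_S(\cD_Y)+\epsilon$ correspond to the planted value and the best biased assignment, respectively; everything else is linearity of $\lambda_S$ and bookkeeping within \cref{def:padded-onewise,eqn:beta_S-gamma_S-def}.
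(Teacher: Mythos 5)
A threshold remark first: the paper itself does not prove \cref{thm:padded-onewise-streaming}; it is imported verbatim (in the symmetric case) from \cite[Theorem 2.11]{CGSV21-boolean}, so there is no in-paper argument to compare yours against, and any proof must reconstruct the communication-complexity machinery of that work. Your outline has the right overall architecture for that machinery --- a $T$-round one-way masked-detection problem, a streaming-to-communication reduction, a planted-case value bound of $\gamma_S(\cdot)-\epsilon$ via Chernoff, and a random-case bound of $\beta_S(\cdot)+\epsilon$ via concentration plus a union bound over $2^n$ assignments --- but as written it has two substantive inaccuracies and one large black box.

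First, the base hard problem is mis-specified. For a general padded one-wise pair the two cases are not ``$k$-XOR evaluations of $\vecz$'' versus ``uniformly random bits''; rather, in both cases the player receives $k$-bit masks $\vecb\odot\vecz|_{\vecj}$, with $\vecb$ drawn from $\cD_Y$ in one case and from $\cD_N$ in the other, and indistinguishability rests on the common padding component $\tau\cD_0$ together with the residuals $\cD'_Y,\cD'_N$ both having zero marginals. The XOR-vs-uniform formulation is the special case relevant to \m[$2$XOR] in \cite{KKS15}, and your claim that ``the same local encoding applied to random bits faithfully produces either one'' is not how the one-wise condition enters. Second, attributing the $\Omega(\sqrt n)$ bound to the union bound over $2^n$ assignments is wrong: that union bound only forces the total number of constraints to be $\Theta(n/\epsilon^2)$ so that concentration beats $2^n$; the $\sqrt n$ comes from the Fourier-analytic one-way communication lower bound for the hidden-(hyper)matching-type detection problem, which your sketch defers entirely to an unproved ``hybrid argument'' --- and that is where essentially all the work lies. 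Third, you correctly noticed that the statement pairs $\beta_S$ with $\cD_Y$ and $\gamma_S$ with $\cD_N$; this appears to be a typo (compare \cref{thm:cgsv-streaming-failure-3and} and \cref{obs:cgsv-streaming-3and-lb}, which use the ratio $\beta_S(\cD_N)/\gamma_S(\cD_Y)$, and \cite[Theorem 2.11]{CGSV21-boolean} itself). Since \cref{def:padded-onewise} is symmetric in the two distributions the literal statement is also derivable, so your role-swap is not an error, but the intended version has the planted case drawn from $\cD_Y$ and the random case from $\cD_N$, and you should prove that one.
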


We prove that \cref{thm:padded-onewise-streaming} fails to rule out streaming $(\frac29+\epsilon)$-approximations to $\maxthreeand$ in the following sense:

\begin{theorem}\label{thm:cgsv-streaming-failure-3and}
There is no infinite sequence $(\cD_Y^{(1)},\cD_N^{(1)}),(\cD_Y^{(2)},\cD_N^{(2)}),\ldots$ of padded one-wise pairs on $\Delta_3$ such that \[ \lim_{t \to \infty} \frac{\beta_{\{3\}}(\cD_N^{(t)})}{\gamma_{\{3\}}(\cD_Y^{(t)})} = \frac29. \]
\end{theorem}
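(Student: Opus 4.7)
The plan is to combine a compactness argument with the saddle-point structure used to prove \cref{thm:kand-approximability} for $k=3$: I reduce the sequence statement to the impossibility of a single padded one-wise pair attaining ratio exactly $2/9$, characterize the unique pair achieving this ratio among all matching-marginal pairs, and verify that this pair is not padded one-wise.

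First, I would reformulate the padded one-wise condition in the symmetric case: $(\cD_Y,\cD_N)\in\Delta_3\times\Delta_3$ is padded one-wise if and only if there exists $h:\{0,1,2,3\}\to\R_{\geq 0}$ (corresponding to the weight distribution of $\tau\cD_0$) with $h(w)\leq\min(\cD_Y\langle w\rangle,\cD_N\langle w\rangle)$ for every $w$ and $\sum_w h(w)(2w-3)=3\mu(\cD_Y)=3\mu(\cD_N)$. This presents the padded set as the projection of a compact witness set, so it is closed in the compact space $\Delta_3\times\Delta_3$. Since $\beta_{\{3\}}$ is continuous, $\gamma_{\{3\}}$ is linear, and $\beta_{\{3\}}(\cD_N)\geq\lambda_{\{3\}}(\cD_N,1/2)=\rho(\threeand)=1/8$ uniformly, any sequence of padded pairs with ratios tending to $2/9$ must have $\gamma_{\{3\}}(\cD_Y^{(t)})$ bounded away from $0$ (by roughly $9/16$). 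By compactness and closedness, we may pass to a subsequential limit that is itself a padded one-wise pair attaining ratio exactly $2/9$, reducing the theorem to ruling out any single such pair.

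For the characterization step, I would compute $\gamma_{\{3\},3}(\mu)=(\mu+1)/2$ from the linear program for $\max y_3$ under fixed marginal (cf.~\cref{lemma:gamma-formula}), so that the saddle-point analysis of \cref{sec:kand-analysis} with $p^*=2/3$ yields the chain
\[
\beta_{\{3\}}(\cD_N)\geq\lambda_{\{3\}}(\cD_N,\tfrac{2}{3})=\tfrac{1}{27}(n_0+2n_1+4n_2+8n_3)\geq\tfrac{1}{9}(\mu(\cD_N)+1)=\tfrac{2}{9}\gamma_{\{3\},3}(\mu(\cD_N))\geq\tfrac{2}{9}\gamma_{\{3\}}(\cD_Y).
\]
The middle inequality simplifies to $n_0+2n_3\geq 0$, with equality iff $n_0=n_3=0$. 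Achieving ratio $=2/9$ forces equality throughout: $n_0=n_3=0$; then $\left.\frac{\partial}{\partial p}\lambda_{\{3\}}(\cD_N,p)\right|_{p=2/3}=-n_1/3$ in this regime, so $p=2/3$ is a maximizer only if $n_1=0$, yielding $\cD_N=(0,0,1,0)$ with $\mu=1/3$; finally, $\cD_Y$ must attain $\gamma_{\{3\},3}(1/3)=2/3$, whose unique maximizer is $(1/3,0,0,2/3)$.

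It remains only to verify that this unique candidate pair $((1/3,0,0,2/3),(0,0,1,0))$ is not padded one-wise: $\min(\cD_Y\langle w\rangle,\cD_N\langle w\rangle)=0$ for every $w\in\{0,1,2,3\}$, forcing $h\equiv 0$, so $\sum_w h(w)(2w-3)=0\neq 1=3\mu$, contradicting the reformulated padded condition. The main technical step will be the characterization, since it requires carefully tracking which inequalities in the saddle-point chain can be simultaneously tight; the compactness reduction and the final padded-condition check are then routine.
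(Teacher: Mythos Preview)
Your proposal is correct, and the overall architecture (compactness reduction to a single pair, unique characterization of that pair, verification that it is not padded one-wise) matches the paper. However, the key uniqueness step is genuinely different and in fact more elementary than the paper's.

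The paper explicitly notes that with the fixed evaluation point $p^*=\tfrac23$, the ratio $\lambda_{\{3\}}(\cdot,\tfrac23)/\gamma_{\{3\},3}(\mu(\cdot))$ is \emph{not} uniquely minimized at $(0,0,1,0)$ --- indeed, as your own computation shows, equality holds along the whole segment $n_0=n_3=0$. The paper's remedy is to replace $p^*$ by the data-dependent point $p(\cD_N)=\tfrac13 n_1+\tfrac23 n_2+n_3$ and then invoke Mathematica to verify that the resulting expression is uniquely minimized at $(0,0,1,0)$ (this is their \cref{lemma:3and-unique-minimum}). Your remedy is different: you keep $p^*=\tfrac23$ but observe that tightness of the \emph{first} inequality $\beta_{\{3\}}(\cD_N)=\lambda_{\{3\}}(\cD_N,\tfrac23)$ forces $\tfrac23$ to be an interior maximizer of $\lambda_{\{3\}}(\cD_N,\cdot)$, whence the first-order condition $-n_1/3=0$ supplies the missing equation. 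This is a clean by-hand argument that avoids computer verification entirely; the paper's variable-$p$ trick is perhaps more portable to other predicates, but here your route is strictly simpler. The compactness reductions are also packaged differently (you project from an explicit witness set for the sub-distribution $h=\tau\cD_0$; the paper takes the continuous image of $[0,1]\times D\times M'$), but these are equivalent, and your lower bound $\beta_{\{3\}}\geq 1/8$ handles the vanishing-denominator issue slightly more cleanly than the paper's appeal to continuity into $\R\cup\{\infty\}$.
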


\cref{thm:cgsv-streaming-failure-3and} is proven formally in \cref{sec:cgsv-streaming-failure-3and}; here is a proof outline:

\begin{proof}[Proof outline]
As discussed in \cref{sec:max-min}, since $k=3$ is odd, to prove \cref{thm:kand-approximability} we show, using the max-min method, that $\cD_N^* = (0,0,1,0)$ minimizes $\frac{\beta_{\{3\}}(\cdot)}{\gamma_{\{3\},3}(\mu(\cdot))}$. We can show that the corresponding $\gamma_{\{3\},3}$ value is achieved by $\cD_Y^* = (\frac13,0,0,\frac23)$. In particular, $(\cD_N^*,\cD_Y^*)$ are not a padded one-wise pair.

We can show that the minimizer of $\gamma_{\{3\}}$ for a particular $\mu$ is in general unique. Hence, it suffices to furthermore show that $\cD_N^*$ is the \emph{unique} minimizer of $\frac{\beta_{\{3\}}(\cdot)}{\gamma_{\{3\},3}(\mu(\cdot))}$. For this purpose, the max-min method is not sufficient because $\frac{\lambda_{\{3\}}(\cdot,p^*)}{\gamma_{\{3\},3}(\mu(\cdot))}$ is not uniquely minimized at $\cD_N^*$ (where we chose $p^* = \frac23$). Intuitively, this is because $p^*$ is not a good enough estimate for the maximizer of $\lambda_{\{3\}}(\cD_N,\cdot)$. To remedy this, we observe that $\lambda_{\{3\}}((1,0,0,0),\cdot),\lambda_{\{3\}}((0,1,0,0),\cdot)$, $\lambda_{\{3\}}((0,0,1,0),\cdot)$ and $\lambda_{\{3\}}((0,0,0,1),\cdot)$ are minimized at $0,\frac13,\frac23$, and $1$, respectively. Hence, we instead lower-bound $\lambda_{\{3\}}(\cD_N,\cdot)$ by evaluating at $\frac13 \cD_N\langle 1 \rangle + \frac23 \cD_N\langle 2 \rangle + \cD_N\langle 3 \rangle$, which does suffice to prove the uniqueness of $\cD_N^*$. The theorem then follows from continuity arguments.
\end{proof}

Yet we still can achieve decent bounds using padded one-wise pairs:

\begin{observation}\label{obs:cgsv-streaming-3and-lb}
The padded one-wise pair $\cD_N=(0,0.45,0.45,0.1),\cD_Y=(0.45,0,0,0.55)$ (discovered by numerical search) \emph{does} prove a streaming approximability upper bound of $\approx .2362$ for $\threeand$, which is still quite close to $\alpha(\threeand)=\frac29$.
\end{observation}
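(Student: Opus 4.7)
The plan is to verify the two claims by explicit computation: (i) that $(\cD_Y, \cD_N)$ forms a padded one-wise pair per \cref{def:padded-onewise}, so that \cref{thm:padded-onewise-streaming} yields a streaming lower bound of the claimed ratio, and (ii) that $\beta_{\{3\}}(\cD_N)/\gamma_{\{3\}}(\cD_Y) \approx 0.2362$.

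For the padded one-wise verification, I first use the symmetric-marginal formula $\mu(\cD) = \frac{1}{k}\sum_{i=0}^k (2i-k)\cD\langle i\rangle$ to compute $\mu(\cD_Y) = \mu(\cD_N) = 0.1$. Noting that the only Hamming weight at which both $\cD_Y$ and $\cD_N$ place positive mass is $i=3$, with common value $0.1$, I propose the decomposition $\tau = 0.1$, $\cD_0 = (0,0,0,1)$, $\cD_Y' = (0.5, 0, 0, 0.5)$, and $\cD_N' = (0, 0.5, 0.5, 0)$. The mixture equations $\cD_Y = \tau \cD_0 + (1-\tau)\cD_Y'$ and $\cD_N = \tau \cD_0 + (1-\tau)\cD_N'$ then hold by inspection, and $\mu(\cD_Y') = \mu(\cD_N') = 0$ follows immediately from the formula (both $\cD_Y'$ and $\cD_N'$ are symmetric about Hamming weight $3/2$).

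For the ratio, I will use the fact that $f_{\{3\},3}(\veca \odot \vecb) = 1$ iff $\veca = \vecb$, which (after unfolding $\cD(\vecx) = \cD\langle\wt(\vecx)\rangle/\binom{3}{\wt(\vecx)}$) gives $\lambda_{\{3\}}(\cD, p) = \sum_{i=0}^3 \cD\langle i\rangle\, p^i(1-p)^{3-i}$. Thus $\gamma_{\{3\}}(\cD_Y) = \lambda_{\{3\}}(\cD_Y, 1) = \cD_Y\langle 3\rangle = 0.55$. Meanwhile, since $0.45\,p(1-p)^2 + 0.45\,p^2(1-p) = 0.45\,p(1-p)$, the function $\lambda_{\{3\}}(\cD_N, p)$ simplifies to the cubic $g(p) = 0.45\,p - 0.45\,p^2 + 0.1\,p^3$. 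Solving $g'(p)=0$ on $[0,1]$ yields the unique maximizer $p^\ast = (3-\sqrt 3)/2$; substituting and simplifying (the rational and $\sqrt 3$ parts each telescope) gives $\beta_{\{3\}}(\cD_N) = g(p^\ast) = \tfrac{3\sqrt 3}{40}$. The final ratio is therefore $\tfrac{3\sqrt{3}/40}{11/20} = \tfrac{3\sqrt{3}}{22} \approx 0.2362$.

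Every step is a finite explicit calculation, so there is no substantive obstacle; the observation is essentially a verification that this particular pair, discovered by numerical search, is valid and yields the claimed ratio. The only mildly nontrivial step is the cubic maximization, which yields the elegant closed form $3\sqrt{3}/22$ for the streaming-approximability upper bound---a numerical gap of only about $0.014$ above $\alpha(\threeand) = 2/9$, even though \cref{thm:cgsv-streaming-failure-3and} shows that no sequence of padded one-wise pairs can close this gap entirely.
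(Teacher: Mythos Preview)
Your proposal is correct. The paper states this observation without proof, so your explicit verification of the padded one-wise decomposition (with $\tau=0.1$, $\cD_0=(0,0,0,1)$, $\cD_Y'=(\tfrac12,0,0,\tfrac12)$, $\cD_N'=(0,\tfrac12,\tfrac12,0)$) and the closed-form computation $\beta_{\{3\}}(\cD_N)/\gamma_{\{3\}}(\cD_Y)=\tfrac{3\sqrt3}{22}\approx 0.2362$ fills in exactly the omitted details.
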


\section{Formulas for $\mu$, $\lambda_S$, and $\gamma_{S,k}$}\label{sec:formulas}

In this section, we give explicit formulas for the quantities $\mu(\cD)$, $\lambda_S(\cD,p)$, and $\gamma_{S,k}(\mu)$ (defined in \cref{eqn:mu-def,eqn:lambda-def,eqn:beta_Sk-gamma_Sk-def}, respectively) which will be used throughout the rest of the paper. For $i \in [k]$, let $\epsilon_{i,k} \eqdef -1+\frac{2i}k$.

\begin{lemma}\label{lemma:mu-formula}
For any $\cD \in \Delta_k$, \[ \mu(\cD) = \sum_{i=0}^k \epsilon_{i,k} \,\cD\langle i \rangle. \]
\end{lemma}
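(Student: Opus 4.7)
The plan is to compute $\mu(\cD)$ directly from its definition by exploiting the symmetry of $\cD$ to rewrite the single-coordinate expectation as an average over all coordinates, which then neatly decomposes by Hamming weight.

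First I would recall that by definition $\mu(\cD) = \Exp_{\vecb \sim \cD}[b_1]$, and that by the symmetry of $\cD \in \Delta_k$ the expectations $\Exp_{\vecb \sim \cD}[b_j]$ agree for all $j \in [k]$. This means we can write
\[
\mu(\cD) \;=\; \frac{1}{k}\sum_{j=1}^k \Exp_{\vecb \sim \cD}[b_j] \;=\; \frac{1}{k}\,\Exp_{\vecb \sim \cD}\!\left[\sum_{j=1}^k b_j\right].
\]
The advantage of this reformulation is that $\sum_{j=1}^k b_j$ depends only on the Hamming weight of $\vecb$: a vector $\vecb \in \{-1,1\}^k$ with $\wt(\vecb) = i$ (that is, $i$ coordinates equal to $+1$ and $k-i$ coordinates equal to $-1$) satisfies $\sum_{j=1}^k b_j = i - (k-i) = 2i - k$.

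Next I would partition the expectation according to Hamming weight. Grouping the contributions of all $\vecb$ with $\wt(\vecb) = i$ and using $\cD\langle i\rangle = \sum_{\vecb:\wt(\vecb)=i} \cD(\vecb)$, we obtain
\[
\Exp_{\vecb \sim \cD}\!\left[\sum_{j=1}^k b_j\right] \;=\; \sum_{i=0}^k \cD\langle i\rangle \,(2i - k).
\]
Dividing by $k$ yields
\[
\mu(\cD) \;=\; \sum_{i=0}^k \cD\langle i\rangle \left(\frac{2i}{k} - 1\right) \;=\; \sum_{i=0}^k \epsilon_{i,k}\, \cD\langle i\rangle,
\]
which is exactly the claimed identity.

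There is no real obstacle here: the entire argument is a one-line manipulation once one invokes symmetry to pass from a single coordinate to the average of all coordinates. The only thing to be careful about is the bookkeeping of the sign convention (Hamming weight is counting $+1$'s, not $-1$'s, per the footnote in \cref{sec:kand-overview}), which makes the sum $\sum_j b_j = 2i - k$ rather than $k - 2i$ and thus yields $\epsilon_{i,k} = -1 + 2i/k$ with the correct sign.
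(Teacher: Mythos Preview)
Your proof is correct and follows essentially the same approach as the paper's: both exploit the symmetry of $\cD$ to compute the contribution of weight-$i$ vectors to $\mu(\cD)$ as $\cD\langle i\rangle \cdot \frac{1}{k}(i - (k-i)) = \epsilon_{i,k}\,\cD\langle i\rangle$. The only cosmetic difference is that you make the averaging over coordinates explicit before partitioning by weight, whereas the paper applies the symmetry within each weight class directly to $\Exp[b_1]$.
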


\begin{proof}[Proof of \cref{lemma:mu-formula}]
By definition (\cref{eqn:mu-def}), $\mu(\cD) = \Exp_{\vecb\sim\cD}[b_1]$. We use linearity of expectation; the contribution of weight-$i$ vectors to $\mu(\cD)$ is $\cD\langle i \rangle \cdot \frac1k (i \cdot 1 + (k-i) \cdot (-1)) = \epsilon_{i,k} \,\cD\langle i \rangle$.
\end{proof}

\begin{lemma}\label{lemma:lambda-formula}
For any $\cD \in \Delta_k$ and $p \in [0,1]$, we have \[ \lambda_S(\cD,p) = \sum_{s \in S} \sum_{i=0}^k \left(\sum_{j=\max\{0,s-(k-i)\}}^{\min\{i,s\}} {i \choose j} {k-i \choose s-j} q^{s+i-2j} p^{k-s-i+2j} \right) \cD\langle i \rangle \] where $q \eqdef 1-p$.
\end{lemma}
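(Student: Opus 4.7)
The plan is to unfold the definition of $\lambda_S$ and reduce the computation to the distribution of $\wt(\veca \odot \vecb)$. Since $f_{S,k}(\vecx) = \sum_{s \in S} \1[\wt(\vecx) = s]$, linearity of expectation gives
\[
\lambda_S(\cD,p) = \sum_{s \in S} \Pr_{\veca \sim \cD,\,\vecb \sim \bern(p)^k}\bigl[\wt(\veca \odot \vecb) = s\bigr].
\]
I would then condition on $\wt(\veca) = i$, which has probability $\cD\langle i \rangle$, and argue that the distribution of $\wt(\veca \odot \vecb)$ depends only on $\wt(\veca)$ (not on the particular $\veca$) since $\vecb$ is coordinate-symmetric.

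For the conditional computation, fix any $\veca \in \{-1,1\}^k$ with $\wt(\veca) = i$. Split $[k]$ into the set $P$ of the $i$ coordinates where $a_\ell = 1$ and the set $N$ of the $k-i$ coordinates where $a_\ell = -1$. For $\ell \in P$, $(\veca \odot \vecb)_\ell = b_\ell$, which equals $1$ with probability $p$; for $\ell \in N$, $(\veca \odot \vecb)_\ell = -b_\ell$, which equals $1$ with probability $q = 1-p$. Thus $\wt(\veca \odot \vecb) = X + Y$ where $X \sim \mathrm{Bin}(i,p)$ and $Y \sim \mathrm{Bin}(k-i,q)$ are independent.

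The rest is a convolution bookkeeping step: I would write
\[
\Pr[X + Y = s] = \sum_{j} \Pr[X = j]\,\Pr[Y = s-j] = \sum_{j} \binom{i}{j} p^j q^{i-j}\binom{k-i}{s-j} q^{s-j} p^{k-i-(s-j)},
\]
where $j$ ranges over $\max\{0,s-(k-i)\} \le j \le \min\{i,s\}$ so that both binomial coefficients are nonzero. Collecting the exponents of $p$ and $q$ gives $p^{k-s-i+2j} q^{s+i-2j}$, which matches the claimed formula. Summing against $\cD\langle i \rangle$ and over $s \in S$ yields the statement.

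The whole argument is essentially routine, and I do not expect a substantive obstacle beyond careful exponent bookkeeping in the convolution sum; the only conceptual point is recognizing the decomposition $\wt(\veca \odot \vecb) = X + Y$ into two independent binomials with parameters governed by the positive and negative supports of $\veca$.
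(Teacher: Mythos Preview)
Your proposal is correct and follows essentially the same approach as the paper: both condition on $\wt(\veca)=i$, case on $j$ (the number of coordinates where both $\veca$ and $\vecb$ are $1$, i.e., $|A\cap B|$ in the paper's notation, or the value of your $X$), and arrive at the identical term $\binom{i}{j}\binom{k-i}{s-j}q^{s+i-2j}p^{k-s-i+2j}$ with the same range for $j$. The only cosmetic difference is that you phrase the conditional computation as a convolution of two independent binomials, whereas the paper counts configurations of $\supp(\vecb)$ via set intersections; the underlying arithmetic is the same.
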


\begin{proof}
By linearity of expectation and symmetry, it suffices to fix $s$ and $i$ and calculate, given a fixed string $\veca = (a_1,\ldots,a_k)$ of Hamming weight $i$ and a random string $\vecb = (b_1,\ldots,b_k) \sim \bern(p)^k$, the probability of the event $\wt(\veca \odot \vecb) = s$.

Let $A = \supp(\veca) = \{t \in [k] : a_t = 1\}$ and similarly $B = \supp(\vecb)$. We have $|A| = i$ and \[ s = \wt(\veca \odot \vecb) = |A \cap B| + |([k] \setminus A) \cap ([k] \setminus B)|. \] Let $j = |A \cap B|$, and consider cases based on $j$.

Given fixed $j$, we must have $|A \cap B| = j$ and $|([k] \setminus A) \cap ([k] \setminus B)| = s-j$. Thus if $j$ satisfies $j \leq i, s-j \leq k-i,j\geq0,j\leq s$, we have $\binom{i}{j}$ choices for $A \cap B$ and $\binom{k-i}{s-j}$ choices for $([k]\setminus A) \cap ([k] \setminus B)$; together, these completely determine $B$. Moreover $\wt(\vecb) = |B| = |B \cap A| + |B \cap ([k] \setminus A)| = j + (k-i)-(s-j) = k - s - i + 2j$, yielding the desired formula.
\end{proof}

\begin{lemma}\label{lemma:gamma-formula}
Let $S \subseteq [k]$, and let $s$ be its smallest element and $t$ its largest element (they need not be distinct). Then for $\mu \in [-1,1]$, \[\gamma_{S,k}(\mu) = \begin{cases}\frac{1+\mu}{1+\epsilon_{s,k}} & \mu \in [-1,\epsilon_{s,k}) \\
1 & \mu \in [\epsilon_{s,k},\epsilon_{t,k}] \\ \frac{1-\mu}{1-\epsilon_{t,k}} & \mu \in (\epsilon_{t,k},1] \end{cases} \] (which also equals $\min\left\{\frac{1+\mu}{1+\epsilon_{s,k}}, 1, \frac{1-\mu}{1-\epsilon_{t,k}}\right\}$).
\end{lemma}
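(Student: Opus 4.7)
The plan is to first observe that the quantity being optimized simplifies dramatically at $p=1$. Specifically, since $\bern(1)^k$ is deterministically the all-ones vector $\mathbf{1}$, we have $\veca\odot\mathbf{1} = \veca$, and hence
\[ \gamma_S(\cD_Y) = \lambda_S(\cD_Y,1) = \Exp_{\veca\sim\cD_Y}[f_{S,k}(\veca)] = \sum_{i\in S}\cD_Y\langle i\rangle. \]
So $\gamma_{S,k}(\mu)$ is the value of the linear program: maximize $p = \sum_{i\in S} x_i$ subject to $\sum_{i=0}^k x_i = 1$, $\sum_{i=0}^k \epsilon_{i,k} x_i = \mu$, and $x_i \geq 0$, where $x_i = \cD_Y\langle i\rangle$.

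For the upper bound, I would split into three cases based on $\mu$. If $\mu\in[\epsilon_{s,k},\epsilon_{t,k}]$, then trivially $p \leq 1$. If $\mu < \epsilon_{s,k}$, I use that for $i \in S$ we have $\epsilon_{i,k} \geq \epsilon_{s,k}$ (since $i \geq s$) and for $i \notin S$ we have $\epsilon_{i,k} \geq -1$, to compute
\[ \mu \;=\; \sum_{i\in S}\epsilon_{i,k}\cD_Y\langle i\rangle + \sum_{i\notin S}\epsilon_{i,k}\cD_Y\langle i\rangle \;\geq\; \epsilon_{s,k}\,p - (1-p), \]
which rearranges to $p \leq \frac{1+\mu}{1+\epsilon_{s,k}}$. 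The case $\mu > \epsilon_{t,k}$ is entirely symmetric, using $\epsilon_{i,k} \leq \epsilon_{t,k}$ for $i \in S$ and $\epsilon_{i,k} \leq 1$ for all $i$, yielding $p \leq \frac{1-\mu}{1-\epsilon_{t,k}}$.

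For achievability (the lower bound), I exhibit a symmetric distribution attaining each bound. For $\mu\in[\epsilon_{s,k},\epsilon_{t,k}]$, take $\cD_Y$ supported on weights $\{s,t\}$ with the unique weights that satisfy the marginal constraint (or just $\cD_Y\langle s\rangle = 1$ if $s=t$); this places all mass on $S$, giving value $1$. For $\mu < \epsilon_{s,k}$, take $\cD_Y\langle 0\rangle = 1 - \lambda$ and $\cD_Y\langle s\rangle = \lambda$ where $\lambda = \frac{1+\mu}{1+\epsilon_{s,k}}\in[0,1]$; then $\mu(\cD_Y) = -(1-\lambda) + \lambda\epsilon_{s,k} = \mu$ and the objective equals $\lambda$. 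The case $\mu > \epsilon_{t,k}$ is symmetric, supported on $\{t,k\}$.

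The argument is essentially a two-line LP calculation, so I do not expect substantial obstacles — the only minor care needed is checking that the cases glue consistently at the boundary points $\mu = \epsilon_{s,k}$ and $\mu = \epsilon_{t,k}$ (where the piecewise expressions agree with $1$), and confirming that the equivalent formulation as $\min\{\frac{1+\mu}{1+\epsilon_{s,k}},1,\frac{1-\mu}{1-\epsilon_{t,k}}\}$ holds because outside the middle interval the relevant fractional bound drops below $1$ and becomes the binding one, while inside the interval both fractions exceed $1$.
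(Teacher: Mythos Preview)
Your proof is correct, and it is genuinely different from (and cleaner than) the paper's argument. The paper proves the upper bound in Case~1 ($\mu \leq \epsilon_{s,k}$) \emph{operationally}: it introduces a mass-redistribution move that shifts $\cD_Y\langle v\rangle$ to $\cD_Y\langle u\rangle$ and $\cD_Y\langle w\rangle$ (for $u<v<w$) while preserving the marginal and not decreasing $\gamma_S$, and iteratively applies this move to reduce to $\supp(\cD_Y)\subseteq\{0,s,t,k\}$, followed by a further algebraic step to push the mass on $t$ and $k$ down to $\{0,s\}$. Only then does it read off the answer from the unique distribution on $\{0,s\}$ with the given marginal. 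Your approach bypasses all of this with the single inequality $\mu \geq \epsilon_{s,k}\,p - (1-p)$, obtained from $\epsilon_{i,k}\geq \epsilon_{s,k}$ on $S$ and $\epsilon_{i,k}\geq -1$ off $S$; this is effectively LP duality and yields the bound in one line. The achievability constructions are identical in both proofs. What the paper's longer argument buys is an explicit picture of \emph{all} optimizers (useful elsewhere, e.g.\ in the proof of \cref{thm:cgsv-streaming-failure-3and} where uniqueness of the optimal $\cD_Y$ is invoked), whereas your argument gives only the optimal value; but for the lemma as stated, your route is sufficient and more elementary.
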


\begin{proof}
For $\mu \in [-1,1]$, in (\cref{eqn:beta_Sk-gamma_Sk-def}) we defined \[ \gamma_{S,k}(\mu) = \sup_{\cD_Y \in \Delta_k : \mu(\cD_Y) = \mu} \gamma_S(\cD_Y), \] where by \cref{eqn:beta_S-gamma_S-def,eqn:lambda-def}), $\gamma_S(\cD_Y) = \sum_{i \in S} \cD_Y\langle i \rangle$. For $\cD_Y \in \Delta_k$, let $\supp(\cD_Y) = \{i \in [k]:\cD_Y\langle i \rangle > 0\}$. We handle cases based on $\mu$.

\subparagraph*{Case 1: $\mu \in [-1,\epsilon_{s,k}]$.} Our strategy is to reduce to the case $\supp(\cD_Y) \subseteq \{0,s\}$ while preserving the marginal $\mu$ and (non-strictly) increasing the value of $\gamma_S$.

Consider the following operation on a distribution $\cD_Y \in \Delta_k$: For $u < v < w \in [k]$, increase $\cD_Y\langle u \rangle$ by $\cD_Y\langle v \rangle \,\frac{w-v}{w-u}$, increase $\cD_Y\langle w \rangle$ by $\cD_Y\langle v \rangle\, \frac{v-u}{w-u}$, and set $\cD_Y\langle v \rangle$ to zero. Note that this results in a new distribution with the same marginal, since \[ \cD_Y\langle v \rangle \frac{w-v}{w-u} \epsilon_{u,k} + \cD_Y\langle v \rangle \frac{v-u}{w-u} \epsilon_{w,k} = \cD_Y\langle v \rangle \,\epsilon_{v,k}. \] Given an initial distribution $\cD_Y$, we can apply this operation to zero out $\cD_Y\langle v \rangle$ for $v \in \{1,\ldots,s-1\}$ by redistributing to $\cD_Y\langle 0 \rangle$ and $\cD_Y\langle s \rangle$, preserving the marginal and only increasing the value of $\gamma_S$ (since $v \not\in S$ while $s \in S$). Similarly, we can redistribute $\cD_Y\langle v \rangle$ to $\cD_Y\langle t \rangle$ and $\cD_Y\langle k \rangle$ when $v \in \{t+1,\ldots,k-1\}$, and to $\cD_Y\langle s \rangle$ and $\cD_Y\langle t \rangle$ when $v \in \{s+1,\ldots,t-1\}$. Thus, we need only consider the case $\supp(\cD) \subseteq \{0,s,t,k\}$. We assume for simplicity that $0,s,t,k$ are distinct.

By definition of $\epsilon_{i,k}$ we have \[ \mu(\cD) = -\cD_Y\langle 0 \rangle + \cD_Y\langle s \rangle\left(-1+\frac{2s}k\right) + \cD_Y\langle t \rangle\left(-1+\frac{2t}k\right) + \cD_Y\langle k \rangle \leq -1 + \frac{2s}k \] (by assumption for this case). Substituting $\cD_Y\langle s \rangle = 1-\cD_Y\langle 0 \rangle-\cD_Y\langle t \rangle-\cD_Y\langle k \rangle$ and multiplying through by $\frac{k}2$, we have \[ k\cD_Y\langle k \rangle-s\cD_Y\langle 0 \rangle-s\cD_Y\langle t \rangle-s\cD_Y\langle k \rangle+t\cD_Y\langle t \rangle \leq 0; \] defining $\delta = \cD_Y\langle t \rangle (\frac{t}s-1)+\cD_Y\langle k \rangle(\frac{k}s-1)$, we can rearrange to get $\cD_Y\langle 0 \rangle \geq \delta$. Then given $\cD_Y$, we can zero out $\cD_Y\langle t \rangle$ and $\cD_Y\langle k \rangle$, decrease $\cD_Y\langle 0 \rangle$ by $\delta$, and correspondingly increase $\cD_Y\langle s \rangle$ by $\cD_Y\langle t \rangle+\cD_Y\langle k \rangle+\delta$. This preserves the marginal since \[ (\delta + \cD_Y\langle t \rangle + \cD_Y\langle k \rangle) \,\epsilon_{s,k} = -\delta + \cD_Y\langle t \rangle \,\epsilon_{t,k} + \cD_Y\langle k \rangle \] and can only increase $\gamma_S$.

Thus, it suffices to only consider the case $\supp(\cD_Y) \subseteq \{0,s\}$. This uniquely determines $\cD_Y$ (because $\mu$ is fixed); we have $\cD_Y\langle 0 \rangle = \frac{\epsilon_{s,k}-\mu}{\epsilon_{s,k}+1}$ and $\cD_Y\langle s \rangle = \frac{1+\mu}{\epsilon_{s,k}+1}$, yielding the desired value of $\gamma_S$.

\subparagraph*{Case 2: $\mu \in [\epsilon_{s,k},\epsilon_{t,k}]$.} We simply construct $\cD_Y$ with $\cD_Y\langle s \rangle = \frac{\epsilon_{t,k}-\mu}{\epsilon_{s,k}-\epsilon_{t,k}}$ and $\cD_Y\langle t \rangle = \frac{\mu-\epsilon_{s,k}}{\epsilon_{s,k}-\epsilon_{t,k}}$; we have $\mu(\cD_Y) = \mu$ and $\gamma_S(\cD_Y) = 1$.

\subparagraph*{Case 3: $\mu \in [\epsilon_{t,k},1]$.} Following the symmetric logic to Case 1, we consider $\cD_Y$ supported on $\{t,k\}$ and set $\cD_Y\langle t \rangle = \frac{1-\mu}{1-\epsilon_{t,k}}$ and $\cD_Y\langle k \rangle = \frac{\mu-\epsilon_{t,k}}{1-\epsilon_{t,k}}$, yielding $\mu(\cD_Y) = \mu$ and $\gamma_S(\cD_Y) = \cD_Y\langle t \rangle$.
\end{proof}

\section{Analysis of $\alpha(\kand)$}\label{sec:kand-analysis}
In this section, we prove \cref{thm:kand-approximability} (on the sketching approximability of $\maxkand$). Recall that in \cref{eqn:alpha'_k}, we defined \[ \alpha'_k = \left(\frac{(k-1)(k+1)}{4k^2}\right)^{(k-1)/2}. \] \cref{thm:kand-approximability} follows immediately from the following two lemmas:

\begin{lemma}\label{lemma:kand-ub}
For all odd $k \geq 3$, $\alpha(\kand) \leq \alpha'_k$. For all even $k \geq 2$, $\alpha(\kand) \leq 2\alpha'_{k+1}$.
\end{lemma}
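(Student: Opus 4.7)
The approach is to exhibit, for each parity of $k$, a specific distribution $\cD_N^* \in \Delta_k$ and to compute the witness ratio $\beta_{\{k\}}(\cD_N^*)/\gamma_{\{k\},k}(\mu(\cD_N^*))$ in closed form, which by the reformulation in \cref{eqn:alpha-optimize-over-dn} directly upper-bounds $\alpha(\kand)$. A convenient simplification applies throughout: since $S=\{k\}$, we have $s=t=k$ and $\epsilon_{k,k}=1$, so \cref{lemma:gamma-formula} collapses to $\gamma_{\{k\},k}(\mu) = (1+\mu)/2$ on $[-1,1)$.

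For odd $k$, I would take $\cD_N^*$ to be the point mass on Hamming weight $(k+1)/2$. Then \cref{lemma:mu-formula} gives $\mu(\cD_N^*) = 1/k$, and a direct computation shows $\lambda_{\{k\}}(\cD_N^*,p) = p^{(k+1)/2}(1-p)^{(k-1)/2}$ (each $\veca$ in the support forces $\vecb = \veca$ in order to satisfy the constraint). Since this polynomial vanishes at both endpoints of $[0,1]$, any interior critical point is the global maximum; logarithmic differentiation gives $p^* = (k+1)/(2k)$. Substituting and dividing by $\gamma_{\{k\},k}(1/k) = (k+1)/(2k)$, one factor of $(k+1)/(2k)$ cancels and the ratio simplifies to $\alpha'_k = [(k-1)(k+1)/(4k^2)]^{(k-1)/2}$.

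For even $k$, I would use a two-point distribution with mass $c \eqdef (k+2)^2/[(k+2)^2 + k^2]$ on Hamming weight $k/2$ and mass $1-c = k^2/[(k+2)^2 + k^2]$ on weight $k/2+1$ (motivated by computer-assisted small-case search; for $\twoand$ this specialises to $(0, 4/5, 1/5)$, giving the known $4/9$). Factoring yields $\lambda_{\{k\}}(\cD_N^*, p) = p^{k/2}(1-p)^{k/2-1}[c + (1-2c)p]$ and $\mu(\cD_N^*) = 2k/[(k+2)^2 + k^2]$; the arithmetic identity $(k+2)^2 + k^2 + 2k = 2(k+1)(k+2)$ then gives $\gamma_{\{k\},k}(\mu(\cD_N^*)) = (k+1)(k+2)/[(k+2)^2 + k^2]$. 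The central calculation is to show that $p^* = (k+2)/(2(k+1))$ is the global maximum of $\lambda_{\{k\}}(\cD_N^*, \cdot)$: logarithmic differentiation plus algebraic manipulation confirms it is a critical point, and it is the global max because $\lambda$ vanishes at both endpoints for $k \geq 4$, while for $k=2$ the function is quadratic with $\lambda'' < 0$ at $p^*$. Plugging $p^*$ into $\lambda$ and simplifying via the further identity $(k+2)^2 + k^2 + 2k(k+2) = (2(k+1))^2$ produces exactly $2[k(k+2)/(4(k+1)^2)]^{k/2} = 2\alpha'_{k+1}$.

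The main obstacle is identifying the even-case ansatz: the pair $(c, p^*)$ is not obvious a priori. I would derive it systematically by fixing a two-point support on $\{k/2, k/2+1\}$, demanding that the maximizer of $\lambda_{\{k\}}(\cD_N^*,\cdot)$ coincide with the natural candidate $p^* = (k+2)/(2(k+1))$ (the odd-$(k+1)$ saddle-point value, which the writeup has already hinted is the relevant parameter for the even case via the factor of two), and solving the resulting rational equation for $c$; the small-$k$ numerics above confirm the guess. Beyond this, the remaining work is routine polynomial algebra — made tractable by the fact that $\gamma_{\{k\},k}$ is a single linear function of $\mu$ over the entire relevant range, so no piecewise casework on $\gamma$ is required.
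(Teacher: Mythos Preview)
Your proposal is correct and matches the paper's proof essentially exactly: the same witnesses $\cD_N^*$ and $p^*$ in both parities (your even-case weight $c=(k+2)^2/[(k+2)^2+k^2]$ is the paper's $(\tfrac{k}2+1)^2/[(\tfrac{k}2)^2+(\tfrac{k}2+1)^2]$ after clearing a factor of $4$), followed by the same ratio computation. One minor tightening is needed in the even case for $k\geq 4$: the fact that $\lambda_{\{k\}}(\cD_N^*,\cdot)$ vanishes at both endpoints guarantees only that \emph{some} interior critical point is the global max, not that $p^*$ is; the paper closes this by fully factoring the derivative and checking that the only other critical root, $\tfrac12+\tfrac{k}4$, lies outside $(0,1)$.
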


\begin{lemma}\label{lemma:kand-lb}
For all odd $k \geq 3$, $\alpha(\kand) \geq \alpha'_k$. For all even $k \geq 2$, $\alpha(\kand) \geq 2\alpha'_{k+1}$.
\end{lemma}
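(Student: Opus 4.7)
The plan is to invoke the max-min inequality \cref{eqn:max-min},
\[
\alpha(\kand) \;\ge\; \sup_{p \in [0,1]} \inf_{\cD_N \in \Delta_k} \frac{\lambda_{\{k\}}(\cD_N,p)}{\gamma_{\{k\},k}(\mu(\cD_N))},
\]
and choose a specific $p^*$ (depending on the parity of $k$) so that the inner infimum meets the claimed bound. Following the guess described in \cref{sec:max-min}, I take $p^* = \tfrac{k+1}{2k}$ for odd $k \geq 3$, and $p^* = \tfrac{k+2}{2(k+1)}$ for even $k \geq 2$; write $q^* = 1 - p^*$ in each case.

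With $p^*$ fixed, the ratio simplifies dramatically using the formulas from \cref{sec:formulas}. Since $S = \{k\}$, the inner sum in \cref{lemma:lambda-formula} collapses to the single term $j = i$, giving $\lambda_{\{k\}}(\cD_N, p^*) = \sum_{i=0}^k \cD_N\langle i \rangle\, (p^*)^i (q^*)^{k-i}$. Similarly, with $s = t = k$, \cref{lemma:gamma-formula} degenerates to $\gamma_{\{k\},k}(\mu) = \tfrac{1+\mu}{2}$, and combining with \cref{lemma:mu-formula} gives $\gamma_{\{k\},k}(\mu(\cD_N)) = \sum_{i=0}^k \cD_N\langle i \rangle \cdot \tfrac{i}{k}$. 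Both are nonnegative linear functionals on $\Delta_k$, so the infimum of their ratio over the simplex is attained at a vertex, i.e., a point distribution concentrated on some single Hamming weight $i \in \{1, \ldots, k\}$ (the vertex $i = 0$ makes the denominator vanish while keeping the numerator strictly positive, and can safely be excluded, since for any $\cD_N$ with $\cD_N\langle 0 \rangle > 0$, removing that mass and renormalizing strictly decreases the ratio). Thus it suffices to verify the finitely many scalar inequalities
\[
h(i) \;:=\; (p^*)^i (q^*)^{k-i} \;-\; \alpha \cdot \tfrac{i}{k} \;\ge\; 0 \qquad \text{for every integer } i \in \{1, \ldots, k\},
\]
where $\alpha = \alpha'_k$ for odd $k$ and $\alpha = 2\alpha'_{k+1}$ for even $k$.

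The key observation is that the real-valued extension of $h$ to $[0, k]$ is strictly convex, since $(p^*)^i (q^*)^{k-i} = (q^*)^k \bigl(p^*/q^*\bigr)^i$ is strictly convex in $i$ and the other term is linear. A direct algebraic calculation then shows that $h$ has two consecutive roots: $h\bigl(\tfrac{k-1}{2}\bigr) = h\bigl(\tfrac{k+1}{2}\bigr) = 0$ for odd $k$, and $h\bigl(\tfrac{k}{2}\bigr) = h\bigl(\tfrac{k}{2}+1\bigr) = 0$ for even $k$. Since a strictly convex function that vanishes at two distinct real points is strictly positive outside the closed interval between them, and the two roots are consecutive integers with no other integer strictly between them, we conclude $h(i) \ge 0$ for every integer $i \in \{0, \ldots, k\}$, which finishes the proof.

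The main obstacle I anticipate is the algebraic bookkeeping in verifying the two prescribed roots of $h$. In the odd case this reduces to the identity $\bigl(\tfrac{k+1}{2k}\bigr)^{(k\pm 1)/2} \bigl(\tfrac{k-1}{2k}\bigr)^{(k\mp 1)/2} = \tfrac{k\pm 1}{2k} \cdot \alpha'_k$, which follows immediately by factoring one copy of $p^*$ or $q^*$ out of the product and matching the definition $\alpha'_k = \bigl((k-1)(k+1)/(4k^2)\bigr)^{(k-1)/2}$; the even case is analogous, with $k$ replaced by $k+1$. Beyond the one creative ingredient (the guess for $p^*$), everything else is elementary.
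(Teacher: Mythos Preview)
Your proof is correct and follows essentially the same approach as the paper: invoke the max-min inequality with the same choice of $p^*$, use the explicit formulas for $\lambda_{\{k\}}$ and $\gamma_{\{k\},k}$ to reduce to a ratio of linear functionals, and then verify the scalar inequalities $h(i) \ge 0$ for each $i$. The only difference lies in how the scalar inequalities are dispatched. The paper checks equality at the same two consecutive integers and then propagates outward by induction (using $r\,i \ge i+1$ upward and $r^{-1} i \ge i-1$ downward, where $r = p^*/q^*$); you instead observe that $h(i) = (q^*)^k (p^*/q^*)^i - \alpha\, i/k$ is strictly convex in the real variable $i$, so two roots at consecutive integers force nonnegativity at all other integers. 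Your convexity argument is slightly more conceptual and avoids the case analysis of the induction, while the paper's inductive argument is more hands-on; both are short and elementary.
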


To begin, we give explicit formulas for $\gamma_{\{k\},k}(\mu(\cD))$ and $\lambda_{\{k\}}(\cD,p)$. Note that the smallest element of $\{k\}$ is $k$, and $\epsilon_{k,k} = 1$. Thus, for $\cD \in \Delta_k$, we have by \cref{lemma:gamma-formula,lemma:mu-formula} that
\begin{equation}\label{eqn:kand-gamma}
    \gamma_{\{k\},k}(\mu(\cD)) = \frac{1+\sum_{i=0}^k (-1+\frac{2i}k)\,\cD\langle i \rangle}{2} = \sum_{i=0}^k \frac{i}k \,\cD\langle i \rangle.
\end{equation}
Similarly, we can apply \cref{lemma:lambda-formula} with $s = k$; for each $i \in \kz$, $\max\{0,s-(k-i)\}=\min\{i,k\} = i$, so we need only consider $j=i$, and then $\binom{i}{j} = \binom{k-i}{s-j} = 1$. Thus, for $q = 1-p$, we have
\begin{equation}\label{eqn:kand-lambda}
    \lambda_{\{k\}}(\cD,p)  = \sum_{i=0}^k q^{k-i}p^i\,\cD\langle i \rangle
\end{equation}

Now, we prove \cref{lemma:kand-ub} directly:

\begin{proof}[Proof of \cref{lemma:kand-ub}]
Consider the case where $k$ is odd. Define $\cD_N^*$ by $\cD_N^*\langle \frac{k+1}2 \rangle=1$ and let $p^* = \frac12 + \frac1{2k}$. Since \[ \alpha(\kand) \leq \frac{\beta_{\{k\}}(\cD_N^*)}{\gamma_{\{k\},k}(\mu(\cD_N^*))} \text{ and } \beta_{\{k\}}(\cD_N) = \sup_{p \in [0,1]} \lambda_{\{k\}}(\cD_N^*,p), \] by \cref{eqn:alpha-optimize-over-dn,eqn:beta_S-gamma_S-def}, respectively, it suffices to check that $p^*$ maximizes $\lambda_{\{k\}}(\cD_N^*,\cdot)$ and \[ \frac{\lambda_{\{k\}}(\cD_N^*,p^*)}{\gamma_{\{k\},k}(\mu(\cD_N^*))} = \alpha'_k. \] Indeed, by \cref{eqn:kand-lambda}, \[ \lambda_{\{k\}}(\cD_N^*,p) = (1-p)^{(k-1)/2} p^{(k+1)/2}. \] To show $p^*$ maximizes $\lambda_{\{k\}}(\cD_N^*,\cdot)$, we calculate its derivative: \[ \frac{d}{dp}\left[(1-p)^{(k-1)/2} p^{(k+1)/2}\right] = - (1-p)^{(k-3)/2}p^{(k-1)/2}\left(kp-\frac{k+1}2\right), \] which has zeros only at $0,1,$ and $p^*$. Thus, $\lambda_{\{k\}}(\cD_N^*,\cdot)$ has critical points only at $0,1,$ and $p^*$, and it is maximized at $p^*$ since it vanishes at $0$ and $1$. Finally, by \cref{eqn:kand-gamma,eqn:kand-lambda} and the definition of $\alpha'_k$, \[ \frac{\lambda_{\{k\}}(\cD_N^*,p^*)}{\gamma_{\{k\},k}(\mu(\cD_N^*))} = \frac{\left(\frac12-\frac1{2k}\right)^{(k-1)/2} \left(\frac12+\frac1{2k}\right)^{(k+1)/2}}{\frac12 \left(1+\frac1k\right)} = \alpha'_k, \] as desired.

Similarly, consider the case where $k$ is even; here, we define $\cD_N^*$ by $\cD_N^*\langle \frac{k}2 \rangle = \frac{\left(\frac{k}2+1\right)^2}{\left(\frac{k}2\right)^2+\left(\frac{k}2+1\right)^2}$ and $\cD_N^*\langle \frac{k}2+1 \rangle = \frac{\left(\frac{k}2\right)^2}{\left(\frac{k}2\right)^2+\left(\frac{k}2+1\right)^2}$, and set $p^* = \frac12+\frac1{2(k+1)}$. Using \cref{eqn:kand-lambda} to calculate the derivative of $\lambda_{\{k\}}(\cD_N^*,\cdot)$ yields
\begin{multline*}
    \frac{d}{dp}\left[\frac{\left(\frac{k}2+1\right)^2}{\left(\frac{k}2\right)^2+\left(\frac{k}2+1\right)^2} (1-p)^{k/2}p^{k/2} + \frac{\left(\frac{k}2\right)^2}{\left(\frac{k}2\right)^2+\left(\frac{k}2+1\right)^2}(1-p)^{k/2-1}p^{k/2+1}\right] \\
    = -\frac{k}{2+2k+2k^2}(1-p)^{k/2-2} p^{k/2-1} \left(\frac{k}2+1-2p\right)\left((k+1)p-\left(\frac{k}2+1\right)\right),
\end{multline*} so $\lambda_{\{k\}}(\cD_N^*,\cdot)$ has critical points at $0,1,\frac12+\frac{k}4$. and $p^*$; $p^*$ is the only critical point in the interval $[0,1]$ for which $\lambda_{\{k\}}(\cD_N^*,\cdot)$ is positive, and hence is its maximum. Finally, it can be verified algebraically using \cref{eqn:kand-gamma,eqn:kand-lambda} that $\frac{\lambda_{\{k\}}(\cD_N^*,p^*)}{\gamma_{\{k\},k}(\mu(\cD_N^*))} = 2\alpha'_{k+1}$, as desired.
\end{proof}

We prove \cref{lemma:kand-lb} using the max-min method. We rely on the following proposition which is a simple inequality for optimizing ratios of linear functions, which we prove in \cref{sec:misc-proofs}:

\begin{proposition}\label{prop:lin-opt}
Let $f:\R^n \to \R$ be defined by the equation $f(\vecx) = \frac{\veca \cdot \vecx}{\vecb\cdot \vecx}$ for some $\veca,\vecb \in \R_{\geq 0}^n$. For every $\vecy(1),\ldots,\vecy(r) \in \R_{\geq 0}^n$, and every $\vecx = \sum_{i=1}^r \alpha_i \vecy(i)$ with each $x_i \geq 0$, we have $ f(\vecx) \geq \min_i f(\vecy(i))$. In particular, taking $r = n$ and $\vecy(1),\ldots,\vecy(n)$ as the standard basis for $\R^n$, for every $\vecx \in \R_{\geq 0}^n$, we have $f(\vecx) \geq \min_i \frac{a_i}{b_i}$.
\end{proposition}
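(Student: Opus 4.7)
The plan is to prove this by a direct inequality-chasing argument, exploiting the non-negativity of all quantities involved. Write $m = \min_{i \in [r]} f(\vecy(i))$. The first step is to unpack this definition: for each $i \in [r]$, we have $\frac{\veca \cdot \vecy(i)}{\vecb \cdot \vecy(i)} \geq m$, which after clearing denominators (using $\vecb \cdot \vecy(i) \geq 0$, since $\vecb, \vecy(i) \in \R_{\geq 0}^n$) becomes
\[ \veca \cdot \vecy(i) \;\geq\; m \cdot \vecb \cdot \vecy(i). \]

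Next, I would take the linear combination with coefficients $\alpha_i$. The hypothesis ``each $x_i \geq 0$'' (together with $\vecx = \sum_i \alpha_i \vecy(i)$ and $\vecy(i) \in \R_{\geq 0}^n$) should be read as saying the $\alpha_i$'s are non-negative; otherwise the ratio $f(\vecx)$ could be negative or undefined even when $\vecx \geq 0$, and the conclusion could fail (for example, if $n = r = 2$ with $\veca = \vecb = (1,1)$, $\vecy(1) = (1,0)$, $\vecy(2) = (0,1)$, and we allowed $\alpha_1 = 2, \alpha_2 = -1$, we would still get $\vecx = (2, -1) \not\geq 0$, so this case is excluded anyway; the intended reading is clearly $\alpha_i \geq 0$). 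Multiplying the displayed inequality by $\alpha_i \geq 0$ and summing over $i \in [r]$ gives
\[ \veca \cdot \vecx \;=\; \sum_{i=1}^r \alpha_i\, \veca \cdot \vecy(i) \;\geq\; m \cdot \sum_{i=1}^r \alpha_i\, \vecb \cdot \vecy(i) \;=\; m \cdot \vecb \cdot \vecx. \]
Dividing through by $\vecb \cdot \vecx > 0$ (assuming $f(\vecx)$ is defined) yields $f(\vecx) \geq m$, as required.

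The second assertion follows by specializing $r = n$ and $\vecy(i) = \vece_i$, so that $\veca \cdot \vece_i = a_i$ and $\vecb \cdot \vece_i = b_i$, hence $f(\vece_i) = a_i / b_i$; every $\vecx \in \R_{\geq 0}^n$ can be written as $\sum_i x_i \vece_i$ with non-negative coefficients, so the general bound applies. There is essentially no obstacle here: the argument is a one-line application of the fact that a weighted average of ratios $\tfrac{p_i}{q_i}$ with weights $q_i \geq 0$ (namely $\alpha_i \vecb \cdot \vecy(i)$) is bounded below by the minimum ratio. The only subtlety is handling the edge case where $\vecb \cdot \vecy(i) = 0$ for some $i$; in that case $\veca \cdot \vecy(i) \geq 0 = m \cdot \vecb \cdot \vecy(i)$ trivially, so the argument goes through without modification provided we take the convention that such terms do not participate in the minimum (or contribute $+\infty$).
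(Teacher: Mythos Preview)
Your proof is correct and rests on the same core inequality as the paper's: from $\veca\cdot\vecy(i)\ge m\,\vecb\cdot\vecy(i)$ for each $i$, take the nonnegative combination with coefficients $\alpha_i$ and divide by $\vecb\cdot\vecx$. The only organizational difference is that the paper first reduces the general statement to the special case (standard basis) via the change of variables $\veca'_i=\veca\cdot\vecy(i)$, $\vecb'_i=\vecb\cdot\vecy(i)$, and then proves the special case; you instead prove the general case directly and read off the special case afterward. Your route is slightly cleaner since it avoids the reduction step, and your explicit remark that ``each $x_i\ge 0$'' should be read as $\alpha_i\ge 0$ matches how the paper implicitly uses it (its $\vecx'=(\alpha_1,\ldots,\alpha_r)$ must lie in $\R_{\ge 0}^r$ for the reduction to apply).
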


\begin{proof}[Proof of \cref{lemma:kand-lb}]
First, suppose $k \geq 3$ is odd. Set $p^* = \frac12+\frac1{2k} = \frac{k+1}{2k}$. We want to show that
\begin{align*}
    \alpha'_k &\leq \inf_{\cD_N \in \Delta_k} \frac{\lambda_{\{k\}}(\cD_N,p^*)}{\gamma_{\{k\},k}(\mu(\cD_N))} \tag{max-min inequality, i.e., \cref{eqn:max-min}} \\
    &= \inf_{\cD_N \in \Delta_k} \frac{\sum_{i=0}^k (1-p^*)^{k-i}(p^*)^i\,\cD_N\langle i \rangle}{\sum_{i=0}^k \frac{i}k \,\cD_N\langle i \rangle} \tag{\cref{eqn:kand-gamma,eqn:kand-lambda}}.
\end{align*}

By \cref{prop:lin-opt}, it suffices to check that \[\forall i \in \kz,\quad (1-p^*)^{k-i}(p^*)^i \geq \alpha'_k \cdot \frac{i}k. \] By definition of $\alpha'_k$, we have that $\alpha'_k  = (1-p^*)^{(k-1)/2} (p^*)^{(k-1)/2}$. Defining $r = \frac{p^*}{1-p^*} = \frac{k+1}{k-1}$ (so that $p^* = r(1-p^*)$), factoring out $(1-p^*)^k$, and simplifying, we can rewrite our desired inequality as
\begin{equation}\label{eqn:kand-odd-lb-goal}
    \forall i \in \kz, \quad \frac12 (k-1) r^{i-\frac{k-1}2} \geq i.
\end{equation}
When $i = \frac{k+1}2$ or $\frac{k-1}2$, we have equality in \cref{eqn:kand-odd-lb-goal}. We extend to the other values of $i$ by induction. Indeed, when $i \geq \frac{k+1}2$, then ``$i$ satisfies \cref{eqn:kand-odd-lb-goal}'' implies ``$i+1$ satisfies \cref{eqn:kand-odd-lb-goal}'' because $r i \geq i + 1$, and when $i \leq \frac{k-1}2$, then ``$i$ satisfies \cref{eqn:kand-odd-lb-goal}'' implies ``$i-1$ satisfies \cref{eqn:kand-odd-lb-goal}'' because $\frac1r i \geq i - 1$.

Similarly, in the case where $k \geq 2$ is even, we set $p^* = \frac12+\frac1{2(k+1)}$ and $r = \frac{p^*}{1-p^*} = \frac{k+2}{k}$. In this case, for $i \in \kz$ the following analogue of \cref{eqn:kand-odd-lb-goal} can be derived: \[ \forall i \in \kz, \quad \frac12 k r^{i-\frac{k}2} \geq i, \] and these inequalities follow from the same inductive argument.
\end{proof}

\section{Further analyses of $\alpha(f)$ for symmetric Boolean functions $f$}

\subsection{$\Th^{k-1}_k$ for even $k$}\label{sec:k-1-k-analysis}

In this subsection, we prove \cref{thm:k-1-k-approximability} (on the sketching approximability of $\Th^{k-1}_k$ for even $k \geq 2$). It is necessary and sufficient to prove the following two lemmas:

\begin{lemma}\label{lemma:k-1-k-ub}
For all even $k \geq 2$, $\alpha(\Th^{k-1}_k) \leq \frac{k}{2} \alpha_{k-1}'$.
\end{lemma}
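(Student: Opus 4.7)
The plan is to follow the template of \cref{lemma:kand-ub}: construct an explicit distribution $\cD_N^* \in \Delta_k$ and scalar $p^* \in [0,1]$, verify that $p^*$ maximizes $\lambda_{\{k-1,k\}}(\cD_N^*,\cdot)$ on $[0,1]$, and show that the resulting ratio $\lambda_{\{k-1,k\}}(\cD_N^*, p^*)/\gamma_{\{k-1,k\},k}(\mu(\cD_N^*))$ equals $\frac{k}{2}\alpha'_{k-1}$. By \cref{eqn:alpha-optimize-over-dn,eqn:beta_S-gamma_S-def}, this yields the claimed upper bound on $\alpha(\Th^{k-1}_k)$.

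The case $k=2$ is handled separately: I would take $\cD_N^* = \delta_{\mathrm{wt}\,1}$ and $p^*=1$, check that $\mu(\cD_N^*)=0$ and $\gamma_{\{1,2\},2}(0)=1$, and observe that $\lambda_{\{1,2\}}(\cD_N^*,p)=1-p(1-p)$ attains its maximum value $1$ at the boundary $p=1$, giving ratio $1 = (2/2)\alpha'_1$.

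For even $k \geq 4$, I would define
\[
\cD_N^* \;=\; \frac{k^2}{k^2+(k-2)^2}\,\delta_{\mathrm{wt}\,k/2} \;+\; \frac{(k-2)^2}{k^2+(k-2)^2}\,\delta_{\mathrm{wt}\,k/2+1},
\qquad p^* \;=\; \frac{k}{2(k-1)}.
\]
(This choice is forced by the saddle-point conditions of \cref{sec:max-min}: the mixing coefficient $a=k^2/(k^2+(k-2)^2)$ is exactly what makes $\frac{d}{dp}\lambda_{\{k-1,k\}}(\cD_N^*,p)$ vanish at $p^*$.) The ratio calculation is then routine: using \cref{lemma:mu-formula}, $\mu(\cD_N^*) = 2(k-2)^2/(k[k^2+(k-2)^2]) < 1-2/k = \epsilon_{k-1,k}$, so by \cref{lemma:gamma-formula} we get $\gamma_{\{k-1,k\},k}(\mu(\cD_N^*)) = (1+\mu(\cD_N^*))\cdot k/(2(k-1))$; and by \cref{lemma:lambda-formula}, writing $u^* := p^*q^* = k(k-2)/(4(k-1)^2)$ and using $\alpha'_{k-1} = (u^*)^{(k-2)/2}$, I would compute
\[
\lambda_{k/2}(p^*) = (u^*)^{k/2-1}\cdot\tfrac{k^2}{4(k-1)}, \qquad \lambda_{k/2+1}(p^*) = (u^*)^{k/2-2}(p^*)^2\cdot\tfrac{(k^2-4)}{4(k-1)}.
\]
Combining via the mixing coefficients and simplifying (the messy $(k^3-k^2-2k+4)$ factors that arise in both numerator and denominator cancel), the ratio collapses to $\frac{k}{2}\alpha'_{k-1}$.

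The main obstacle will be verifying that $p^*$ is the global maximum of $\lambda_{\{k-1,k\}}(\cD_N^*,\cdot)$ on $[0,1]$, which is analytically more involved than in the $\kand$ case. Writing $\lambda_{\{k-1,k\}}(\cD_N^*, p) = p^{k/2-1}q^{k/2-2} B(p)$ for a cubic $B(p)$, the derivative factors as $p^{k/2-2}q^{k/2-3}\,Q(p)$ where $Q$ is quartic. The choice of $a$ guarantees that $(2(k-1)p-k)$ divides $Q(p)$, leaving a cubic residual factor $T(p)$. The key step is showing $T$ has no real root in $(0,1)$; I would do this by computing the discriminant of $T'$ and verifying it is negative (so $T$ is monotonic), then checking the signs $T(0), T(1) < 0$. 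Combined with $\lambda_{\{k-1,k\}}(\cD_N^*,0)=\lambda_{\{k-1,k\}}(\cD_N^*,1)=0$ (which holds for $k\geq 4$ since $\cD_N^*$ is supported away from weights $0, k-1, k$) and non-negativity of $\lambda$, this establishes that $p^*$ is the global maximizer. In the small cases $k=4$ and $k=6$ the factorizations $(3p-2)(12p^2-11p+4)$ and $(5p-3)(30p^3-69p^2+53p-18)$ (respectively) confirm this behavior, and the general pattern can be verified by an explicit analysis of the coefficients of $T(p)$ as polynomials in $k$.
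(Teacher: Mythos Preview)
Your choice of $\cD_N^*$ and $p^*$ is exactly the paper's, and the overall strategy (factor the derivative as $p^{k/2-2}q^{k/2-3}$ times a linear factor $(2(k-1)p-k)$ times a cubic, then rule out roots of the cubic in $(0,1)$) is the same. The ratio computation and the $k=2$ case are fine.

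The gap is in your plan for showing the residual cubic has no root in $(0,1)$: the claim that the discriminant of $T'$ is negative is \emph{false} for $k\ge 8$. For $k=8$ one gets (up to a positive constant) $T(p)=56p^3-153p^2+127p-48$, whose derivative $T'(p)=168p^2-306p+127$ has discriminant $306^2-4\cdot168\cdot127=8292>0$; so $T$ has a genuine interior critical point near $p\approx 0.64$ and is not monotonic on $[0,1]$. (The conclusion that $T$ has no root in $(0,1)$ is still true, but your stated argument does not establish it.) More generally, computing the discriminant of $T'$ as a polynomial in $k$ shows it changes sign between $k=6$ and $k=8$, so the monotonicity route cannot handle all even $k$.

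The paper sidesteps this by the substitution $x=\tfrac{1}{p}-1$ (equivalently $p=\tfrac{1}{1+x}$): clearing denominators turns the condition ``$T$ has a root in $(0,1)$'' into ``a certain cubic in $x$ has a positive real root'', and one then checks that for $k\ge 6$ all four coefficients of that cubic, viewed as cubics in $k$, are strictly positive, so by Descartes' rule there is no positive root. The cases $k=2,4$ are handled directly (as you do). This avoids any discriminant or monotonicity analysis of $T$.
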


\begin{lemma}\label{lemma:k-1-k-lb}
For all even $k \geq 2$, $\alpha(\Th^{k-1}_k) \geq \frac{k}{2} \alpha_{k-1}'$.
\end{lemma}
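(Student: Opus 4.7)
My plan is to mirror \cref{lemma:kand-lb} (the lower bound for $\kand$) via the max-min method with the guess $p^* = k/(2(k-1))$. Setting $m = k/2$, this plays the role of the odd-$\kand$ maximizer with $k$ replaced by $k-1$. The max-min inequality \cref{eqn:max-min} reduces the goal to showing
\[
\inf_{\cD_N \in \Delta_k} \frac{\lambda_{\{k-1,k\}}(\cD_N, p^*)}{\gamma_{\{k-1,k\},k}(\mu(\cD_N))} \geq \frac{k}{2}\alpha'_{k-1}.
\]
The obstacle, absent in the $\kand$ case, is that \cref{lemma:gamma-formula} gives $\gamma_{\{k-1,k\},k}(\mu) = \min\!\big(m(1+\mu)/(2m-1),\, 1\big)$, which is piecewise linear rather than linear in $\cD_N$, so \cref{prop:lin-opt} cannot be applied directly.

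To overcome this, I will use the global upper bound $\gamma_{\{k-1,k\},k}(\mu) \leq L(\mu) := m(1+\mu)/(2m-1)$, valid since $L$ is the linear extension of $\gamma$'s first piece (with $\gamma = \min(L,1) \leq L$). Then $\lambda(\cD_N, p^*)/\gamma(\mu(\cD_N)) \geq \lambda(\cD_N, p^*)/L(\mu(\cD_N))$, and since $L(\mu(\cD_N)) = \sum_{i=0}^k \cD_N\langle i\rangle \cdot i/(2m-1)$ is linear in $\cD_N$, \cref{prop:lin-opt} reduces the task to verifying
\[
\lambda_{\{k-1,k\}}(\delta_i, p^*) \geq \frac{i}{2m-1} \cdot m\alpha'_{k-1} = i \cdot \frac{m^m(m-1)^{m-1}}{(2m-1)^{k-1}}, \quad \text{for each } i \in \{1, \ldots, k\}.
\]

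Specializing \cref{lemma:lambda-formula} yields $\lambda_{\{k-1,k\}}(\delta_i, p) = p^{i-1}q^{k-i-1}[pq + iq^2 + (k-i)p^2]$ for $1 \leq i \leq k-1$, with the endpoints $i = 0, k$ needing separate formulas. Substituting $p^*$ gives $\lambda(\delta_m, p^*) = (p^*q^*)^{m-1} \cdot m^2/(2m-1)$ and $\lambda(\delta_{m+1}, p^*) = ((m+1)/m) \cdot \lambda(\delta_m, p^*)$, achieving \emph{equality} in the required inequality for $i = m$ and $i = m+1$ -- this is the saddle-point property at the heart of the max-min method. For $2 \leq i \leq m$, a backward induction from $i = m$ suffices: letting $B_i := 2m^3 + m^2 - m - (2m-1)i$ denote the bracketed factor (which is positive throughout since $i \leq k < m(m+1)$), the step inequality reduces to $(m-i)B_i + (m-1)i(2m-1) \geq 0$, manifestly nonnegative. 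For $m+1 \leq i \leq k-2$, the forward induction step reduces to $g(i) := (i-m+1)B_i - mi(2m-1) \geq 0$; here $g$ is a concave quadratic in $i$ (leading coefficient $-(2m-1)$), and I plan to verify $g(m+1) = (m-2)(2m-1)(m+1) \geq 0$ and $g(2m-1) = m(m-1)(2m-1)(m-2) \geq 0$ (both trivial for $m \geq 2$), from which concavity yields $g \geq 0$ throughout $[m+1, 2m-1]$.

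The case $i = k$ must be handled separately since the formula for $\lambda(\delta_i, p)$ changes at the endpoint: a direct computation gives $\lambda(\delta_k, p^*) = m^k/(2m-1)^{k-1}$, and the required inequality reduces to $(m/(m-1))^{m-1} \geq 2$, which follows from the monotonicity of $(1+1/n)^n$ (equal to $2$ at $n=1$, increasing thereafter). I expect the main obstacle will be the bookkeeping in the forward-induction step, specifically verifying the explicit factorizations of $g(m+1)$ and $g(2m-1)$ and confirming the concavity argument on the correct range; however, these should succumb to direct calculation following the template of \cref{lemma:kand-lb}.
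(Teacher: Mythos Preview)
Your proposal is correct and follows essentially the same approach as the paper: you apply the max-min inequality with the identical choice $p^* = k/(2(k-1))$, replace the piecewise-linear $\gamma_{\{k-1,k\},k}$ by its linear majorant $\sum_i \tfrac{i}{k-1}\cD\langle i\rangle$, invoke \cref{prop:lin-opt}, and verify the resulting per-$i$ inequalities by induction. The execution differs only in bookkeeping: you take the (more natural) equality cases $i = m, m+1$ as base cases, treat $i = k$ separately, and handle the forward step via concavity of $g$ at the endpoints $m+1$ and $2m-1$; the paper instead takes base cases $i = \tfrac{k}{2}-1, \tfrac{k}{2}$, keeps the uniform formula at $i = k$, and factors the step inequality explicitly as $(k-2i)(k-1)(k^2-4i-4) \leq 0$.
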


Firstly, we give explicit formulas for $\gamma_{\{k-1,k\},k}$ and $\lambda_{\{k-1,k\}}$. We have $\Th_k^{k-1} = f_{\{k-1,k\},k}$, and $\epsilon_{k-1,k} = -1+\frac{2(k-1)}{k} = 1-\frac2k$. Thus, \cref{lemma:gamma-formula,lemma:mu-formula} give
\begin{equation}
    \gamma_{\{k-1,k\},k}(\mu(\cD)) = \min \left\{\frac{1 + \sum_{i=0}^k (-1+\frac{2i}k) \,\cD\langle i \rangle}{2-\frac2k} ,1\right\} = \min\left\{\sum_{i=0}^k \frac{i}{k-1}\,\cD\langle i \rangle,1\right\}.\label{eqn:k-1-k-gamma}
\end{equation}
Next, we calculate $\lambda_{\{k-1,k\}}(\cD,p)$ with \cref{lemma:lambda-formula}. Let $q = 1-p$, and let us examine the coefficient on $\cD\langle i \rangle$. $s=k$ contributes $q^{k-i}p^k$. In the case $i \leq k-1$, $s=k-1$ contributes $(k-i)q^{k-i-1}p^{i+1}$ for $j = i$, and in the case $i \geq 1$, $s=k-1$ contributes $iq^{k-i+1}p^{i-1}$ for $j = i-1$. Thus, altogether we can write
\begin{equation}
    \lambda_{\{k-1,k\}} (\cD,p) = \sum_{i=0}^k q^{k-i-1} p^{i-1} \left((k-i)p^2+ pq + iq^2\right) \,\cD\langle i \rangle.\label{eqn:k-1-k-lambda}
\end{equation}

Now, we prove \cref{lemma:k-1-k-ub,lemma:k-1-k-lb}.

\begin{proof}[Proof of \cref{lemma:k-1-k-ub}]
As in the proof of \cref{lemma:kand-ub}, it suffices to construct $\cD_N^*$ and $p^*$ such that $p^*$ maximizes $\lambda_{\{k-1,k\}}(\cD_N^*,\cdot)$ and $\frac{\lambda_{\{k-1,k\}}(\cD_N^*,p^*)}{\gamma_{\{k-1,k\},k}(\mu(\cD_N^*))} = \frac{k}2 \alpha'_{k-1}$.

We again let $p^* = \frac12+\frac1{2(k-1)}$, but define $\cD_N^*$ by $\cD_N^*\langle \frac{k}2 \rangle=\frac{\left(\frac{k}2\right)^2}{\left(\frac{k}2\right)^2+\left(\frac{k}2-1\right)^2}$ and $\cD_N^*\langle \frac{k}2+1 \rangle=\frac{\left(\frac{k}2-1\right)^2}{\left(\frac{k}2\right)^2+\left(\frac{k}2-1\right)^2}$. By \cref{eqn:k-1-k-lambda}, the derivative of $\lambda_{\{k-1,k\}}(\cD_N^*,\cdot)$ is now \begin{multline*}
    \frac{d}{dp}\Bigg[\frac{\left(\frac{k}2\right)^2}{\left(\frac{k}2\right)^2+\left(\frac{k}2-1\right)^2} (1-p)^{k/2-1} p^{k/2-1} \left(\frac{k}2p^2+pq+\frac{k}2q^2\right) + \\ \frac{\left(\frac{k}2-1\right)^2}{\left(\frac{k}2\right)^2+\left(\frac{k}2-1\right)^2}(1-p)^{k/2-2} p^{k/2} \left(\left(\frac{k}2-1\right)p^2+pq+\left(\frac{k}2+1\right)q^2\right)\Bigg] \\
    = -\frac1{8(k^2-2k+2)} (1-p)^{k/2-3}p^{k/2-2}(-k+(2(k-1)p) \xi(p),
\end{multline*} where $\xi(p)$ is the cubic \[ \xi(p) = -8k(k-1)p^3 + 2(k^3+k^2+6k-12)p^2 - 2(k^3-4)p + k^2(k-2). \] Thus, $\lambda_{\{k-1,k\}}$'s critical points on the interval $[0,1]$ are $0,1,p^*$ and any roots of $\xi$ in this interval. We claim that $\xi$ has no additional roots in the interval $(0,1)$. This can be verified directly by calculating roots for $k = 2,4$, so assume WLOG $k \geq 6$.

Suppose $\xi(p) = 0$ for some $p \in (0,1)$, and let $x = \frac1p - 1 \in (0,\infty)$. Then $p = \frac1{1+x}$; plugging this in for $p$ and multiplying through by $(x+1)^3$ gives the new cubic
\begin{equation}\label{eqn:for-descartes}
    (k^3-2k^2)x^3+(k^3-6k^2+8)x^2+(k^3-4k^2+12k-8)x+(k^3-8k^2+20k-16) = 0
\end{equation}
whose coefficients are cubic in $k$. It can be verified by calculating the roots of each coefficient of $x$ in \cref{eqn:for-descartes} that all coefficients are positive for $k \geq 6$. Thus, \cref{eqn:for-descartes} cannot have roots for positive $x$, a contradiction. Hence $\lambda_{\{k-1,k\}}(\cD_N^*,\cdot)$ is maximized at $p^*$. Finally, it can be verified that $\frac{\lambda_{\{k-1,k\}}(\cD_N^*,p^*)}{\gamma_{\{k-1,k\},k}(\mu(\cD_N^*))} = \frac{k}2\alpha'_{k-1}$, as desired.
\end{proof}

\begin{proof}[Proof of \cref{lemma:k-1-k-lb}]
Define $p^* = \frac12+\frac1{2(k-1)}$. Following the proof of \cref{lemma:kand-lb} and using the lower bound $\gamma_{\{k-1,k\},k}(\mu(\cD_N)) \leq \sum_{i=0}^k \frac{i}{k-1} \,\cD_N\langle i \rangle$, it suffices to show that \[ \frac{k}2 \alpha'_{k-1} \leq \inf_{\cD_N \in \Delta_k} \frac{\sum_{i=0}^k (1-p^*)^{k-i-1} (p^*)^{i-1} ((k-i)(p^*)^2+p^*(1-p^*)+i(1-p^*)^2) \,\cD_N\langle i \rangle}{\sum_{i=0}^k \frac{i}{k-1}\,\cD_N\langle i \rangle} \] for which by \cref{prop:lin-opt}, it in turn suffices to prove that for each $i \in \kz$, \[ \frac{k}2 \alpha'_{k-1} \frac{i}{k-1} \leq (1-p^*)^{k-i-1} (p^*)^{i-1} ((k-i)(p^*)^2+p^*(1-p^*)+i(1-p^*)^2). \] We again observe that $\alpha'_{k-1} = (1-p^*)^{k/2-1} (p^*)^{k/2-1}$, define $r = \frac{p^*}{1-p^*} = \frac{k}{k-2}$, and factor out $(1-p^*)^{k-1}$, which simplifies our desired inequality to
\begin{equation}\label{eqn:k-1-k-lb-goal}
    \frac{1}{2} r^{i-\frac{k}2-1} \cdot \frac{k-2}{k-1} \left(i + r + (k-i) r^2\right) \geq i.
\end{equation} for each $i \in \kz$. Again, we assume $k \geq 6$ WLOG; the bases cases $i = \frac{k}2-1,\frac{k}2$ can be verified directly, and we proceed by induction. If \cref{eqn:k-1-k-lb-goal} holds for $i$, and we seek to prove it for $i+1$, it suffices to cross-multiply and instead prove the inequality \[ r(i+1+r+(k-(i+1))r^2)i \geq (i+1) (i+r+(k-i)r^2), \] which simplifies to \[ (k-2i)(k-1)(k^2-4i-4) \leq 0, \] which holds whenever $ \frac{k}2\leq i \leq \frac{k^2-4}4$ (and $\frac{k^2-4}4 \geq k$ for all $k \geq 6$). The other direction (where $i \leq \frac{k}2-1$ and we induct downwards) is similar.
\end{proof}

\begin{observation}\label{obs:th34-streaming-lb}
For $\Th^3_4$ the optimal $\cD_N^* = (0,0,\frac45,\frac15,0)$ does participate in a padded one-wise pair with $\cD_Y^* = (\frac4{15},0,0,\frac{11}{15},0)$ (given by $\cD_0 = (0,0,0,1,0)$, $\tau = \frac15$, $\cD'_N = (0,0,1,0,0)$, and $\cD'_Y=(\frac4{15},0,0,\frac{8}{15},0)$) so we can rule out \emph{streaming} $(\frac49+\epsilon)$-approximations to $\mcsp(\Th^3_4)$ in $o(\sqrt n)$ space.
\end{observation}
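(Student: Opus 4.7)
The plan is to verify the padded one-wise pair property for $(\cD_Y^*, \cD_N^*)$ using the decomposition listed in the observation, and then apply \cref{thm:padded-onewise-streaming} to obtain the streaming lower bound at the gap ratio $\alpha(\Th_4^3) = \frac{4}{9}$ guaranteed by \cref{thm:k-1-k-approximability}.

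First I would verify the convex combination identities. Taking $\tau = \frac{1}{5}$, $\cD_0 = (0,0,0,1,0)$, $\cD'_N = (0,0,1,0,0)$, and (normalizing) $\cD'_Y = (\frac{1}{3}, 0, 0, \frac{2}{3}, 0)$, one checks in one line that $\tau\cD_0 + (1-\tau)\cD'_N = \cD_N^*$ and $\tau\cD_0 + (1-\tau)\cD'_Y = \cD_Y^*$. By \cref{lemma:mu-formula}, $\mu(\cD'_N) = \epsilon_{2,4} = 0$ and $\mu(\cD'_Y) = -\frac{1}{3} + \frac{1}{2} \cdot \frac{2}{3} = 0$, so the padded one-wise pair property of \cref{def:padded-onewise} holds.

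Next I would compute $\beta_{\{3,4\}}(\cD_N^*)$ and $\gamma_{\{3,4\}}(\cD_Y^*)$. Directly, $\gamma_{\{3,4\}}(\cD_Y^*) = \cD_Y^*\langle 3 \rangle + \cD_Y^*\langle 4 \rangle = \frac{11}{15}$. For $\beta_{\{3,4\}}(\cD_N^*) = \sup_p \lambda_{\{3,4\}}(\cD_N^*, p)$, the critical-point analysis already carried out in the proof of \cref{lemma:k-1-k-ub} shows, for $k=4$, that $p^* = \frac{2}{3}$ is the unique maximizer on $[0,1]$ of $\lambda_{\{3,4\}}(\cD_N^*, \cdot)$; substituting $p = \frac{2}{3}, q = \frac{1}{3}$ into \cref{eqn:k-1-k-lambda} yields $\beta_{\{3,4\}}(\cD_N^*) = \frac{44}{135}$, so the ratio $\frac{44/135}{11/15} = \frac{4}{9}$ agrees with $\alpha(\Th_4^3)$ as expected from \cref{eqn:alpha-cdn-cdy}.

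Finally, \cref{thm:padded-onewise-streaming} applied to the padded one-wise pair $(\cD_Y^*, \cD_N^*)$ produces, for every $\epsilon' > 0$, an $\Omega(\sqrt{n})$-space streaming lower bound for the corresponding $(\beta, \gamma)$-gap problem; an $(\frac{4}{9}+\epsilon)$-approximation algorithm would decide this gap for sufficiently small $\epsilon'$, which rules it out. There is no real analytic obstacle; the only items requiring care are the minor renormalization of $\cD'_Y$ (the observation records $(1-\tau)\cD'_Y$ in place of $\cD'_Y$) and correctly matching $\beta_S(\cD_N^*)$ and $\gamma_S(\cD_Y^*)$ to the gap parameters of \cref{thm:padded-onewise-streaming} so as to recover the optimization ratio in \cref{eqn:alpha-cdn-cdy}.
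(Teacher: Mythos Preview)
Your proposal is correct and follows exactly the verification the paper intends: the observation is stated without a separate proof, and the data it records (namely $\tau$, $\cD_0$, $\cD'_N$, $\cD'_Y$) is meant to be checked just as you do, then fed into \cref{thm:padded-onewise-streaming}. You also correctly spot that the $\cD'_Y$ printed in the observation is the unnormalized $(1-\tau)\cD'_Y$ and that the genuine $\cD'_Y=(\tfrac13,0,0,\tfrac23,0)$ has marginal zero, and your computation $\beta_{\{3,4\}}(\cD_N^*)=\tfrac{44}{135}$, $\gamma_{\{3,4\}}(\cD_Y^*)=\tfrac{11}{15}$, ratio $\tfrac49$, is right.
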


\subsection{$\Ex^{(k+1)/2}_k$ for (small) odd $k$}\label{sec:k+1/2-analysis}

In this section, we prove bounds on the sketching approximability of $\Ex^{(k+1)/2}_k$ for odd $k\in \{3,\dots,51\}$. Define $\cD_{0,k} \in \Delta_k$ by $\cD_{0,k}\langle 0 \rangle=\frac{k-1}{2k}$ and $\cD_{0,k}\langle k \rangle = \frac{k+1}{2k}$. We prove the following two lemmas:

\begin{lemma}\label{lemma:k+1/2-ub}
For all odd $k \geq 3$, $\alpha(\Ex^{(k+1)/2}_k) \leq \lambda_{\{\frac{k+1}2\}}(\cD_{0,k},p'_k)$, where $p'_k \eqdef \frac{3 k - k^2 + \sqrt{4 k + k^2 - 2 k^3 + k^4}}{4k}$.
\end{lemma}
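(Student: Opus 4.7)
The plan is to apply the general upper bound $\alpha(f_{S,k}) \leq \beta_S(\cD_N)/\gamma_{S,k}(\mu(\cD_N))$ from \cref{eqn:alpha-optimize-over-dn} with the explicit test distribution $\cD_N^* \eqdef \cD_{0,k}$, exactly in the style of the proofs of \cref{lemma:kand-ub} and \cref{lemma:k-1-k-ub}. The two quantities I will need to control are the denominator $\gamma_{\{(k+1)/2\},k}(\mu(\cD_{0,k}))$ and the value $\beta_{\{(k+1)/2\}}(\cD_{0,k}) = \sup_{p\in[0,1]}\lambda_{\{(k+1)/2\}}(\cD_{0,k},p)$, and I will argue that the former equals $1$ while the latter is achieved precisely at $p = p'_k$.

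For the denominator, \cref{lemma:mu-formula} gives $\mu(\cD_{0,k}) = -\tfrac{k-1}{2k} + \tfrac{k+1}{2k} = \tfrac{1}{k}$, which equals $\epsilon_{(k+1)/2,k}$ since $k$ is odd. Because $S = \{\tfrac{k+1}{2}\}$ has $s = t = \tfrac{k+1}{2}$, this marginal lies in the (degenerate) middle interval of \cref{lemma:gamma-formula}, so $\gamma_{\{(k+1)/2\},k}(1/k) = 1$, and the upper bound reduces to $\alpha(\Ex^{(k+1)/2}_k) \leq \sup_{p\in[0,1]}\lambda_{\{(k+1)/2\}}(\cD_{0,k},p)$. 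For the numerator, since $\cD_{0,k}$ is supported only on Hamming weights $0$ and $k$, the double sum in \cref{lemma:lambda-formula} collapses (only $j = 0$ survives at $i = 0$, and only $j = s$ at $i = k$) to $\lambda_{\{(k+1)/2\}}(\cD_{0,k},p) = \binom{k}{(k+1)/2}\,(pq)^{(k-1)/2}\bigl[\tfrac{k-1}{2k}q + \tfrac{k+1}{2k}p\bigr]$, with $q = 1-p$.

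Finally, I will differentiate in $p$, factor out $(pq)^{(k-3)/2}$, and clear denominators to reduce the interior critical-point condition on $(0,1)$ to the quadratic $4kp^2 + (2k^2 - 6k)p - (k-1)^2 = 0$. The quadratic formula produces the two roots $\bigl((3k-k^2) \pm \sqrt{k^4 - 2k^3 + k^2 + 4k}\bigr)/(4k)$; since the product of the roots is $-(k-1)^2/(4k) < 0$, exactly one root is positive, and inspection identifies it as $p'_k$. Elementary estimates ($\sqrt{k^4 - 2k^3 + k^2 + 4k} > k^2 - 3k$ and $<k^2 + k$, each equivalent to the transparent inequality $4k(k-1)^2 > 0$ after squaring) confirm $p'_k \in (0,1)$. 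Since $\lambda_{\{(k+1)/2\}}(\cD_{0,k},\cdot)$ vanishes at both endpoints (as $k \geq 3$ makes $(pq)^{(k-1)/2}$ vanish) and is strictly positive on the open interval, its unique interior critical point $p'_k$ must be the global maximizer. The main obstacle I anticipate is simply the algebraic care required to derive the quadratic cleanly and to verify that $p'_k$ rather than its conjugate is the positive root; the remainder is routine bookkeeping within the \cite{CGSV21-boolean} framework.
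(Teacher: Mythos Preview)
Your proposal is correct and follows essentially the same approach as the paper: plug in the test distribution $\cD_{0,k}$, use $\gamma_{\{(k+1)/2\},k}(\mu(\cD_{0,k}))=1$, compute $\lambda_{\{(k+1)/2\}}(\cD_{0,k},p)$ via \cref{lemma:lambda-formula} (the paper's \cref{eqn:k+1/2-d0_k}), differentiate, and identify $p'_k$ as the unique interior critical point from the resulting quadratic. The only minor slip is that the bound $\sqrt{k^4-2k^3+k^2+4k}<k^2+k$ actually squares to $4k(k-1)(k+1)>0$ rather than $4k(k-1)^2>0$, but this is equally transparent and does not affect the argument.
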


\begin{lemma}\label{lemma:k+1/2-lb}
The following holds for all odd $k \in \{3,\ldots,51\}$. For all $p \in [0,1]$, the expression $ \frac{\lambda_{\{\frac{k+1}2\}}(\cdot,p)}{\gamma_{\{\frac{k+1}2\},k}(\mu(\cdot))}$ is minimized at $\cD_{0,k}$.
\end{lemma}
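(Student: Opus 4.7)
My plan is to reduce the claim to a finite list of univariate polynomial inequalities in $p$, which for each odd $k \in \{3, 5, \ldots, 51\}$ can be verified by direct computer-algebraic checks.

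First I reformulate. Setting $L(p) := \lambda_{\{(k+1)/2\}}(\cD_{0,k}, p)$, a direct computation gives $\mu(\cD_{0,k}) = 1/k = \epsilon_{(k+1)/2,k}$, so by Lemma~\ref{lemma:gamma-formula} we have $\gamma_{\{(k+1)/2\},k}(\mu(\cD_{0,k})) = 1$. The lemma is thus equivalent to the inequality $\lambda_{\{(k+1)/2\}}(\cD, p) \geq L(p) \cdot \gamma_{\{(k+1)/2\},k}(\mu(\cD))$ holding for all $\cD \in \Delta_k$. Since the singleton $S = \{(k+1)/2\}$ has equal smallest and largest elements, the ``flat'' middle piece of Lemma~\ref{lemma:gamma-formula} collapses to the single point $\mu = 1/k$, and we obtain $\gamma_{\{(k+1)/2\},k}(\mu) = \min\{A(\mu), B(\mu)\}$ with $A(\mu) = k(1+\mu)/(k+1)$ active for $\mu \leq 1/k$ and $B(\mu) = k(1-\mu)/(k-1)$ active for $\mu \geq 1/k$.

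Splitting $\Delta_k$ along the hyperplane $\mu(\cD) = 1/k$ yields two half-polytopes on each of which the target inequality is linear in $\cD$ (with $\gamma$ replaced by the relevant affine piece $A$ or $B$). By linearity, or equivalently by Proposition~\ref{prop:lin-opt} applied with the extreme points of each half-polytope as basis vectors, it suffices to verify each linear inequality at those extreme points. The extreme points are either (a) pure atoms, i.e., distributions with $\cD\langle i\rangle = 1$ for a single $i$ in the appropriate range, or (b) two-point ``boundary pair'' distributions supported on $\{i, j\}$ with $i \leq (k-1)/2$, $j \geq (k+3)/2$, and mixing weight $\lambda$ chosen so that $\mu = 1/k$. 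Writing $c_i(p)$ for the value of $\lambda_{\{(k+1)/2\}}$ on the pure atom at weight $i$ (computable from Lemma~\ref{lemma:lambda-formula}), the resulting extremal inequalities read $c_i(p) \geq L(p) \cdot 2i/(k+1)$ for $i \leq (k+1)/2$, $c_i(p) \geq L(p) \cdot 2(k-i)/(k-1)$ for $i \geq (k+1)/2$, and $\lambda c_i(p) + (1-\lambda) c_j(p) \geq L(p)$ at each boundary pair (which is automatic and tight for $(i, j) = (0, k)$, since that mixture equals $\cD_{0,k}$ itself).

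Each such inequality is a polynomial in $p$ of degree at most $k$, and for each fixed $k$ there are $O(k)$ pure-atom inequalities together with $O(k^2)$ boundary-pair inequalities. Each can be verified to hold on $[0,1]$ by root isolation or resultant computation on the appropriate polynomial. The principal obstacle is that the resulting polynomial inequalities do not appear to admit a clean uniform-in-$k$ closed-form or sum-of-squares proof, so each $k$ must be handled separately; the cutoff $k \leq 51$ reflects both the practical limit of this case-by-case computer verification and possibly the range of $k$ within which $\cD_{0,k}$ is genuinely the pointwise minimizer for every $p \in [0,1]$.
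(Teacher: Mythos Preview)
Your proposal is correct and follows essentially the same approach as the paper: both split $\Delta_k$ along the hyperplane $\mu(\cD)=1/k$ into two pieces on which $\gamma_{\{(k+1)/2\},k}(\mu(\cdot))$ is affine, use \cref{prop:lin-opt} to reduce to the extreme points of each piece (pure atoms $\cD_i$ together with two-point ``boundary'' mixtures $\cD_{i,j}$ on the hyperplane), and then verify the resulting $O(k^2)$ degree-$k$ polynomial inequalities in $p$ by computer for each odd $k\in\{3,\ldots,51\}$. Your observation that the $(i,j)=(0,k)$ boundary pair recovers $\cD_{0,k}$ and gives equality is also in line with the paper's setup.
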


We begin by writing an explicit formula for $\lambda_{\{\frac{k+1}2\}}$. \cref{lemma:lambda-formula} gives \[ \lambda_{\{\frac{k+1}2\}}(\cD,p) = \sum_{i=0}^k \left(\sum_{j=\max\{0,i-\frac{k-1}2\}}^{\min\{i,\frac{k+1}2\}} \binom{i}j \binom{k}{\frac{k+1}2-j} (1-p)^{(k+1)/2+i-2j} p^{(k-1)/2-i+2j}\right)\,\cD\langle i \rangle. \] For $i \leq \frac{k-1}2$, the sum over $j$ goes from $0$ to $i$, and for $i \geq \frac{k+1}2$, it goes from $i-\frac{k-1}2$ to $\frac{k+1}2$. Thus, plugging in $\cD_{0,k}$, we get:
\begin{equation}
    \lambda_{\{\frac{k+1}2\}}(\cD_{0,k},p) = \binom{k}{\frac{k+1}2} \left(\frac{k-1}{2k}(1-p)^{(k+1)/2}p^{(k-1)/2} + \frac{k+1}{2k}(1-p)^{(k-1)/2}p^{(k+1)/2}\right).\label{eqn:k+1/2-d0_k}
\end{equation}

By \cref{lemma:gamma-formula,lemma:mu-formula}, $\gamma_{\{\frac{k+1}2\},k}(\mu(\cD_{0,k})) = \gamma_{\{\frac{k+1}2\},k}(\frac1k) = 1$. Thus, \cref{lemma:k+1/2-ub,lemma:k+1/2-lb} together imply the following theorem:

\begin{theorem}\label{thm:k+1/2-approximability}
For odd $k \in \{3,\ldots,51\}$, \[ \alpha(\Ex^{(k+1)/2}_k) = \binom{k}{\frac{k+1}2} \left(\frac{k-1}{2k}(1-p'_k)^{(k+1)/2}(p'_k)^{(k-1)/2} + \frac{k+1}{2k}(1-p'_k)^{(k-1)/2}(p'_k)^{(k+1)/2}\right), \] where $p'_k = \frac{3 k - k^2 + \sqrt{4 k + k^2 - 2 k^3 + k^4}}{4k}$ as in \cref{lemma:k+1/2-ub}.
\end{theorem}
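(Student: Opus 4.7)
The plan is to derive Theorem~\ref{thm:k+1/2-approximability} as an essentially mechanical consequence of Lemmas~\ref{lemma:k+1/2-ub} and~\ref{lemma:k+1/2-lb}, combined with the max-min inequality~\eqref{eqn:max-min}. The key preliminary observation is a short evaluation of the ``reference'' quantities at $\cD_{0,k}$. By Lemma~\ref{lemma:mu-formula}, since $\cD_{0,k}$ is supported on Hamming weights $0$ and $k$ only,
\[
\mu(\cD_{0,k}) \;=\; -1\cdot\tfrac{k-1}{2k} + 1\cdot\tfrac{k+1}{2k} \;=\; \tfrac{1}{k}.
\]
For $S=\{(k+1)/2\}$ we have $s=t=(k+1)/2$ and $\epsilon_{s,k}=-1+\tfrac{k+1}{k}=\tfrac{1}{k}$, so $\mu(\cD_{0,k})$ lies in the (degenerate) flat interval $[\epsilon_{s,k},\epsilon_{t,k}]$ and Lemma~\ref{lemma:gamma-formula} yields $\gamma_{\{(k+1)/2\},k}(\mu(\cD_{0,k})) = 1$. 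This lets us drop the denominator whenever the ratio is evaluated at $\cD_{0,k}$.

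The upper bound side is immediate: Lemma~\ref{lemma:k+1/2-ub} directly gives $\alpha(\Ex^{(k+1)/2}_k) \leq \lambda_{\{(k+1)/2\}}(\cD_{0,k},p'_k)$, and substituting into the explicit formula~\eqref{eqn:k+1/2-d0_k} produces exactly the closed-form right-hand side stated in the theorem. For the matching lower bound, I apply~\eqref{eqn:max-min} to get
\[
\alpha(\Ex^{(k+1)/2}_k) \;\geq\; \sup_{p\in[0,1]}\,\inf_{\cD_N\in\Delta_k}\, \frac{\lambda_{\{(k+1)/2\}}(\cD_N,p)}{\gamma_{\{(k+1)/2\},k}(\mu(\cD_N))}.
\]
Lemma~\ref{lemma:k+1/2-lb} says that for each fixed $p \in [0,1]$ the inner infimum is attained at $\cD_N = \cD_{0,k}$, and by the preliminary observation the infimum value is simply $\lambda_{\{(k+1)/2\}}(\cD_{0,k},p)$. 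Taking the supremum over $p$ and then specializing at $p=p'_k$ gives $\alpha(\Ex^{(k+1)/2}_k) \geq \lambda_{\{(k+1)/2\}}(\cD_{0,k},p'_k)$, matching the upper bound.

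Combining the two bounds yields $\alpha(\Ex^{(k+1)/2}_k) = \lambda_{\{(k+1)/2\}}(\cD_{0,k},p'_k)$, and expanding via~\eqref{eqn:k+1/2-d0_k} produces the closed form in the statement. The main obstacles are hidden inside the two lemmas themselves: Lemma~\ref{lemma:k+1/2-ub} amounts to a single-variable calculus check that $p'_k$ (the relevant root of $\tfrac{d}{dp}\lambda_{\{(k+1)/2\}}(\cD_{0,k},p)$) is the maximizer on $[0,1]$, while Lemma~\ref{lemma:k+1/2-lb} is the genuinely hard step and is presumably where the restriction $k\in\{3,\ldots,51\}$ enters (via a computer-assisted verification, in the spirit of the max-min method, that $\cD_{0,k}$ minimizes the ratio uniformly in $p$). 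Once both lemmas are in hand, the derivation of the theorem itself is purely mechanical.
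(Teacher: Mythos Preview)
Your proposal is correct and follows essentially the same route as the paper: compute $\mu(\cD_{0,k})=\tfrac1k$ so that $\gamma_{\{(k+1)/2\},k}(\mu(\cD_{0,k}))=1$, invoke Lemma~\ref{lemma:k+1/2-ub} for the upper bound, and combine the max-min inequality~\eqref{eqn:max-min} with Lemma~\ref{lemma:k+1/2-lb} for the matching lower bound. Your reading of where the restriction $k\in\{3,\ldots,51\}$ enters (the computer-verified Lemma~\ref{lemma:k+1/2-lb}) is also accurate.
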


Recall that $\rho(f_{(k+1)/2,k}) = \binom{k}{\frac{k+1}2} 2^{-k}$. Although we currently lack a lower bound on $\alpha(\Ex^{(k+1)/2}_k)$ for large odd $k$, the upper bound from \cref{lemma:k+1/2-ub} suffices to prove \cref{thm:k+1/2-asymptotic-lb}, i.e., it can be verified that \[ \lim_{k \text{ odd} \to \infty} \frac{\binom{k}{\frac{k+1}2} \left( \frac{k-1}{2k}(1-p'_k)^{(k+1)/2}(p'_k)^{(k-1)/2} + \frac{k+1}{2k}(1-p'_k)^{(k-1)/2}(p'_k)^{(k+1)/2}\right)}{\rho(\Ex^{(k+1)/2}_k)} = 1. \]

We remark that for $\Ex^{(k+1)/2}_k$, our lower bound (\cref{lemma:k+1/2-lb}) is \emph{stronger} than what we were able to prove for $\kand$ (\cref{lemma:kand-lb}) and $\Th^{k-1}_k$ (\cref{lemma:k-1-k-lb}) because the inequality holds regardless of $p$. This is fortunate for us, as the optimal $p^*$ from \cref{lemma:k+1/2-ub} is rather messy.\footnote{The analogous statement is false for e.g. $\threeand$, where we had $\cD_N^* = (0,0,1,0)$, but at $p=\frac34$, \[ \frac{\lambda_{\{3\}}((0,\frac12,\frac12,0),\frac34)}{\gamma_{\{3\},3}(\mu(0,\frac12,\frac12,0))} = \frac3{16} \leq \frac{27}{128} =\frac{\lambda_{\{3\}}((0,0,1,0),\frac34)}{\gamma_{\{3\},3}(\mu(0,0,1,0))}. \]} It remains to prove \cref{lemma:k+1/2-ub,lemma:k+1/2-lb}.

\begin{proof}[Proof of \cref{lemma:k+1/2-ub}]
Taking the derivative with respect to $p$ of \cref{eqn:k+1/2-d0_k} yields \[ \frac{d}{dp}\left[\lambda_{\{\frac{k+1}2\}} (\cD_{0,k},p)\right] = - \frac1{4k} \binom{k}{\frac{k+1}2} (pq)^{(k-3)/2} (4kp^2 + (2k^2-6k)p + (-k^2+2k-1)), \] where $q=1-p$. Thus, $\lambda_{\{\frac{k+1}2\}} (\cD_{0,k},\cdot)$ has critical points at $p=0,1,p'_k,$ and $\frac{3 k - k^2 - \sqrt{4 k + k^2 - 2 k^3 + k^4}}{4k}$. This last value is nonpositive for all $k \geq 0$ (since $(3k-k^2)^2-(4 k + k^2 - 2 k^3 + k^4)=-4k(k-1)^2$).
\end{proof}

The proof of our lower bound (\cref{lemma:k+1/2-lb}) is slightly different than those of our earlier lower bounds (i.e., \cref{lemma:kand-lb,lemma:k-1-k-lb}) in the following sense. For $i \in \kz$, let $\cD_i \in \Delta_k$ be defined by $\cD_i\langle i \rangle=1$. For $\kand$ (\cref{lemma:kand-lb}), we used the fact that $\frac{\lambda_{\{k\}}(\cdot,p^*)}{\gamma_{\{k\},k}(\mu(\cdot))}$ is a ratio of linear functions, and thus using \cref{prop:lin-opt}, it is sufficient to verify the lower bound at $\cD_0,\ldots,\cD_k$. For $\Th_k^{k-1}$ (\cref{lemma:k-1-k-lb}), $\frac{\lambda_{\{k-1,k\}}(\cdot,p^*)}{\gamma_{\{k-1,k\},k}(\mu(\cdot))}$ is \emph{not} a ratio of linear functions, because the denominator $\gamma_{\{k-1,k\},k}(\mu(\cD)) = \min\{\sum_{i=0}^k \frac{i}{k-1} \,\cD\langle i \rangle, 1\}$ is not linear over $\Delta_k$. However, we managed to carry out the proof by upper-bounding the denominator with the linear function $\gamma'(\cD) = \sum_{i=0}^k \frac{i}{k-1} \cD\langle i \rangle$, and then invoking \cref{prop:lin-opt} (again, to show that it suffices to verify the lower bound at $\cD_0,\ldots,\cD_k$).

For $\Ex^{(k+1)/2}_k$, we show that it suffices to verify the lower bound on a larger (but still finite) set of distributions.

\begin{proof}[Proof of \cref{lemma:k+1/2-lb}]
Recalling that $\epsilon_{(k+1)/2,k} = \frac1k$, let $\Delta^+_k = \{\cD \in \Delta_k : \mu(\cD) \leq \frac1k\}$ and $\Delta^-_k = \{\cD \in \Delta_k : \mu(\cD) \geq \frac1k\}$. Note that $\Delta_k^+ \cup \Delta_k^- = \Delta_k$, and restricted to either $\Delta_k^+$ or $\Delta_k^-$, $\gamma_{\{\frac{k+1}2\},k}(\mu(\cdot))$ is linear and thus we can apply \cref{prop:lin-opt} to $\frac{\lambda_{\{k-1,k\}}(\cdot,p^*)}{\gamma_{\{k-1,k\},k}(\mu(\cdot))}$.

Let $\cD_{i,j} \in \Delta_k$, for $i < \frac{k+1}2, j > \frac{k+1}2$, be defined by $\cD_{i,j}\langle i \rangle=\frac{2j-(k+1)}{2(j-i)}$ and $\cD_{i,j}\langle j \rangle=\frac{(k+1)-2i}{2(j-i)}$. Note that $\mu(\cD_{i,j}) = \frac1k$ for each $i,j$. We claim that $\{\cD_i\}_{i \leq \frac{k+1}2} \cup \{\cD_{i,j}\}$ are the extreme points of $\Delta_k^+$, or more precisely, that every distribution $\cD \in \Delta_k^+$ can be represented as a convex combination of these distributions. Indeed, this follows constructively from the procedure which, given a distribution $\cD$, subtracts from each $\cD\langle i \rangle$ for $i < \frac{k+1}2$ (adding to the coefficient of the corresponding $\cD_i$) until the marginal of the (renormalized) distribution is $\frac1k$, and then subtracts from pairs $\cD\langle i \rangle,\cD\langle j \rangle$ with $i < \frac{k+1}2$ and $j > \frac{k+1}2$, adding it to the coefficient of the appropriate $\cD_{i,j}$) until $\cD$ vanishes (i.e., $\cD\langle i \rangle$ is zero for all $i \in \kz$). Similarly, every distribution $\cD \in \Delta_k^-$ can be represented as a convex combination of the distributions $\{\cD_i\}_{i \geq \frac{k+1}2} \cup \{\cD_{i,j}\}$. Thus, by \cref{prop:lin-opt}, it is sufficient to verify that \[\frac{\lambda_{\{\frac{k+1}2\}}(\cD,p)}{\gamma_{\{\frac{k+1}2\},k}(\mu(\cD))} \geq \frac{\lambda_{\{\frac{k+1}2\}}(\cD_N^*,p)}{\gamma_{\{\frac{k+1}2\},k}(\mu(\cD_N^*))} \] for each $\cD \in \{\cD_i\}\cup\{\cD_{i,j}\}$. Treating $p$ as a variable, for each odd $k\in\{3,\ldots,51\}$ we produce a list of $O(k^2)$ degree-$k$ polynomial inequalities in $p$ which we verify using Mathematica.
\end{proof}

\subsection{More symmetric functions}\label{sec:other-analysis}

In \cref{table:other-sym-funcs} below, we list four more symmetric Boolean functions (beyond $\kand$, $\Th^{k-1}_k$, and $\Ex^{(k+1)/2}_k$) whose sketching approximability we have analytically resolved using the ``max-min method''. These values were calculated using two functions in the Mathematica code, \texttt{estimateAlpha} --- which numerically or symbolically estimates the $\cD_N$, with a given support, which minimizes $\alpha$ --- and \texttt{testMinMax} --- which, given a particular $\cD_N$, calculates $p^*$ for that $\cD_N$ and checks analytically whether lower-bounding by evaluating $\lambda_S$ at $p^*$ proves that $\cD_N$ is minimal.

\begin{table}[h]
    \centering
    \begin{tabular}{|c|c|c|c|}
        \hline
         $S$ & $k$ & $\alpha$ & $\cD_N^*$ \\ \hline
         $\{2,3\}$ & $3$ & $\frac12+\frac{\sqrt3}{18} \approx 0.5962$ & $(0,\frac12,0,\frac12)$ \\ 
         $\{4,5\}$ & $5$ & $8\rroot(P_1) \approx 0.2831$ & $(0,0,1-\rroot(P_2),\rroot(P_2),0,0)$ \\
         $\{4\}$ & $5$ & $8\rroot(P_3) \approx 0.2394$ & $(0,0,1-\rroot(P_4),\rroot(P_4),0,0)$ \\
         $\{3,4,5\}$ & $5$ & $\frac12+\frac{3\sqrt5}{125} \approx 0.5537$ & $(0,\frac12,0,0,0,\frac12)$\\
         \hline
    \end{tabular}
    \caption{Symmetric functions for which we have analytically calculated exact $\alpha$ values using the ``max-min method''. For a polynomial $P : \R \to \R$ with a \emph{unique} positive real root, let $\rroot(p)$ denote that root, and define the polynomials $P_1(z)=-72+4890z-108999z^2+800000z^3, P_2(z)=-908+5021z-9001z^2+5158z^3$, $P_3(z) = -60+5745z-183426z^2+1953125z^3$, $P_4(z) = -344+1770z-3102z^2+1811z^3$. (We note that in the $f_{\{4\},5}$ and $f_{\{4,5\},5}$ calculations, we were required to check equality of roots numerically (to high precision) instead of analytically).}
    \label{table:other-sym-funcs}
\end{table}

We remark that two of the cases in \cref{table:other-sym-funcs} (as well as $\kand$), the optimal $\cD_N$ is rational and supported on two coordinates. However, in the other two cases in \cref{table:other-sym-funcs}, the optimal $\cD_N$ involves roots of a cubic.

In \cref{sec:k+1/2-analysis}, we showed that $\cD_N^*$ defined by $\cD_N^*\langle 0 \rangle=\frac{k-1}{2k}$ and $\cD_N^*\langle k \rangle=\frac{k+1}{2k}$ is optimal for $\Ex^{(k+1)/2}_k$ for odd $k \in \{3,\ldots,51\}$. Using the same $\cD_N^*$, we are also able to resolve 11 other cases in which $S$ is ``close to'' $\{\frac{k+1}2\}$; for instance, $S=\{5,6\},\{5,6,7\},\{5,7\}$ for $k=9$. (We have omitted the values of $\alpha$ and $\cD_N$ because they are defined using the roots of polynomials of degree up to 8.)

In all previously-mentioned cases, the condition ``$\cD_N^*$ has support size $2$'' was helpful, as it makes the optimization problem over $\cD_N^*$ essentially univariate; however, we have confirmed analytically in two other cases ($S=\{3\},k=4$ and $S=\{3,5\},k=5$) that ``max-min method on distributions with support size two'' does not suffice for tight bounds on $\alpha$ (see \texttt{testDistsWithSupportSize2} in the Mathematica code). However, using the max-min method with $\cD_N$ supported on two levels still achieves decent (but not tight) bounds on $\alpha$. For $S = \{3\},k=4$, using $\cD_N = (\frac14,0,0,0,\frac34)$, we get the bounds $\alpha(f_{\{3\},4}) \in [0.3209,0.3295]$ (the difference being $2.67\%$). For $S = \{3,5\},k=5$, using $\cD_N = (\frac14,0,0,0,\frac34,0)$, we get $\alpha(f_{\{3,5\},5}) \in [0.3416,0.3635]$ (the difference being $6.42\%$).

Finally, we have also analyzed cases where we get numerical solutions which are very close to tight, but we lack analytical solutions because they likely involve roots of high-degree polynomials. For instance, in the case $S = \{4,5,6\}, k=6$,  setting $\cD_N = (0,0,0,0.930013,0,0,0.069987)$  gives $\alpha(f_{\{4,5,6\},6}) \in [0.44409972,0.44409973]$, differing only by 0.000003\%. (We conjecture here that $\alpha=\frac49$.) For $S=\{6,7,8\},k=8$, using
$\cD_N=(0,0,0,0,0.699501,0.300499)$, we get the bounds $\alpha(f_{\{6,7,8\},8}) \in [0.20848,0.20854]$ (the difference being $0.02\%$).\footnote{Interestingly, in this latter case, we get bounds differing by $2.12\%$ using $\cD_N=(0,0,0,0,\frac9{13},\frac4{13},0,0,0)$ in an attempt to continue the pattern from $f_{\{7,8\},8}$ and $f_{\{8\},8}$ (where we set $\cD_N^* = (0,0,0,0,\frac{16}{25},\frac9{25},0,0,0)$ and $(0,0,0,0,\frac{25}{41},\frac{16}{41},0,0,0)$ in \cref{sec:k-1-k-analysis} and \cref{sec:kand-analysis}, respectively).}

\section{Incompleteness of streaming lower bounds: Proving \cref{thm:cgsv-streaming-failure-3and}}\label{sec:cgsv-streaming-failure-3and}

In this section, we prove \cref{thm:cgsv-streaming-failure-3and}, showing that the streaming lower bounds from \cite{CGSV21-boolean} (\cref{thm:padded-onewise-streaming}) cannot characterize the \emph{streaming} approximability of $\threeand$.

\begin{lemma}\label{lemma:3and-unique-minimum}
For $\cD \in \Delta_3$, the expression \[  \frac{\lambda_{\{3\}}(\cD,\frac13\cD\langle 1 \rangle+\frac23\cD\langle 2 \rangle+\cD\langle 3 \rangle)}{\gamma_{\{3\},3}(\mu(\cD))} \] is minimized uniquely at $\cD = (0,0,1,0)$, with value $\frac29$.
\end{lemma}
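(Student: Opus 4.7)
My approach rests on the observation that the chosen evaluation point $p := \tfrac13 \cD\langle 1\rangle + \tfrac23 \cD\langle 2\rangle + \cD\langle 3\rangle$ \emph{coincides} with the denominator $\gamma_{\{3\},3}(\mu(\cD))$ (by \cref{eqn:kand-gamma}). Writing $\cD = (a,b,c,d)$ and using \cref{eqn:kand-lambda}, the target ratio becomes
\[
F(\cD) \;=\; \frac{(1-p)^3 a + (1-p)^2 p\, b + (1-p) p^2 c + p^3 d}{p},
\qquad p = \tfrac{b}{3}+\tfrac{2c}{3}+d,
\]
and the claim reduces to showing $F(\cD) \geq \tfrac{2}{9}$ with equality only at $\cD = (0,0,1,0)$. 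The degenerate case $\cD = (1,0,0,0)$ (where $p=0$) is handled separately: there the ratio is $+\infty$.

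The central idea is to slice the simplex by the value of $p$. For each fixed $p^* \in (0,1]$, the set $\mathcal{P}_{p^*} := \{\cD \in \Delta_3 : p(\cD) = p^*\}$ is a polytope on which $F$ is a \emph{linear} functional of $\cD$ (since $p = p^*$ is held constant there, so the numerator is linear in $\cD$ and the denominator is a fixed constant). Hence $F|_{\mathcal{P}_{p^*}}$ is minimized at an extreme point of $\mathcal{P}_{p^*}$. Since $\mathcal{P}_{p^*}$ is cut from the $3$-simplex by one additional linear equation, these extreme points are distributions supported on at most two of the weight levels $\{0,1,2,3\}$. So the whole problem reduces to checking the inequality along the six ``edges'' parameterized by pairs $0 \le i < j \le 3$ with $p^* \in [i/3, j/3]$.

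On each edge I compute $F$ as an explicit univariate function of $p^*$ and minimize it. For the three edges containing the point $\cD = (0,0,1,0)$ (equivalently, touching $p^* = \tfrac23$) --- namely $(0,2)$, $(1,2)$, $(2,3)$ --- a short calculus check shows $F$ attains its minimum value $\tfrac29$ exactly at that point and is strictly larger elsewhere along each edge. For the remaining three edges I verify $F > \tfrac29$ uniformly: on $(0,1)$ the bound $F \geq \tfrac49$ is immediate from the closed form $F(p^*) = (1-p^*)^2 \bigl(1 - 4p^* + 6(p^*)^2\bigr)/p^*$; on $(1,3)$ the unique interior critical point is $p^* = \tfrac{9}{16}$, giving $F^* = \tfrac{15}{64} > \tfrac29$; and on $(0,3)$ the critical-point equation $(1-p^*)^3(3p^* + 1) = 3(p^*)^4$ reduces the value at the critical point to $4(p^*)^3/(3p^* + 1)$, so the desired bound becomes $18(p^*)^3 - 3p^* - 1 > 0$, which can be certified by locating the critical $p^*$ numerically in a small subinterval of $(\tfrac12, \tfrac23)$ and signing the cubic.

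The main obstacle is the critical-point analysis on edge $(0,3)$, where the minimizing $p^*$ is irrational and must be certified to satisfy a strict polynomial inequality rather than evaluated at a nice rational value. After the six edge inequalities are in hand --- five strict throughout, and the other three achieving $\tfrac29$ only at the shared endpoint $\cD = (0,0,1,0)$ --- the slicing argument immediately yields $F(\cD) \geq \tfrac29$ on all of $\Delta_3$ and pinpoints $(0,0,1,0)$ as the unique minimizer.
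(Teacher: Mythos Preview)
Your argument is correct (modulo a harmless slip in the final sentence: ``five strict throughout, and the other three\ldots'' should read ``three strict throughout, and the other three\ldots'', since there are only six edges total). The key observation that the evaluation point $p$ coincides with the denominator $\gamma_{\{3\},3}(\mu(\cD))$ is exactly right and yields the clean form $F(\cD)=\bigl[(1-p)^3a+(1-p)^2p\,b+(1-p)p^2c+p^3d\bigr]/p$. The slicing-by-$p$ argument is sound: $\mathcal P_{p^*}$ is a $2$-polytope whose vertices are supported on at most two weight levels, so the linear functional $F|_{\mathcal P_{p^*}}$ attains its minimum on one of the six simplex edges. Your edge computations check out; in particular $F=15/64>2/9$ at the unique critical point $p^*=9/16$ on edge $(1,3)$, and for edge $(0,3)$ the reduction to $18(p^*)^3-3p^*-1>0$ is valid. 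To make the last step fully rigorous you can note that $g(p):=3p^4-(1-p)^3(3p+1)$ is strictly increasing on $(0,1]$ (since $g'(p)=12p[p^2+(1-p)^2]$), so the unique critical point satisfies $p^*>0.53$ (because $g(0.53)<0$), and then $h(p):=18p^3-3p-1$ is increasing on that range with $h(0.53)>0$.

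By contrast, the paper's own proof of this lemma simply writes out the ratio and states that ``the expression's minimum, and its uniqueness, are confirmed analytically in the Mathematica code.'' Your approach is thus genuinely different and more informative: it supplies a human-checkable argument, isolates exactly why $(0,0,1,0)$ is the unique minimizer (it is the common endpoint of the three edges on which $F$ can touch $2/9$), and exploits a structural identity ($p=\gamma$) that the paper does not make explicit. The cost is the six separate univariate verifications, one of which (edge $(0,3)$) requires a small numerical certification; the paper's approach trades this for an opaque computer-algebra check.
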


\begin{proof}
Letting $p = \frac13\cD\langle 1 \rangle+\frac23\cD\langle 2 \rangle+\cD\langle 3 \rangle$ and $q = 1-p$, by \cref{lemma:lambda-formula,lemma:gamma-formula,lemma:mu-formula} the expression expands to \[ \frac{\cD\langle 0 \rangle \, p^3 + \cD\langle 1 \rangle \, p^2(1-p)+\cD\langle 2 \rangle \, p(1-p)^2 + \cD\langle 3 \rangle \, (1-p)^3}{\frac12(1-\cD\langle 0 \rangle -\frac13\cD\langle 1 \rangle+\frac13\cD\langle 2 \rangle+\cD\langle 3 \rangle)}. \] The expression's minimum, and its uniqueness, are confirmed analytically in the Mathematica code.
\end{proof}

\begin{lemma}\label{lemma:3and-top-lemma}
Let $X$ be a compact topological space, $Y \subseteq X$ a closed subspace, $Z$ a topological space, and $f : X \to Z$ a continuous map. Let $x^* \in X, z^* \in Z$ be such that $f^{-1}(z^*) = \{x^*\}$. Let $\{x_i\}_{i \in \N}$ be a sequence of points in $Y$ such that $\{f(x_i)\}_{i \in \N}$ converges to $z^*$. Then $x^* \in Y$.
\end{lemma}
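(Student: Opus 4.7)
The plan is to exploit compactness of $X$ to extract a limit of the sequence $\{x_i\}_{i \in \N}$, use closedness of $Y$ to place this limit in $Y$, and then apply continuity of $f$ together with the uniqueness condition $f^{-1}(z^*) = \{x^*\}$ to identify the limit with $x^*$. This is a standard ``compactness + continuity + closed subset'' argument; the lemma is formulated abstractly, but the intended application is to metric spaces, where the argument becomes purely sequential.

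Concretely, I would first invoke compactness of $X$ to extract a convergent subnet $\{x_{i_\alpha}\}$ of the sequence, converging to some $\tilde{x} \in X$. (In the application where $X = \Delta_3$ is metrizable, a convergent subsequence suffices, and no net-theoretic machinery is required.) Since $Y$ is closed in $X$ and every $x_{i_\alpha}$ lies in $Y$, the limit $\tilde{x}$ belongs to $Y$. Next, by continuity of $f$, we have $f(x_{i_\alpha}) \to f(\tilde{x})$ in $Z$; on the other hand, $\{f(x_{i_\alpha})\}$ is a subnet of $\{f(x_i)\}$, which converges to $z^*$ by hypothesis, so $f(x_{i_\alpha}) \to z^*$ as well. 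Assuming $Z$ is Hausdorff, uniqueness of limits yields $f(\tilde{x}) = z^*$, and the hypothesis $f^{-1}(z^*) = \{x^*\}$ forces $\tilde{x} = x^*$. Thus $x^* = \tilde{x} \in Y$, as desired.

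The only real obstacle is a mild technicality rather than a substantive difficulty: for a general compact (non-metrizable) $X$ one must work with subnets rather than subsequences, and the uniqueness-of-limits step requires some separation on $Z$ (Hausdorffness is sufficient). Both subtleties disappear in the metric-space setting of our application, where $X = \Delta_3$ sits inside a Euclidean simplex and $Z$ is Euclidean, so the lemma reduces to the familiar Bolzano--Weierstrass-style extraction argument.
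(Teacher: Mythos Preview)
Your proposal is correct and follows essentially the same approach as the paper: extract a convergent subsequence by compactness, place the limit in $Y$ by closedness, and identify it with $x^*$ via continuity and the uniqueness hypothesis. You are in fact more careful than the paper, which uses subsequences (tacitly assuming sequential compactness) and omits the Hausdorff requirement on $Z$; as you note, both issues are moot in the Euclidean application.
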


\begin{proof}
By compactness of $X$, there is a subsequence $\{x_{j_i}\}_{i \in \N}$ which converges to a limit $\tilde{x}$. By closure, $\tilde{x} \in Y$. By continuity, $f(\tilde x) = z^*$, so $\tilde x = x^*$.
\end{proof}

Finally, we have:

\begin{proof}[Proof of \cref{thm:cgsv-streaming-failure-3and}]
By \cref{lemma:3and-unique-minimum}, $\frac{\beta_{\{3\}}(\cD_N)}{\gamma_{\{3\},3}(\mu(\cD_N))}$ is minimized \emph{uniquely} at $\cD_N^* = (0,0,1,0)$. By \cref{lemma:mu-formula} we have $\mu(\cD_N^*) = \frac13$, and by inspection from the proof of \cref{lemma:gamma-formula} below, $\gamma_{\{3\}}(\cD_Y)$ with $\mu(\cD_Y)=\frac13$ is uniquely minimized by $\cD_Y^*=(\frac13,0,0,\frac23)$.

Finally, we rule out the possibility of an infinite sequence of padded one-wise pairs which achieve ratios arbitrarily close to $\frac29$ using topological properties. View a distribution $\cD \in \Delta_3$ as the vector $(\cD\langle 0 \rangle,\cD\langle 1 \rangle,\cD\langle 2 \rangle,\cD\langle 3 \rangle) \in \R^4$. Let $D \subset \R^4$ denote the set of such distributions. Let $M\subset D \times D \subset\R^8$ denote the subset of pairs of distributions with matching marginals, and let $M' \subset M$ denote the subset of pairs with uniform marginals and $P \subset M$ the subset of padded one-wise pairs. $D$, $M$, $M'$, and $P$ are compact (under the Euclidean topology); indeed, $D$, $M$, and $M'$ are bounded and defined by a finite collection of linear equalities and strict inequalities, and letting $M' \subset M$ denote the subset of pairs of distributions with matching \emph{uniform} marginals, $P$ is the image of the compact set $[0,1] \times D \times M' \subset \R^{13}$ under the continuous map $\tau \times \cD_0 \times (\cD'_Y,\cD'_N) \mapsto (\tau \cD_0 + (1-\tau) \cD'_Y,\tau \cD_0+(1-\tau)\cD'_N)$. Hence, $P$ is closed.

Now the function \[ \alpha : M \to \R \cup \{\infty\}: (\cD_N,\cD_Y) \mapsto \frac{\beta_{\{3\}}(\cD_N)}{\gamma_{\{3\}}(\cD_Y)} \] is continuous, since a ratio of continuous functions is continuous, and $\beta_{\{3\}}$ is a single-variable supremum of a continuous function (i.e., $\lambda_S$) over a compact interval, which is in general continuous in the remaining variables. Thus, if there were a sequence of padded one-wise pairs $\{(\cD_N^{(i)},\cD_Y^{(i)}) \in P\}_{i \in \N}$ such that $\alpha(\cD_N^{(i)},\cD_Y^{(i)})$ converges to $\frac29$ as $i \to \infty$, since $M$ is compact and $P$ is closed, \cref{lemma:3and-top-lemma,lemma:3and-unique-minimum} imply that $(\cD_N^*,\cD_Y^*) \in P$, a contradiction.
\end{proof}

\section{Simple sketching algorithms for threshold functions}\label{sec:thresh-alg}

The main goal of this section is to prove \cref{thm:thresh-bias-alg}, giving a simple ``bias-based'' sketching algorithm for threshold functions $\Th^i_k$. Given an instance $\Psi$ of $\mcsp(\Th^i_k)$, for $i \in [n]$, let $\diff_i(\Psi)$ denote the total weight of clauses in which $x_i$ appears positively minus the weight of those in which it appears negatively; that is, if $\Psi$ consists of clauses $(\vecb(1),\vecj(1)),\ldots,(\vecb(m),\vecj(m))$ with weights $w_1,\ldots,w_m$, then \[ \diff_i(\Psi) \eqdef \sum_{\ell\in[m] \text{ s.t. } j(\ell)_t = i \text{ for some } t\in[k]}  b(\ell)_t w_\ell.  \] Let $\bias(\Psi) \eqdef \frac1{kW} \sum_{i=1}^n |\diff_i(\Psi)|$, where $W = \sum_{\ell=1}^m w_\ell$ is the total weight in $\Psi$.

Let $S = \{i,\ldots,k\}$ so that $\Th^i_k = f_{S,k}$. Recall the definitions of $\beta_{S,k}(\mu)$ and $\gamma_{S,k}(\mu)$ from \cref{eqn:alpha-optimize-over-mu}. Our simple algorithm for $\mcsp(\Th^i_k)$ relies on the following two lemmas, which we prove below:

\begin{lemma}\label{lemma:thresh-value-ub}
$\val_\Psi \leq \gamma_{S,k}(\bias(\Psi))$.
\end{lemma}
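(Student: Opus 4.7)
The plan is to exhibit, for any optimal assignment $\vecsigma^*$, a particular symmetric distribution $\cD^* \in \Delta_k$ that simultaneously (a) witnesses $\gamma_{S,k}(\mu(\cD^*))\geq \val_\Psi$, and (b) has marginal bounded above by $\bias(\Psi)$. Combining these with monotonicity of $\gamma_{S,k}$ in its argument (for threshold $S$) yields the claim.

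First I would define the distribution. Let $\vecsigma^* \in \{-1,1\}^n$ be an optimal assignment for $\Psi$. For each constraint $C_\ell = (\vecb(\ell),\vecj(\ell))$ with weight $w_\ell$, the ``satisfaction vector'' is $\vecv(\ell) \eqdef \vecb(\ell)\odot \vecsigma^*|_{\vecj(\ell)} \in \{-1,1\}^k$, and the constraint is satisfied iff $\wt(\vecv(\ell)) \in S$. Let $\cD_0$ be the distribution on $\{-1,1\}^k$ that outputs $\vecv(\ell)$ with probability $w_\ell/W$, and let $\cD^* \in \Delta_k$ be the symmetrization of $\cD_0$ over coordinate permutations (equivalently, $\cD^*\langle h \rangle = \Pr_{\vecv \sim \cD_0}[\wt(\vecv)=h]$ for each $h \in \{0,\ldots,k\}$). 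Directly from this construction,
\[ \gamma_S(\cD^*) = \sum_{h \in S}\cD^*\langle h\rangle = \Pr_{\vecv \sim \cD_0}[\wt(\vecv) \in S] = \val_\Psi(\vecsigma^*) = \val_\Psi, \]
so $\gamma_{S,k}(\mu(\cD^*))\geq \gamma_S(\cD^*)=\val_\Psi$ by the definition \cref{eqn:beta_Sk-gamma_Sk-def}.

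Next I would compute the marginal of $\cD^*$. Symmetry and linearity of expectation give
\[ \mu(\cD^*) = \frac{1}{k}\sum_{j=1}^k \Exp_{\vecv\sim \cD_0}[v_j] = \frac{1}{kW}\sum_{\ell=1}^m w_\ell \sum_{j=1}^k b(\ell)_j \,\sigma^*_{j(\ell)_j}. \]
Swapping the order of summation over $\ell, j$ and indexing by the variable $i = j(\ell)_j$, the inner double sum collapses to $\sum_{i=1}^n \sigma^*_i\,\diff_i(\Psi)$ by the definition of $\diff_i$. Since $\sigma^*_i \in \{-1,1\}$, we have $\sigma^*_i\,\diff_i(\Psi) \leq |\diff_i(\Psi)|$, hence
\[ \mu(\cD^*) \;\leq\; \frac{1}{kW}\sum_{i=1}^n |\diff_i(\Psi)| \;=\; \bias(\Psi). \]

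Finally I would invoke monotonicity. Since $S = \{i,\ldots,k\}$ has largest element $t = k$, we have $\epsilon_{t,k}=1$, so the third case of \cref{lemma:gamma-formula} is vacuous and
\[ \gamma_{S,k}(\mu) = \min\!\left\{\tfrac{1+\mu}{1+\epsilon_{i,k}},\,1\right\}, \]
which is nondecreasing on $[-1,1]$. Combining the three ingredients,
\[ \val_\Psi \;=\; \gamma_S(\cD^*) \;\leq\; \gamma_{S,k}(\mu(\cD^*)) \;\leq\; \gamma_{S,k}(\bias(\Psi)), \]
which is the desired bound. The only nontrivial step is the marginal computation, and that is a routine double-sum bookkeeping; monotonicity of $\gamma_{S,k}$ for threshold $S$ is essential and fails for general symmetric $f$, which is consistent with the theorem being stated only for threshold functions.
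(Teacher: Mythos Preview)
Your proof is correct and is essentially the same as the paper's: the paper flips $\Psi$ by the optimal assignment and considers $\dsym_{\Psi^{\vecopt}}$, which is precisely your symmetrized distribution $\cD^*$ of satisfaction vectors $\vecb(\ell)\odot\vecsigma^*|_{\vecj(\ell)}$, and then uses the same chain $\val_\Psi = \gamma_S(\cD^*) \le \gamma_{S,k}(\mu(\cD^*)) \le \gamma_{S,k}(\bias(\Psi))$ via the same marginal computation and monotonicity. The only cosmetic difference is that the paper routes the bookkeeping through its \cref{prop:flips} and \cref{prop:dsym} rather than computing directly as you do.
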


\begin{lemma}\label{lemma:thresh-value-lb}
$\val_\Psi \geq \beta_{S,k}(\bias(\Psi))$.
\end{lemma}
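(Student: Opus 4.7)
The plan is to exhibit an explicit (random) assignment whose expected value is at least $\beta_{S,k}(\bias(\Psi))$. Concretely, let $\sigma^*_i = \sign(\diff_i(\Psi))$ (breaking ties with $+1$), and for a parameter $p \in [0,1]$ to be chosen, consider the randomized assignment $\vecsigma = \vecsigma^* \odot \vectau$ where $\vectau \sim \bern(p)^n$ coordinatewise independently. The idea is that $\Exp_\vectau[\val_\Psi(\vecsigma)]$ is precisely $\lambda_S(\cD_\Psi, p)$ for an appropriate symmetric distribution $\cD_\Psi \in \Delta_k$ whose marginal $\mu(\cD_\Psi)$ equals $\bias(\Psi)$; then $\val_\Psi \geq \sup_p \lambda_S(\cD_\Psi, p) = \beta_S(\cD_\Psi) \geq \beta_{S,k}(\bias(\Psi))$ by definition of the latter as an infimum over distributions with the same marginal.

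To construct $\cD_\Psi$, for each clause $C_\ell = (\vecb(\ell),\vecj(\ell))$ with weight $w_\ell$, set $\vecb'(\ell) = \vecb(\ell) \odot \vecsigma^*|_{\vecj(\ell)}$, let $n_i = \sum_{\ell : \wt(\vecb'(\ell)) = i} w_\ell$, and define $\cD_\Psi\langle i \rangle = n_i/W$. Since the indices $j(\ell)_1,\ldots,j(\ell)_k$ are distinct, $\vectau|_{\vecj(\ell)} \sim \bern(p)^k$ is i.i.d., so
\[
    \Exp_\vectau[f_{S,k}(\vecb(\ell) \odot \vecsigma|_{\vecj(\ell)})] = \Exp_{\vectau'\sim\bern(p)^k}[f_{S,k}(\vecb'(\ell)\odot\vectau')].
\]
By symmetry of $f_{S,k}$, the right-hand side depends on $\vecb'(\ell)$ only through its Hamming weight, and averaging over clauses (with weights $w_\ell/W$) gives exactly $\lambda_S(\cD_\Psi, p)$ as defined in \cref{eqn:lambda-def}.

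It remains to verify $\mu(\cD_\Psi) = \bias(\Psi)$. Using \cref{lemma:mu-formula} and the identity $\wt(\vecx) = \tfrac{k}{2} + \tfrac{1}{2}\sum_t x_t$ for $\vecx \in \{-1,1\}^k$,
\[
    \mu(\cD_\Psi) = -1 + \tfrac{2}{kW}\sum_i i\, n_i = -1 + \tfrac{2}{kW}\sum_\ell w_\ell \wt(\vecb'(\ell)) = \tfrac{1}{kW}\sum_\ell w_\ell \sum_t b(\ell)_t\, \sigma^*_{j(\ell)_t}.
\]
Exchanging the order of summation and grouping by variable index yields $\sum_i \sigma^*_i \diff_i(\Psi)$, which equals $\sum_i |\diff_i(\Psi)| = kW\bias(\Psi)$ by the choice $\sigma^*_i = \sign(\diff_i(\Psi))$. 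Hence $\mu(\cD_\Psi) = \bias(\Psi)$, and combining with the chain of inequalities above completes the proof.

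There is no genuinely hard step here: once one guesses the right randomized assignment, the two ingredients are (i) recognizing that averaging over the i.i.d.\ perturbation realizes the symmetric functional $\lambda_S$, and (ii) the bookkeeping that identifies $\mu(\cD_\Psi)$ with $\bias(\Psi)$. The mild subtlety is that we do \emph{not} need $f_{S,k}$ to be a threshold function anywhere in the argument; the lemma holds for all symmetric $f_{S,k}$, matching the upper bound of \cref{lemma:thresh-value-ub}.
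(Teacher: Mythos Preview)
Your proof is correct and follows essentially the same approach as the paper: the paper's $\vecmaj$ is your $\vecsigma^*$, and your $\cD_\Psi$ is precisely the paper's $\dsym_{\Psi^{\vecmaj}}$ (the symmetrized canonical distribution of the flipped instance). The only difference is packaging: the paper first ``flips'' $\Psi$ by $\vecmaj$ and then invokes general propositions about $\dsym$ (\cref{prop:flips,prop:dsym}), whereas you carry out the same computation directly on the modified negation patterns $\vecb'(\ell)$; the chain $\val_\Psi \geq \sup_p \lambda_S(\cD_\Psi,p) = \beta_S(\cD_\Psi) \geq \beta_{S,k}(\mu(\cD_\Psi)) = \beta_{S,k}(\bias(\Psi))$ is identical in both. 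Your closing remark that the threshold hypothesis is unused here is also correct (though note it \emph{is} used in \cref{lemma:thresh-value-ub}, via monotonicity of $\gamma_{S,k}$).
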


Together, these two lemmas imply that outputting $\alpha(\Th^i_k)\cdot \gamma_{S,k}(\bias(\Psi))$ gives an $\alpha(\Th^i_k)$-approximation to $\mcsp(\Th^i_k)$, since $\alpha(\Th^i_k) = \inf_{\mu \in [-1,1]} \frac{\beta_{S,k}(\mu)}{\gamma_{S,k}(\mu)}$ (\cref{eqn:alpha-optimize-over-mu}). We can implement this as a small-space sketching algorithm (up to an arbitrarily small constant $\epsilon > 0$ in the approximation ratio) because $\bias(\Psi)$ is measurable using $\ell_1$-sketching algorithms (as used also in \cite{GVV17,CGV20,CGSV21-boolean}) and $\gamma_{S,k}(\cdot)$ is piecewise linear:

\begin{theorem}[{\cite{Ind06,KNW10}}]\label{thm:l1-sketching}
For every $\epsilon>0$, there exists an $O(\log n/\epsilon^2)$-space randomized sketching algorithm for the following problem: The input is a stream $S$ of updates of the form $(i, v) \in [n] \times \{-\mathrm{poly}(n),\ldots,\mathrm{poly}(n)\}$, and the goal is to estimate the $\ell_1$-norm of the vector $x \in [n]^n$ defined by $x_i = \sum_{(i,v) \in S} v$, up to a multiplicative factor of $1\pm\epsilon$.
\end{theorem}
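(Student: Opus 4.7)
The plan is to prove this via Indyk's $1$-stable (Cauchy) sketching construction. The key algebraic fact is that the standard Cauchy distribution is $1$-stable: if $C_1,\ldots,C_n$ are i.i.d.\ standard Cauchy and $a \in \R^n$, then $\sum_i a_i C_i$ is distributed as $\|a\|_1 \cdot C$ for a single standard Cauchy $C$. This is the linear-algebraic miracle that makes linear $\ell_1$-sketching possible at all; everything else in the proof is bookkeeping around it.

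First I would describe the core sketch. Set $k = \Theta(\epsilon^{-2})$ and draw vectors $c^{(1)},\ldots,c^{(k)} \in \R^n$ with i.i.d.\ standard Cauchy entries. Maintain the $k$ running scalars $y_j = \sum_i c^{(j)}_i\, x_i$; since each stream update $(i,v)$ simply adds $v \cdot c^{(j)}_i$ to $y_j$, this is manifestly a linear sketch. By stability, each $y_j$ is distributed as $\|x\|_1 \cdot C_j$ for i.i.d.\ standard Cauchy $C_j$, so the estimator $\widehat{\|x\|_1} = \mathrm{median}_j |y_j|$ (normalized by $\mathrm{median}(|C|) = 1$) concentrates around $\|x\|_1$. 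For correctness I would use a routine CDF-plus-Chernoff argument: since the CDF of $|C|$ has nonzero slope at $1$, $\Pr[|C| \le 1\pm\epsilon]$ differs from $\tfrac12$ by $\Theta(\epsilon)$, so $k = O(\epsilon^{-2})$ independent samples suffice to pin the empirical median into $[1-\epsilon,1+\epsilon]$ with constant probability; standard amplification (running $O(\log(1/\delta))$ independent copies and taking the median-of-medians) boosts success probability as needed.

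The hard part is hitting the stated $O(\log n /\epsilon^2)$ total space, rather than the naive $\tilde O(n/\epsilon^2)$ required to store all $c^{(j)}_i$ explicitly. Two steps are needed. First, since the input values are polynomially bounded, each Cauchy sample need only be stored at $O(\log n)$ bits of precision, and tails beyond $\mathrm{poly}(n)$ can be truncated at negligible cost to the estimator. Second — and this is the main obstacle — one must derandomize so that the entries $c^{(j)}_i$ are regenerated on demand from a short seed. The Kane--Nelson--Woodruff refinement replaces fully-independent Cauchys with variables drawn from a limited-independence distribution (or from a Nisan-style pseudorandom generator fooling the small-space median computation) and argues that the $\mathrm{median}_j|y_j|$ estimator cannot tell the difference. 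The technical heart of this step is a Kolmogorov-distance / low-moment argument showing that $O(\log(1/\epsilon))$-wise independence among the truncated Cauchy coordinates preserves the needed median concentration; I would appeal directly to the analysis of \cite{KNW10} for this ingredient, since the remainder of the proof (stability, Chernoff on the median, and truncation) is classical and goes through essentially unchanged.
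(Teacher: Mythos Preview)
The paper does not prove this statement at all; it is stated as a known result imported from \cite{Ind06,KNW10} and used purely as a black box (indeed, the theorem header explicitly carries the citation, and there is no accompanying proof environment in the paper). Your proposal is a reasonable high-level sketch of how those references actually establish the result, but for the purposes of this paper no proof is required or given --- so in that sense your write-up goes well beyond what the paper does, and the ``comparison'' is simply that the paper defers entirely to the cited works whereas you attempt to unpack them.
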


\begin{corollary}\label{cor:bias-sketching}
For $f : \{-1,1\}^k \to \{0,1\}$ and every $\epsilon>0$, there exists an $O(\log n/\epsilon^2)$-space randomized sketching algorithm for the following problem: The input is an instance $\Psi$ of $\mcsp(\Th^i_k)$ (given as a stream of constraints), and the goal is to estimate $\bias(\Psi)$ up to a multiplicative factor of $1\pm\epsilon$.
\end{corollary}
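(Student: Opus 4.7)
The plan is to reduce the estimation of $\bias(\Psi)$ to a single invocation of the $\ell_1$-sketching algorithm from \cref{thm:l1-sketching}, together with an exact counter for the total weight $W$.

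First, observe that if we define the vector $\vecx(\Psi) \in \R^n$ by $x_i = \diff_i(\Psi)$, then by definition $\bias(\Psi) = \frac{1}{kW}\|\vecx(\Psi)\|_1$. Moreover, from the formula for $\diff_i$, we can rewrite \[ x_i \;=\; \sum_{\ell \in [m]} \sum_{t \in [k]} b(\ell)_t\, w_\ell\, \1[j(\ell)_t = i], \] so the coordinates of $\vecx(\Psi)$ can be built up additively by processing each constraint as a constant number of stream updates. This is the key observation that makes the reduction work.

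The algorithm is then as follows. Upon seeing the $\ell$-th constraint $(\vecb(\ell), \vecj(\ell))$ with weight $w_\ell$ in the stream, for each $t \in [k]$ we feed the update $(j(\ell)_t,\, b(\ell)_t\, w_\ell)$ into an instance of the $\ell_1$-sketching algorithm of \cref{thm:l1-sketching} with accuracy parameter $\epsilon/3$; simultaneously, we maintain a running counter for $W = \sum_\ell w_\ell$. At the end of the stream, we output $\widehat{\bias} \eqdef \hat{L}/(kW)$, where $\hat{L}$ is the $\ell_1$-estimate produced by the sketching algorithm. Correctness is immediate: with probability at least $2/3$ we have $\hat{L} = (1\pm\epsilon/3)\|\vecx(\Psi)\|_1$, and dividing by $kW$ exactly gives a $(1\pm\epsilon)$-multiplicative estimate of $\bias(\Psi)$ (after rescaling the internal accuracy parameter). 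The space usage is $O(\log n/\epsilon^2)$ for the $\ell_1$-sketch plus $O(\log n)$ for the counter, which is the claimed bound.

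The only mildly delicate point, and the one I would flag as the main issue to verify, is quantization: \cref{thm:l1-sketching} requires the updates to be integers in $\{-\mathrm{poly}(n),\ldots,\mathrm{poly}(n)\}$. Under the standard assumption that the input weights $w_\ell$ are nonnegative integers of polynomial magnitude (the regime in which all prior work in this line operates), this holds verbatim since each update is $\pm w_\ell$. If one insists on arbitrary nonnegative real weights, a standard rounding argument suffices: round each $w_\ell$ to the nearest multiple of $\epsilon W_{\max}/\mathrm{poly}(n)$ where $W_{\max}$ is an on-the-fly upper bound on the largest weight seen, incurring only an additional multiplicative $(1\pm\epsilon)$ distortion in both numerator and denominator. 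This does not affect the final approximation ratio up to constants.
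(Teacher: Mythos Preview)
Your proof is correct and follows essentially the same approach as the paper: both reduce to the $\ell_1$-sketching algorithm of \cref{thm:l1-sketching} by, on each constraint $(\vecb,\vecj)$ with weight $w$, inserting the $k$ updates $(j_t,\,b_t w)$ into the stream and then normalizing by $kW$. Your version simply spells out the details (the counter for $W$, the quantization issue) that the paper's two-line proof leaves implicit under ``normalize appropriately.''
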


\begin{proof}
Invoke the $\ell_1$-norm sketching algorithm from \cref{thm:l1-sketching} as follows: On each input constraint $(\vecb=(b_1,\ldots,b_k),\vecj=(j_1,\ldots,j_k))$ with weight $w$, insert the updates $(j_1,wb_1),\ldots,(j_k,wb_k)$ into the stream (and normalize appropriately).
\end{proof}

\cref{thm:thresh-bias-alg} then follows from \cref{lemma:thresh-value-lb,lemma:thresh-value-ub,cor:bias-sketching}; we include a formal proof in \cref{sec:misc-proofs} for completeness.

To prove \cref{lemma:thresh-value-lb,lemma:thresh-value-ub}, we require a bit more setup. Adapting notation from \cite[\S4.2]{CGSV21-boolean}, given an instance $\Psi$ of $\mcsp(\Th^i_k)$ and a ``negation pattern'' $\veca = (a_1,\ldots,a_n) \in \{-1,1\}^n$ for the variables, let $\Psi^\veca$ be the instance which results from $\Psi$ by ``flipping'' the variables according to $\veca$ (formally, each constraint $(\vecb,\vecj)$ is replaced with $(\vecb\odot\veca|_{\vecj},\vecj)$). We summarize the useful properties of this operation in the following claim:

\begin{proposition}\label{prop:flips}
Let $\Psi$ be an instance of $\mcsp(\Th^i_k)$ and $\veca = (a_1,\ldots,a_n) \in \{-1,1\}^n$. Then:
\begin{enumerate}[label=\roman*.,ref=\roman*]
    \item For each $i \in [n]$, $\diff_i(\Psi^\veca) = a_i \diff_i(\Psi)$.\label{item:flip-diff}
    \item $\bias(\Psi) = \bias(\Psi^\veca)$.\label{item:flip-bias}
    \item For any $\vecsigma \in \{-1,1\}^n$, $\val_{\Psi^\veca}(\vecsigma) = \val_{\Psi}(\veca\odot\vecsigma)$.\label{item:flip-val-ass}
    \item $\val_{\Psi^\veca} = \val_\Psi$.\label{item:flip-val}
\end{enumerate}
\end{proposition}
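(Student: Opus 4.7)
The plan is to verify the four items in the order they are stated, since (\ref{item:flip-bias}) follows directly from (\ref{item:flip-diff}), and (\ref{item:flip-val}) follows directly from (\ref{item:flip-val-ass}). Each claim is a mechanical unpacking of the definition of $\Psi^\veca$, in which the $\ell$-th constraint $(\vecb(\ell),\vecj(\ell))$ of $\Psi$ is replaced by $(\vecb(\ell)\odot\veca|_{\vecj(\ell)},\vecj(\ell))$ while keeping the same weight $w_\ell$.

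For item (\ref{item:flip-diff}), I would fix $i \in [n]$ and inspect the defining sum $\diff_i(\Psi^\veca)=\sum_{\ell,t:\,j(\ell)_t = i}(b(\ell)_t\cdot a_i) w_\ell$, using that the $t$-th coordinate of $\vecb(\ell)\odot\veca|_{\vecj(\ell)}$ is $b(\ell)_t\cdot a_{j(\ell)_t} = b(\ell)_t\cdot a_i$. Pulling $a_i$ out of the sum yields $a_i\diff_i(\Psi)$. Item (\ref{item:flip-bias}) is then immediate: $|a_i|=1$, so $|\diff_i(\Psi^\veca)| = |\diff_i(\Psi)|$ for each $i$, and the total weight $W$ is preserved by the operation, so the normalizing factor $\tfrac{1}{kW}$ is identical for $\Psi$ and $\Psi^\veca$.

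For item (\ref{item:flip-val-ass}), I would unfold the definition of value and use the associativity/commutativity of $\odot$: the $\ell$-th constraint of $\Psi^\veca$ evaluated at $\vecsigma$ contributes $f\bigl((\vecb(\ell)\odot\veca|_{\vecj(\ell)})\odot\vecsigma|_{\vecj(\ell)}\bigr) = f\bigl(\vecb(\ell)\odot(\veca\odot\vecsigma)|_{\vecj(\ell)}\bigr)$, which is exactly the contribution of the $\ell$-th constraint of $\Psi$ evaluated at $\veca\odot\vecsigma$. Summing with the (identical) weights $w_\ell$ and dividing by $W$ gives $\val_{\Psi^\veca}(\vecsigma) = \val_\Psi(\veca\odot\vecsigma)$. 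Finally, for item (\ref{item:flip-val}), the map $\vecsigma\mapsto\veca\odot\vecsigma$ is an involution (hence a bijection) on $\{-1,1\}^n$, so taking the maximum over $\vecsigma$ on each side of (\ref{item:flip-val-ass}) produces the same value.

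There is no real obstacle here: every step is a definitional identity combined with the fact that coordinatewise multiplication by $\veca$ is a bijective involution on $\{-1,1\}^n$ that preserves absolute values. The only care needed is in item (\ref{item:flip-val-ass}), where I would be explicit that $(\vecb\odot\veca|_\vecj)\odot\vecsigma|_\vecj = \vecb\odot(\veca\odot\vecsigma)|_\vecj$, i.e., that restriction to $\vecj$ commutes with $\odot$ by $\veca$.
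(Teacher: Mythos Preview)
Your proposal is correct and follows essentially the same approach as the paper's own proof: unpack the definitions for (\ref{item:flip-diff}) and (\ref{item:flip-val-ass}) using coordinatewise associativity of $\odot$, and then derive (\ref{item:flip-bias}) and (\ref{item:flip-val}) as immediate corollaries via $|a_i|=1$ and the bijectivity of $\vecsigma \mapsto \veca\odot\vecsigma$, respectively.
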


\begin{proof}
For \cref{item:flip-diff}, we have
\begin{align*}
    \diff_i(\Psi^\veca) &= \sum_{\ell\in[m] \text{ s.t. }j(\ell)_t = i\text{ for some }t\in[k]} a_{j(\ell)_t} b(\ell)_t w_\ell \tag{definition of $\diff_i$} \\
    &= a_i \; \sum_{\ell\in[m] \text{ s.t. }j(\ell)_t = i\text{ for some }t\in[k]} b(\ell)_t w_\ell \\
    &= a_i \diff_i(\Psi) \tag{definition of $\diff_i$}.
\end{align*}
\cref{item:flip-bias} follows immediately from \cref{item:flip-diff} and the definition $\bias(\Psi) = \frac1{kW} \sum_{i=1}^n |\diff_i(\Psi)|$. For \cref{item:flip-val-ass}, we have
\begin{align*}
    \val_{\Psi^\veca}(\vecsigma) &= \frac1W\sum_{i\in[m]} w_i \Th^i_k((\vecb(i) \odot \veca|_{\vecj(i)})\odot\vecsigma|_{\vecj(i)}) \tag{definitions of $\Psi^\veca$ and $\val$} \\
    &= \frac1W\sum_{i\in[m]} w_i \Th^i_k(\vecb(i)\odot(\vecsigma\odot\veca)|_{\vecj(i)}) \\
    &= \val_\Psi(\vecsigma \odot \veca).\tag{definition of $\val$}
\end{align*}
Finally, \cref{item:flip-val} follows from \cref{item:flip-val-ass} and the fact that $\{\vecsigma : \vecsigma \in \{-1,1\}^n\} = \{\vecsigma \odot \veca : \vecsigma \in \{-1,1\}^n\}$: \[ \val_{\Psi^\veca} = \max_{\vecsigma \in\{-1,1\}^n} \val_{\Psi^\veca}(\vecsigma) = \max_{\vecsigma \in\{-1,1\}^n} \val_{\Psi}(\vecsigma \odot \veca) = \max_{\vecsigma \in\{-1,1\}^n} \val_{\Psi}(\vecsigma) = \val_\Psi. \]
\end{proof}

Also, given an instance $\Psi$, we define its ``symmetrized canonical distribution'' $\dsym_\Psi \in \Delta_k$ to be the distribution obtained by sampling a constraint at random from $\Psi$ and outputting its ``randomly permuted negation pattern''. Formally, let $\sym_k$ denote the set of permutations $[k] \to [k]$. For a vector $\vecb =(b_1,\ldots,b_k)\in \{-1,1\}^k$ and a permutation $\vecpi \in \sym_k$, let $\vecpi(\vecb) = (b_{\vecpi(1)},\ldots,b_{\vecpi(k)})$. Let $C(i) = (\vecb(i),\vecj(i))$ denote the $i$-th constraint of $\Psi$, with weight $w_i$, and let $W = \sum_{i=1}^m w_i$ be the total weight. To sample a random vector from $\dsym_\Psi$, we sample $i \in [m]$ with probability $w_i/W$, sample a permutation $\vecpi \sim \Unif(\sym_k)$, and output $\vecpi(\vecb(i))$. The useful properties of $\dsym_\Psi$ are summarized in the following claim:

\begin{proposition}\label{prop:dsym}
Let $\Psi$ be an instance of $\mcsp(\Th^i_k)$. Then:
\begin{enumerate}[label=\roman*.,ref=\roman*]
\item For any $p \in [0,1]$, $\Exp_{\veca \sim \bern(p)^n}[\val_\Psi(\veca)] = \lambda_S(\dsym_\Psi,p)$.\label{item:dsym-val}
\item $\mu(\dsym_\Psi) = \frac1{kW} \sum_{i=1}^n \diff_i(\Psi) \leq \bias(\Psi)$.\label{item:dsym-mu}
\item If $\diff_i(\Psi) \geq 0$ for all $i \in [n]$, then $\mu(\dsym_\Psi) = \bias(\Psi)$.\label{item:dsym-bias}
\end{enumerate}
\end{proposition}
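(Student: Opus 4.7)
The plan is to verify each of the three items by direct computation, relying on two structural facts: that $f_{S,k}$ depends only on Hamming weight (so it is invariant under permutations of its input), and that $\bern(p)^k$ is a product distribution which is likewise permutation-invariant.

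For \cref{item:dsym-val}, I would start by expanding
\[ \Exp_{\veca \sim \bern(p)^n}[\val_\Psi(\veca)] = \frac1W \sum_{i=1}^m w_i \,\Exp_{\veca \sim \bern(p)^n}[f_{S,k}(\vecb(i)\odot\veca|_{\vecj(i)})]. \]
Since the indices in $\vecj(i)$ are distinct, the restriction $\veca|_{\vecj(i)}$ is distributed as $\bern(p)^k$. Thus each inner expectation equals $\Exp_{\vecb \sim \bern(p)^k}[f_{S,k}(\vecb(i)\odot\vecb)]$. Now I would introduce a uniformly random permutation $\vecpi \sim \Unif(\sym_k)$: because $f_{S,k}$ is symmetric and $\bern(p)^k$ is exchangeable,
\[ \Exp_{\vecb \sim \bern(p)^k}[f_{S,k}(\vecb(i)\odot\vecb)] = \Exp_{\vecpi,\vecb \sim \bern(p)^k}[f_{S,k}(\vecpi(\vecb(i))\odot\vecb)]. \]
Averaging over $i$ with weights $w_i/W$ reproduces the definition of $\lambda_S(\dsym_\Psi,p)$ from \cref{eqn:lambda-def}.

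For \cref{item:dsym-mu}, a similar averaging argument works. Writing $\mu(\dsym_\Psi) = \Exp_{\vecb\sim\dsym_\Psi}[b_1]$ and unrolling the sampling procedure gives
\[ \mu(\dsym_\Psi) = \frac1W \sum_{i=1}^m w_i \,\Exp_{\vecpi \sim \Unif(\sym_k)}[b(i)_{\vecpi(1)}] = \frac1{kW} \sum_{i=1}^m w_i \sum_{t=1}^k b(i)_t. \]
Then I would swap the order of summation: grouping by the variable index $j(i)_t \in [n]$, each inner summand is exactly what enters the definition of $\diff_j(\Psi)$. This yields $\mu(\dsym_\Psi) = \frac1{kW} \sum_{j=1}^n \diff_j(\Psi)$. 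The inequality $\mu(\dsym_\Psi) \leq \bias(\Psi)$ is then immediate from $\sum_j \diff_j(\Psi) \leq \sum_j |\diff_j(\Psi)|$.

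Finally, \cref{item:dsym-bias} is an immediate consequence of \cref{item:dsym-mu}: when every $\diff_j(\Psi)$ is nonnegative, $\sum_j \diff_j(\Psi) = \sum_j |\diff_j(\Psi)|$, giving equality $\mu(\dsym_\Psi) = \bias(\Psi)$. There is no real obstacle here; the only care needed is bookkeeping the double sum in \cref{item:dsym-mu} to correctly match each pair $(i,t)$ with the unique variable $j(i)_t$ it points to, and writing out carefully the symmetrization step in \cref{item:dsym-val}.
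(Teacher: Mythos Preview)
Your proposal is correct and follows essentially the same approach as the paper. The only cosmetic differences are that for \cref{item:dsym-val} you combine the symmetry of $f_{S,k}$ and the exchangeability of $\bern(p)^k$ into a single step (whereas the paper factors this through the intermediate expression $\Th^i_k(\vecpi(\vecb\odot\veca))$), and for \cref{item:dsym-mu} you run the computation from $\mu(\dsym_\Psi)$ toward $\frac1{kW}\sum_j\diff_j(\Psi)$ rather than in the reverse direction; the substance is identical.
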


\begin{proof}
We begin with \cref{item:dsym-val}. Fix a constraint $(\vecb,\vecj)$. We make two observations. Firstly, if we sample $\veca\sim\bern(p)^n$, the distribution of $\veca|_\vecj$ is identical to $\bern(p)^k$. Secondly, for a fixed vector $\vecb \in \{-1,1\}^k$, if we sample $\veca\sim\bern(p)^k$ and $\vecpi \sim \Unif(\sym_k)$, the distributions of $\vecpi(\vecb) \odot \veca$ and $\vecpi(\vecb \odot \veca)$ are identical. Thus, we have
\begin{align*}
    \Exp_{\veca \sim \bern(p)^n}[\Th^i_k(\vecb(i) \odot \veca|_{\vecj})] &= \Exp_{\veca \sim \bern(p)^k}[\Th^i_k(\vecb \odot \veca)] \tag{first observation}\\
    &= \Exp_{\veca \sim \bern(p)^k,\vecpi\sim\Unif(\sym_k)}[\Th^i_k(\vecpi(\vecb \odot \veca))] \tag{symmetry of $\Th^i_k$} \\
    &= \Exp_{\veca \sim \bern(p)^k,\vecpi\sim\Unif(\sym_k)}[\Th^i_k(\vecpi(\vecb) \odot \veca)] \tag{second observation}.
\end{align*}
Thus, by linearity of expectation, we have $\Exp_{\veca \sim \bern(p)^n}[\val_\Psi(\veca)] = \Exp_{\veca \sim \bern(p)^k,\vecb\sim\dsym_\Psi}[\Th^i_k(\vecb \odot \veca)] = \lambda_S(\dsym_\Psi,p)$, as desired.

For \cref{item:dsym-mu}, we have
\begin{align*}
    \frac1{kW}\sum_{i=1}^n \diff_i(\Psi) &= \frac1{kW} \sum_{\ell=1}^m w_\ell \sum_{t=1}^k b(\ell)_t \tag{definition of $\diff_i(\Psi)$} \\
    &= \frac1{W} \sum_{\ell=1}^m w_\ell \Exp_{\vecpi\sim\sym_k}[\pi(\vecb)_1] \tag{where $\vecpi(\vecb) = (\pi(\vecb)_1,\ldots,\pi(\vecb)_k)$} \\
    &= \Exp_{\vecb \sim \dsym_\Psi}[b_1] \tag{definition of $\dsym_\Psi$}\\
    &= \mu(\dsym_\Psi) \tag{definition of $\mu$}.
\end{align*}

Finally, \cref{item:dsym-bias} follows immediately from \cref{item:dsym-mu} and the definition $\bias(\Psi) = \frac1{kW} \sum_{i=1}^n |\diff_i(\Psi)|$.
\end{proof}

Now, we are equipped to prove the lemmas:

\begin{proof}[Proof of \cref{lemma:thresh-value-ub}]
Let $\vecopt \in \{-1,1\}^n$ denote the optimal assignment for $\Psi$. Then
\begin{align*}
    \val_\Psi &= \val_\Psi(\vecopt) \tag{definition of $\vecopt$} \\
    &= \val_{\Psi^\vecopt}(1^n) \tag{\cref{item:flip-val-ass} of \cref{prop:flips}} \\
    &= \lambda_S(\dsym_{\Psi^{\vecopt}}, 1) \tag{\cref{item:dsym-val} of \cref{prop:dsym} with $p=1$} \\
    &= \gamma_S(\dsym_{\Psi^{\vecopt}}) \tag{definition of $\gamma_S$, \cref{eqn:beta_S-gamma_S-def}} \\
    &\leq \gamma_{S,k}(\mu(\dsym_{\Psi^{\vecopt}})) \tag{definition of $\gamma_{S,k}$, \cref{eqn:beta_Sk-gamma_Sk-def}}\\
    &\leq \gamma_{S,k}(\bias(\Psi^\vecopt)) \tag{\cref{item:dsym-mu} of \cref{prop:dsym} and monotonicity of $\gamma_{S,k}$} \\
    &= \gamma_{S,k}(\bias(\Psi)) \tag{\cref{item:flip-bias} of \cref{prop:flips}},
\end{align*}
as desired.
\end{proof}

\begin{proof}[Proof of \cref{lemma:thresh-value-lb}]
Let $\vecmaj \in \{-1,1\}^n$ denote the assignment assigning $x_i$ to $1$ if $\diff_i(\Psi) \geq 0$ and $-1$ otherwise. Now
\begin{align*}
    \val_\Psi &= \val_{\Psi^\vecmaj} \tag{\cref{item:flip-val} of \cref{prop:flips}} \\
    &\geq \sup_{p \in [0,1]} \left(\Exp_{\veca \sim \bern(p)^n}[\val_{\Psi^\vecmaj}(\veca)]\right) \tag{probabilistic method} \\
    &= \sup_{p \in [0,1]} (\lambda_S(\dsym_{\Psi^\vecmaj},p)) \tag{\cref{item:dsym-val} of \cref{prop:dsym}} \\
    & \geq \beta_S(\dsym_{\Psi^\vecmaj}) \tag{definition of $\beta_S$, \cref{eqn:beta_S-gamma_S-def}} \\
    &\geq \beta_{S,k}(\mu(\dsym_{\Psi^\vecmaj})) \tag{definition of $\beta_{S,k}$, \cref{eqn:beta_Sk-gamma_Sk-def}} \\
    &= \beta_{S,k}(\bias(\Psi^\vecmaj)) \tag{\cref{item:dsym-bias} of \cref{prop:dsym}} \\
    &= \beta_{S,k}(\bias(\Psi)) \tag{\cref{item:flip-bias} of \cref{prop:flips}},
\end{align*}
as desired.
\end{proof}

Finally, we state another consequence of \cref{lemma:thresh-value-ub} --- a simple randomized, $O(n)$-time-and-space streaming algorithm for \emph{outputting} approximately-optimal assignments when the max-min method applies.

\begin{theorem}\label{thm:thresh-bias-output-alg}
Let $\Th^i_k$ be a threshold function and $p^* \in [0,1]$ be such that the max-min method applies, i.e., \[ \alpha(\Th^i_k) = \inf_{\cD_N \in \Delta_k} \left(\frac{\lambda_S(\cD_N,p^*)}{\gamma_{S,k}(\mu(\cD_N))}\right). \] Then the following algorithm, on input $\Psi$, outputs an assignment with expected value at least $\alpha(\Th^i_k) \cdot \val_\Psi$: Assign every variable to $1$ if $\diff_i(\Psi) \geq 0$, and $-1$ otherwise, and then flip each variable's assignment independently with probability $p^*$.
\end{theorem}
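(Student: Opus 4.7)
The strategy is to directly combine \cref{lemma:thresh-value-ub} with the max-min hypothesis, reusing the same machinery (the symmetrized canonical distribution $\dsym_\Psi$ from \cref{prop:dsym} and the negation-pattern manipulations from \cref{prop:flips}) that underlies the proof of \cref{lemma:thresh-value-lb}. In fact, the only new ingredient is that instead of taking a supremum over $p$, we exploit the hypothesis that a single distinguished $p^*$ already achieves the infimum $\alpha(\Th^i_k)$ in the ratio formulation.

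First I would set up notation. Let $\vecmaj \in \{-1,1\}^n$ be the greedy assignment with $\vecmaj_i = 1$ if $\diff_i(\Psi) \geq 0$ and $-1$ otherwise. The algorithm's output is exactly $\vecmaj \odot \veca$ for a random $\veca$ whose coordinates are i.i.d.\ $\bern(p^*)$, interpreted so that $\veca_i = 1$ corresponds to ``do not flip'' (i.e., the algorithm in the theorem statement flips with the complementary probability; this is purely a sign convention). By \cref{item:flip-val-ass} of \cref{prop:flips}, the value of this output on $\Psi$ equals $\val_{\Psi^{\vecmaj}}(\veca)$, and then by \cref{item:dsym-val} of \cref{prop:dsym}, the expected value is precisely $\lambda_S(\dsym_{\Psi^{\vecmaj}}, p^*)$.

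Next I would apply the max-min hypothesis to the specific distribution $\cD_N = \dsym_{\Psi^{\vecmaj}}$, yielding
\[ \lambda_S(\dsym_{\Psi^{\vecmaj}}, p^*) \;\geq\; \alpha(\Th^i_k) \cdot \gamma_{S,k}\bigl(\mu(\dsym_{\Psi^{\vecmaj}})\bigr). \]
Because $\vecmaj$ is defined to ensure $\diff_i(\Psi^{\vecmaj}) \geq 0$ for every $i$ (by \cref{item:flip-diff} of \cref{prop:flips}), \cref{item:dsym-bias} of \cref{prop:dsym} gives $\mu(\dsym_{\Psi^{\vecmaj}}) = \bias(\Psi^{\vecmaj})$, which equals $\bias(\Psi)$ by \cref{item:flip-bias} of \cref{prop:flips}. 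Combining with the upper bound $\gamma_{S,k}(\bias(\Psi)) \geq \val_\Psi$ from \cref{lemma:thresh-value-ub} then chains to
\[ \Exp[\val_\Psi(\vecmaj \odot \veca)] \;\geq\; \alpha(\Th^i_k) \cdot \val_\Psi, \]
as required.

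There is no real analytic obstacle — the argument is a clean composition of \cref{prop:flips}, \cref{prop:dsym}, \cref{lemma:thresh-value-ub}, and the max-min hypothesis. The only thing to be careful about is precisely what makes this theorem interesting compared to \cref{lemma:thresh-value-lb}: the latter only gives a lower bound of $\beta_{S,k}(\bias(\Psi))$ on $\val_\Psi$ via the \emph{supremum} over $p$, which by itself provides no explicit $p$ to use in a single-pass algorithm. The saddle-point (max-min) hypothesis is what lets us extract a \emph{specific} $p^*$ that witnesses the optimal ratio $\alpha(\Th^i_k)$ uniformly across all possible $\dsym_{\Psi^{\vecmaj}}$, thereby converting the existence proof into an explicit randomized rounding scheme.
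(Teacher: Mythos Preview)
Your proposal is correct and follows essentially the same approach as the paper's own proof: define $\vecmaj$, rewrite the expected value via \cref{prop:flips} and \cref{prop:dsym} as $\lambda_S(\dsym_{\Psi^{\vecmaj}},p^*)$, apply the max-min hypothesis, identify $\mu(\dsym_{\Psi^{\vecmaj}})$ with $\bias(\Psi)$, and finish with \cref{lemma:thresh-value-ub}. Your write-up is in fact slightly more careful than the paper's, since you explicitly invoke \cref{item:flip-diff} and \cref{item:flip-bias} where the paper compresses two steps into one citation.
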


\begin{proof}
Let $p^*$ be as in the theorem statement, and define $\vecmaj$ as in the proof of \cref{lemma:thresh-value-lb}. We output the assignment $\vecmaj \odot \veca$ for $\veca \sim \bern(p^*)^n$, and our goal is to show that its expected value is at least $\alpha(\Th^i_k) \val_\Psi$.

Our assumption that the max-min method applies asserts in particular that
\begin{equation}\label{eqn:max-min-for-alg}
    \lambda_S(\dsym_{\Psi^\vecmaj},p^*) \geq \alpha(\Th^i_k) \gamma_{S,k}(\mu(\dsym_{\Psi^\vecmaj})).
\end{equation}
Thus our expected output value is
\begin{align*}
    \Exp_{\veca\sim\bern(p^*)}[\val_\Psi(\vecmaj \odot \veca)]  &= \Exp_{\veca\sim\bern(p^*)}[\val_{\Psi^\vecmaj}(\veca)] \tag{\cref{item:flip-val-ass} of \cref{prop:flips}}\\
    &=\lambda_S(\dsym_{\Psi^\vecmaj},p^*) \tag{\cref{item:dsym-val} of \cref{prop:dsym}} \\
    &\geq \alpha(\Th^i_k)\gamma_{S,k}(\mu(\dsym_{\Psi^\vecmaj})) \tag{\cref{eqn:max-min-for-alg}} \\
    &= \alpha(\Th^i_k) \gamma_{S,k}(\bias(\Psi)) \tag{\cref{item:dsym-bias} of \cref{prop:dsym}} \\
    &\geq \alpha(\Th^i_k) \val_\Psi \tag{\cref{lemma:thresh-value-ub}},
\end{align*}
as desired.
\end{proof}

\section*{Discussion}

In this paper, we introduce the max-min method and use it to resolve the streaming approximability of a wide variety of symmetric Boolean CSPs (including infinite families such as $\maxkand$ for all $k$, and $\Th_k^{k-1}$ for all even $k$). However, these techniques are in a sense ``ad hoc'' since we use computer assistance to guess the optimal solution for our optimization problem. We leave the question of whether the max-min method can be applied to determine the sketching approximability for all symmetric Boolean CSPs as an interesting open problem. 

Separately, we also establish that the techniques developed in \cite{CGSV21-boolean} are not sufficient to characterize the \emph{streaming} approximability of all CSPs. Indeed, we show that their streaming lower bound based on ``padded one-wise pairs'' cannot match the approximation ratio of their optimal sketching algorithm for \m[\threeand]. While we believe that no $o(\sqrt{n})$-space streaming algorithm can beat their sketching algorithm for \m[\threeand], proving this will require new techniques.
\section*{Acknowledgements}

All authors of this paper are or were supervised by Madhu Sudan. We would like to thank him for advice and guidance on both technical and organizational aspects of this project, as well as for many helpful comments on earlier drafts of this paper.
\newpage
\printbibliography

\newpage
\appendix
\section{Miscellaneous technical proofs}\label{sec:misc-proofs}

\begin{proof}[Proof of \cref{prop:lin-opt}]
Firstly, we show that it suffices WLOG to take the special case where $r=n$ and $\vecy(1),\ldots,\vecy(n)$ is the standard basis for $\R^n$. Indeed, assume the special case and note that for a general case, we can let $\veca'=(\veca\cdot\vecy(1),\ldots,\veca\cdot\vecy(r))$, $\vecb'=(\vecb\cdot\vecy(1),\ldots,\vecb\cdot\vecy(r))$, $\vecx'=(x_1,\ldots,x_r)$, and let $\vecy'(1),\ldots,\vecy'(r)$ be the standard basis for $\R^r$. Then $\vecx' = \sum_{i=1}^r \alpha_i \vecy'(i)$ and \[ f(\vecx) = \frac{\sum_{i=1}^r (\veca \cdot \vecy(i)) \alpha_i}{\sum_{i=1}^r (\vecb \cdot \vecy(i)) \alpha_i} = \frac{\veca' \cdot \vecx'}{\vecb' \cdot \vecx'} \geq \min_{i \in [r]} \frac{\veca' \cdot \vecy'(i)}{\vecb' \cdot \vecy'(i)} = \min_{i\in[r]} \frac{\veca \cdot \vecy(i)}{\vecb \cdot \vecy(i)}. \]

Now we prove the special case: Assume $r=n$ and $\vecy(1),\ldots,\vecy(n)$ is the standard basis for $\R^n$. We have $f(\vecy(i)) = \frac{a_i}{b_i}$. Assume WLOG that $f(\vecy(1)) = \min \{f(\vecy(i)): i \in [n]\}$, i.e., $\frac{a_1}{b_1} \leq \frac{a_i}{b_i}$ for all $i \in [n]$. Then $a_i \geq \frac{a_1b_i}{b_1}$ for all $i \in [n]$, so \[ \veca \cdot \vecx \geq \sum_{i=1}^n \frac{a_1b_i}{b_1} \alpha_i = \frac{a_1}{b_1} (\vecb \cdot \vecx). \] Hence \[ f(\vecx) = \frac{\veca \cdot \vecx}{\vecb \cdot \vecx} \geq \frac{a_1}{b_1} = f(\vecy(1)), \] as desired.
\end{proof}

\begin{proof}[Proof of \cref{thm:thresh-bias-alg}]
To get an $(\alpha-\epsilon)$-approximation to $\val_\Psi$, let $\delta > 0$ be small enough such that $\frac{1-\delta}{1+\delta}\alpha(\Th^i_k) \geq \alpha(\Th^i_k)-\epsilon$. We claim that calculating an estimate $\hat{b}$ for $\bias(\Psi)$ (using \cref{cor:bias-sketching}) up to a multiplicative $\delta$ factor and outputting $\hat{v} = \alpha(\Th^i_k)\gamma_{S,k}(\frac{\hat{b}}{1+\delta})$ is sufficient.

Indeed, suppose $\hat{b} \in [(1-\delta)\bias(\Psi),(1+\delta)\bias(\Psi)]$; then $\frac{\hat{b}}{1+\delta} \in [\frac{1-\delta}{1+\delta}\bias(\Psi),\bias(\Psi)]$. Now we observe
\begin{align*}
    \gamma_{S,k}\left(\frac{\hat{b}}{1+\delta}\right) &\geq \gamma_{S,k}\left(\frac{1-\delta}{1+\delta} \bias(\Psi)\right) \tag{monotonicity of $\gamma_{S,k}$} \\ 
    &=\min\left\{\frac{1+\frac{1-\delta}{1+\delta}\bias(\Psi)}{1+\epsilon_{s,k}},1\right\} \tag{\cref{lemma:gamma-formula}} \\
    &\geq \frac{1-\delta}{1+\delta} \min\left\{\frac{1+\bias(\Psi)}{1+\epsilon_{s,k}},1\right\} \tag{$\delta>0$} \\
    &= \frac{1-\delta}{1+\delta} \gamma_{S,k}(\bias(\Psi)) \tag{\cref{lemma:gamma-formula}}.
\end{align*}
Then we conclude
\begin{align*}
    (\alpha(\Th^i_k)-\epsilon)\val_\Psi &\leq (\alpha(\Th^i_k)-\epsilon)\gamma_{S,k}(\bias(\Psi)) \tag{\cref{lemma:thresh-value-ub}} \\
    &\leq \alpha(\Th^i_k) \cdot \frac{1-\delta}{1+\delta}\gamma_{S,k}(\bias(\Psi)) \tag{assumption on $\delta$} \\
    &\leq \hat{v} \tag{our observation} \\
    &\leq \alpha(\Th^i_k)\gamma_{S,k}(\bias(\Psi)) \tag{monotonicity of $\gamma_{S,k}$} \\
    &\leq \beta_{S,k}(\bias(\Psi)) \tag{\cref{eqn:alpha-optimize-over-mu}} \\
    &\leq \val_\Psi \tag{\cref{lemma:thresh-value-lb}} ,
\end{align*} as desired.
\end{proof}

\end{document}